\newtheorem{theorem}{Theorem}
\newtheorem{lemma}{Lemma}
\DeclareMathOperator*{\argmax}{argmax}
\def\hN{\hat{N}}
\def\hS{\overline{S}}
\def\hp{\overline{p}}
\title{ 
Equilibrium and Anarchy in Contagion Games
}
\author{Yi Zhang \footnote{PhD student, Illinois Tech} and Sanjiv Kapoor\footnote{Authors  contributed equally}}
\begin{document}
\date{December 15th, 2024}
\maketitle
\begin{abstract}
In this paper we consider non-atomic games in populations that are provided with a choice of preventive policies to act against a contagion spreading amongst interacting populations, be it biological organisms or connected computing devices. The spreading model of the contagion is the standard SIR model.
Each participant of the population has a choice from amongst a set of precautionary policies  with each policy presenting a payoff or utility, which we assume is the same within each group,
the risk being the possibility of infection. The policy groups interact with each other.
We also define a network model to model interactions between different population sets. The population sets reside at nodes of the network and follow policies available at that node. We define game-theoretic models  and study  the inefficiency of allowing for individual decision making, as opposed to centralized control. We study the computational aspects as well. 

We show that
computing Nash equilibrium for interacting policy groups is in general PPAD-hard.   For the case of policy groups where the interaction is uniform for each group, i.e. each group's impact is dependent on the characteristics of that group's policy, we present a polynomial time algorithm for computing Nash equilibrium. This requires us to compute the size of the susceptible set at the endemic state and investigating the computation complexity of computing endemic equilibrium is of importance. We present a convex program to compute the endemic equilibrium for $n$ interacting policies in polynomial time. This leads us to investigate equilibrium in  the network model and we present an algorithm to compute Nash equilibrium policies at each node of the network assuming uniform interaction.

We also analyze the price of anarchy considering the multiple scenarios.
For an interacting population model, we determine that the price of anarchy is  a constant $e^{R_0}$ where $R_0$ is the reproduction number of the contagion, a parameter that reflects the growth rate of the virus. As an example, $R_0$ for the original COVID-19 virus was estimated to be between 1.4 and 2.4 by WHO at the start of the pandemic. The relationship of PoA to the contagion's reproduction number is surprising.
 Network models that capture the interaction of distinct population sets (e.g. distinct countries) have a  PoA that again grows exponentially with $R_0$ with multiplicative factors that reflect the interaction between population groups across nodes in the network.
 

\end{abstract}

\setcounter{tocdepth}{2} 


\section{Introduction}
In this paper, we consider contagion games from an economic viewpoint, where the contagion spread is modeled by an SIR process. The population follows a preventative policy, from amongst a set of control policies suggested by an administrator (e.g. government). Each policy impacts the economic well-being or pay-off of the group that adopts the policy. Examples of policies during infectious disease pandemics, like the COVID-19 pandemic, include pharmaceutical and non-pharmaceutical policies (NPI)  ranging from vaccination in the former to masking, stay-at-home, and hand washing, etc. in the latter. While policies were sometimes mandated, more often they were recommended. As such, policy adoption was guided by the self-interest of each constituent of the population.
In general, we consider $n$ different individual policies or combinations thereof. The end result of most contagions is the achievement of endemic equilibrium when
there is no more growth of the virus. This typically happens when the recovery rate is more than the rate of infections, thus leading to a steady state, reducing the infectious population to zero. A similar situation occurs in other contexts, including computer viruses.

One standard model for contagion processes is the SIR compartment model\cite{kermack1927contribution} that utilizes the susceptible (S), infectious (I), and removed (R)(recovered or dead) set of populations. 
With each policy adopted, there is an associated pay-off along with a risk of infections that is indicated by the transmission factor of the contagion when following the policy.
This defines a non-atomic game, termed the contagion game, where the population elects a particular policy based on the utility each individual gains from following the policy.
The utility function is defined primarily as an increasing function of the population size $S$ at the final state of the contagion spread, i.e. the final size of the population that evades the contagion upon following a specific policy, based on the benefit that the individual receives from following the policy. 
It also incorporates a death penalty.
In this paper, we consider the computation of Nash equilibrium and the price of anarchy that is a consequence of selfish policy choices. We consider a social utility that is the sum of the pay-offs to each policy group. This requires us to bound the size of the susceptible group at the endemic state. 

In this paper, we consider (i) a model where the population interacts with each other, modeled by a complete graph of interaction over the policy groups (ii) a model of sets of interacting populations over a graph or network where each node represents a population. e.g. a country or a company, with a set of preventive policies at each node, and edges represent the interaction of two different population sets. In this paper, 
we show that computing the Nash equilibrium in contagion games with $n$ general policies is PPAD-hard, even when restricted to one node in the network. Nash equilibrium in this game is considerably difficult to characterize since the interactions 
between policy groups are arbitrary.
 We thus consider two scenarios, one where interactions between nodes and groups following different policies are restricted to be uniform, i.e. dependent on the node and group only, and the simpler case where the population following each policy does not interact with the other populations.
 We show that determining 
 Nash equilibrium in these models is of polynomial complexity. The price of anarchy in the two models is shown to be exponentially related to $R_0$, the reproduction number associated with the virus.

Of independent interest is the problem of computing final sizes\cite{magal2016final} of the susceptible population in an SIR model. 
This is a problem arising in multiple fields including biology and mathematical physics. 
While there is extensive research on SIR models, analytic solutions to these models are not known. 
We present algorithms to compute the size of the susceptible population at endemic equilibrium in the two models we consider.
For the separable policy model, this is the computation of a single variable fixed point of a function. For the interacting policy models, including the network model, the computation of endemic equilibrium involves computations of fixed points of multi-variate functions, due to interaction between subsets of population that follow different policies. 
For this case, we present a convex program to compute the endemic equilibrium for $n$ interacting policies

Game-theoretic formulations have been used in the study of policies that attempt to contain the spread of contagion. 
Multiple applications of game theory in the context of contagion, malicious players  or pandemic infection spread may be found discussed in \cite{huang2022game,chang2020game,funk2010modelling,EC8-Innoculation,EC07MosheKPapad}. 
Simulation-based analysis of selfish behavior in adopting policies have been investigated in \cite{bauch2003group} and a simulated analysis of the price of anarchy may be found in \cite{nicolaides2013price}, the results dealing with mobility-related contagion spread. 
In the paper\cite{nicolaides2013price} the authors study transportation-related spread of contagions through selfish routing strategies as contrasted with policy-suggested routes. Using simulation they show that selfish behavior leads to increase in the total population infected. Malicious players in the context of congestion games have been studied in \cite{EC07MosheKPapad}. A vaccination game in a network setting has been considered in \cite{ASPNES20061077} where each node adopts a strategy to vaccinate or not,  an infection process that starts at a node and spreads across all the subgraph of all unprotected nodes connected to the start node. Defining the social welfare to be the size of the infected set, the {\em price of anarchy} has been shown to be infinity.
 The model considers utilities dependent on the node being infected or not and the
 cost of vaccination.
Distinct from the above, in this paper we consider a non-atomic game on a network of populations where the spread of infection is modeled by a SIR process. We may note that the SIR process has also been used for modeling the spread of false information, as surveyed in  \cite{SIR-False-10.1145/3563388}.

 Vaccination games in the non-atomic setting have been considered in  \cite{bauch2004vaccination}, where the payoffs utilized are morbidity risks from vaccination and infection, the strategies being either to adopt the recommended policy of vaccinations or to ignore the advice and risk the higher probability of infection. The authors
 compute the Nash equilibrium in this two-strategy game and conclude that it is impossible to eradicate an infectious disease through voluntary vaccinations due to the selfish interests of the population. 
 In another investigation of vaccination games\cite{bauch2003group} the strategies of the game are one of (i)  vaccination or (ii) ``vaccination upon infection". The results comparing group interest and selfish behavior again indicate differences and reduced uptake of vaccines under voluntary programs.

Contagion games have been studied in the context of influence in economic models of competition.
The results in \cite{goyal2012competitive,draief2014new} discuss a game-theoretic formulation of competition in a social network, with firms trying to gain consumers, seeding their influence at nodes, with monetary incentives. In this case, the firms would like to
have maximum influence over the nodes. 
This approach has strategies that depend on the dynamics of adoption and budgets of the firms along with the structure of the social network and the authors discuss the price of anarchy in this context. 
Additional work on such games may be found in \cite{easley2010networks,kumar2019choosing,tsai2012security}.

Our model may also apply in this context when considering influence networks where the influence spread is guided by a dynamical system of equations specified by the SIR process when the utility is a function of the non-infected population but susceptible population $S$.
\section{Models and Results: The Contagion Game}

The underlying model for contagion spread is the SIR model, where $S$ is the set of susceptible population, $I$ the set of infectious population and $R$ the removed set, either through recovery or death.
We assume that to prevent the spread of contagion, governments or administrators specify $n$ policies in the set ${\mathcal P}= \{ P_1, P_2, \ldots P_n\}$. 
Example policies in the context of the COVID-19 pandemic could be the adoption of masks, shelter-at-home etc. 
We define a non-atomic game where an infinitesimal-sized player decides to follow one of the $n$ policies. Normalizing the population to be of unit size, let $\phi_i\geq 0$ be the fraction of the population that follows policy $P_i$, with the total population $\sum_i \phi_i=1$. 
The SIR process is identified by a set of differential equations that govern the movement of population between compartments representing $S$, $I$ and $R$ and ends at time $T=\infty$, when endemic equilibrium is achieved. The infectious group $I_i(\infty)$ drops to 0 and the size of the susceptible population converges to $S_i(\infty)$. 
In this paper, we refer to $S_i(\infty)$ as the {\bf final size}, which is important to the group's payoff and often denote it by $S_i$ for simplicity in later analysis.
Each group's total population is closed. With $I_i(\infty)=0$, we get 
$S_i(\infty) + R_i(\infty) = \phi_i$.
\subsection{Models}
We describe our models starting with the simple models of interacting policy groups, where the interaction between populations following distinct policies is complete. We then consider a network model where each node represents a distinct population following a set of policies specific to the node and the susceptible population at a node interacts with other nodes as well as amongst itself.\\ \\
\noindent
{\bf{Model A: Interacting Policy Sets:}\label{SIR_model1}}\\

\begin{figure}
    \centering
    \begin{tikzpicture}[->,>=stealth',shorten >=2pt, line width=1pt,style ={minimum size=10mm}]
    \node[rectangle, draw](S1) at (0.5,0){$S_1$};
    \node[rectangle, draw](I1) at (4,0){$I_1$};
    \node[rectangle, draw](R1) at (6,0){$R_1$};
    \node[rectangle, draw](S2) at (0.5,-1.5){$S_2$};
    \node[rectangle, draw](I2) at (4,-1.5){$I_2$};
    \node[rectangle, draw](R2) at (6,-1.5){$R_2$};
    \node[rectangle, draw](S3) at (0.5,-3){$S_3$};
    \node[rectangle, draw](I3) at (4,-3){$I_3$};
    \node[rectangle, draw](R3) at (6,-3){$R_3$};

    \draw[->,transform canvas={xshift=0,yshift=0},color=red] (I1) -- (S1);
    \draw[->,transform canvas={xshift=0,yshift=0},color=red] (I1) -- (S2);
    \draw[->,transform canvas={xshift=0,yshift=0},color=red] (I1) -- (S3);
    \draw[->] (I1) -- (R1);
    \draw[->,transform canvas={xshift=0,yshift=0},color=red] (I2) -- (S1);
    \draw[->,transform canvas={xshift=0,yshift=0},color=red] (I2) -- (S2);
    \draw[->,transform canvas={xshift=0,yshift=0},color=red] (I2) -- (S3);
    \draw[->] (I2) -- (R2);
    \draw[->,transform canvas={xshift=0,yshift=0},color=red] (I3) -- (S1);
    \draw[->,transform canvas={xshift=0,yshift=0},color=red] (I3) -- (S2);
    \draw[->,transform canvas={xshift=0,yshift=0},color=red] (I3) -- (S3);
    \draw[->,transform canvas={xshift=0,yshift=3}] (S1) -- (I1);
    \draw[->,transform canvas={xshift=0,yshift=3}] (S2) -- (I2);
    \draw[->,transform canvas={xshift=0,yshift=3}] (S3) -- (I3);
    \draw[->] (I3) -- (R3);
    \end{tikzpicture}
    \caption{Transition diagram of the interacting model ({\bf Model A}). The red arrows indicate interactions across different groups.}
    \label{fig:interacting}
\end{figure}
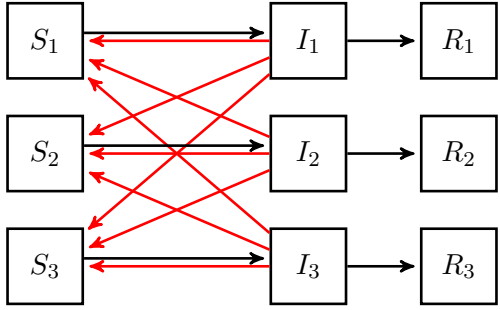\noindent
We first consider the standard interacting model for contagion spread, later modified by uniform interaction policies and separable policies. 

\begin{itemize}
    \item 
{\bf{A1. General Interaction Policy Sets:}}\\
Let $\gamma>0$ be the rate of removal(recovery) of the infectious group. Let $\beta$ be an $n$ by $n$ non-negative matrix of transmissive parameters of infection.
At any time $t$, the dynamics of each compartment of policy group $i$ are defined as follows.
\[
\beta=
\begin{bmatrix}
    \beta_{1,1} & \cdots    & \beta_{1,n}   \\
    \vdots      & \ddots    & \vdots        \\
    \beta_{n,1} & \cdots    & \beta_{n,n}
\end{bmatrix},\quad\quad
\begin{array}{l}
    \frac{\mathrm{d}S_i(t)}{\mathrm{d}t}=
    -S_i(t)\cdot\sum_{j=1}^n \beta_{i,j}I_j(t),\\
    \\
    \frac{\mathrm{d}I_i(t)}{\mathrm{d}t}=
    S_i(t)\cdot\sum_{j=1}^n \beta_{i,j}I_j(t)
    - \gamma I_i(t),\\
    \\
    \frac{\mathrm{d}R_i(t)}{\mathrm{d}t}= \gamma I_i(t),
\end{array}
\]
The initial conditions when $t=0$ are $  S_i(0)=(1-\epsilon)\phi_i$,
$I_i(0)=\epsilon\phi_i$,
$R_i(0)=0 $, with $S_i(0) + I_i(0) + R_i(0) = \phi_i$.
$\epsilon$ is the initial fraction of the infectious population and is understood to be very small.
Group $i$'s susceptible population $S_i$ will interact with infectious $I_j$ from all group $j$ resulting  in infectious $I_i$, eventually leading  to a removed set $R_i$ 
(which represents either recovered or dead ). With a substantial initial size of susceptible and infectious populations, the infection numbers will typically peak and subsequently reduce.


\item
{\bf{A2. Uniform Interaction Policy Sets:}}\\
In this model, instead of using an arbitrary $\beta$ matrix, for each group pair $i,j$ we define $\beta_{i,j}=\kappa_i \kappa_j \beta_{0}$, with $1=\kappa_1>\kappa_2>\cdots>\kappa_n>0$.
The interaction between each pair of groups can be decomposed into a product of groups.
We further require the largest reproduction number $R_0=\frac{\beta_0}{\gamma}\geq 1$, for otherwise even if every player joins group 1 which has the highest $\beta$'s, the infection will end immediately.\\

\end{itemize}
{\bf{Model B: Network Interaction Policy Sets:}}\\
\begin{figure}
    \centering
    \includegraphics[width = 0.8\textwidth]{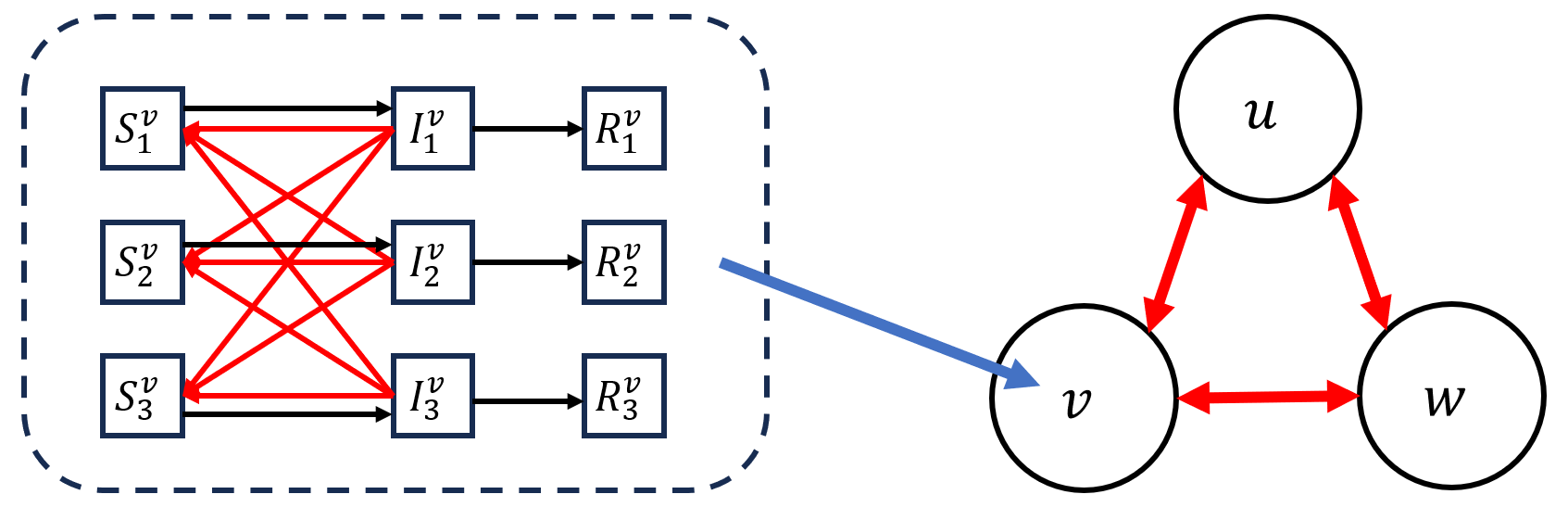}
    \caption{Network interaction model ({\bf Model B}). The red arrows indicate interactions across different groups and different nodes.}
    \label{fig:enter-label}
\end{figure}
\noindent In this model we extend the interacting policy sets to a network over $m$ nodes. Each node contains $n$ policy groups. For simplicity we assume a common set of policies, ${\mathcal{P}=\{ P_1, P_2, \ldots P_n\}}$, over the network. In the entire network, a group is identified by the pair $(v,i)$, where $v$ is the node it is located in and $i$ is its group number inside $v$. The initial population of each group $(v,i)$ is denoted by $\phi_i^v$. Each node's initial population sums up to 1, i.e., $\sum_i \phi_i^v=1,\forall v$. The virus transmission parameter between any group pair $(v,i),(u,j)$ is defined as $\beta_{i,j}^{v,u} = \alpha_{v,u}\beta_{i,j}$, where $\alpha_{v,u}$ defines the interaction between node $v$ and $u$.
Each node $v$ has a strategy profile $\phi^v = [\phi_1^v,\cdots,\phi_n^v]^T\geq 0$ with $\sum_i \phi_i^v = 1$ that represents the population following policy $P_i$.  
Given $\phi_i^v$, the population of group $i$ at node $v$, we define the initial conditions to be $S_i^v(0) = (1-\epsilon) \phi_i^v$, $I_i^v(0) = \epsilon \phi_i^v$, $R_i^v(0)=0$.
The SIR process is as follows
\[
\begin{array}{l}
    \frac{\mathrm{d}S_i^v(t)}{\mathrm{d}t}=
    -S_i^v(t)\cdot\sum_{(u,j)} \beta_{i,j}^{v,u}I_j^u(t),\\
    \\
    \frac{\mathrm{d}I_i^v(t)}{\mathrm{d}t}=
    S_i^v(t)\cdot\sum_{(u,j)} \beta_{i,j}^{v,u}I_j^u(t)
    - \gamma I_i^v(t),\\
    \\
    \frac{\mathrm{d}R_i^v(t)}{\mathrm{d}t}= \gamma I_i^v(t),
\end{array}
\]

Recall that $\beta_{i,j}^{v,u}$ is $ \alpha_{v,u}\beta_{i,j}$ for all $(v,i)$ and $(u,j)$. We consider different versions of the game depending on 
$\alpha$ and $\beta$.
\begin{itemize}
    \item{\bf B1. Arbitrary $\beta$:}
    This is the most general model.
    In this case, regardless of $\alpha$, {\bf Model A1}  is a special case of the network version where there is only 1 node in the network. We will show that Nash equilibrium in this model is hard to compute.

    \item{\bf B2. Uniform Node Interaction and Arbitrary Network Interaction:}

    In this case, $\beta$ is uniform , i.e.  $\beta_{i,j} = \kappa_i \kappa_j \beta_0$, for all $i,j$ pair. The interaction $\alpha_{v,U}$ between node $v$ and $u$ is still arbitrary. The hardness of this case remains open.

    \item{\bf B3. Uniform Network Interaction Model:}

    In this model we assume $\beta_{i,j} = \kappa_i \kappa_j \beta_0$, for all $i,j$ pair, with $1 =\kappa_1 >\kappa_2 >\cdots >\kappa_n$. 
    We also assume $\alpha_{v,u} = \alpha_v \alpha_u$, for all $v,u$ node pair, where $\forall u, 0 \leq \alpha_u \leq 1$. 
    For convenience, we denote $\overline{\kappa}_i^v = \alpha_v \kappa_i$, for all group $(v,i)$.
    The {\em interaction factor} of any node $v$ with other nodes is defined as $\omega = \sum_{u} \alpha_u$. It represents the interaction of an element of a population with other populations in the network and would typically  be a constant.
\end{itemize}


\noindent
{\bf{Game Strategies  and Utilities:}}\\
In the non-atomic game for {\bf Model A}, each infinitesimal player's individual strategy set is assumed to be $\mathcal{P}$, representing the policies. 

We next define the utility of adopting policy $P_i \in {\mathcal P}$. The utility functions are assumed to belong to the class ${\mathcal U}$  defined over $\mathcal{R}$
and are real-valued, concave, increasing and invertible. 
We additionally assume that the inverses are proportional, i.e. for every function  $h_i $ belonging to the class ${\mathcal U}$ ,
there exists a constants $c_i$ with the following property, $c_ih_i^{-1}(x) = c_jh_j^{-1}(x)$. A class of functions that satisfy this property can be constructed based on
homogeneous functions of degree $d$, $0\leq d\leq 1$ and payoff vectors as follows: 
Let $p=[p_1,\cdots, p_n]^T$ be a non-negative vector, representing payments. 
At time $t$, denote $\overline{S}_i(t)= \frac{S_i(t)}{\phi_i}$.  We will let $U_i(\hS_i(t))= h_i(\hS_i(t))=$
 $=p_i\cdot h(\overline{S}_i(t))$, where $h()$ is a homogenous function of degree $d$, 
representing the benefit per unit of daily work when a person is not infected.
The factor $p_i$ is interpreted as payment for  the work provided by the average population. As an example, the population working from home will have a payoff which is different from the population in a factory or office environment. For simplicity we will use homogenous function based utility for the remainder of the paper.

When endemic equilibrium is reached, the expected daily individual utility of group $i$ converges to $U_i= h_i(\overline{S}(\infty))$.
Group $i$'s total group utility is the endemic individual utility multiplied by the group size, i.e. $UG_i=\phi_i U_i$.
Each individual player evaluates the expected individual utility for different groups and decides which group to join, forming all the group sizes, with $\sum_{i=1}^n \phi_i=1$. 
Non-atomic Nash equilibrium $\mathcal{N}$, with $\phi^\mathcal{N}= [\phi^\mathcal{N}_1, \cdots, \phi^\mathcal{N}_n]^T$, is achieved when the player chooses to initially join group $i$ only if its individual utility is the highest, i.e. $\forall i|\phi^\mathcal{N}_i>0
\implies U_i \geq U_j, \forall j$.
We call a group $i$ participating in the Nash equilibrium if $\phi_i^\mathcal{N} > 0$.
If there are multiple groups participating in the Nash equilibrium, they must have the same highest individual utility.
We assume the individual utility at Nash equilibrium is always positive.\\

\noindent
{\bf{Price of Anarchy(POA):}}\\
Each game instance, $G$, is defined by payoff function $h$ and the vector $p$.
In each instance, the social welfare is defined to be the summation of all groups' group utility $\sum_{i=1}^n UG_i$.
Each group utility $UG_i$ is a function of all the group size $\phi=[\phi_1,\cdots,\phi_n]^T$. 
Thus the social welfare, $\sum_i UG_i(\phi)$, is a function of $\phi$. The social optimum is
$OPT = \max_\phi \sum_i UG_i(\phi)$.
Denote by $\mathcal{NE}(G)$ to be set of all Nash equilibria of the game $G$.
We define the price of anarchy($POA$) as follows.
\[
POA = \max_{p,h}
\frac{OPT}{\min_{ \mathcal{N} \in \mathcal{NE}(G)} \sum_i UG_i(\phi^\mathcal{N})},
\]
which is the highest ratio of social optimum versus the lowest social welfare of Nash equilibrium among any game instance.

The non-atomic games represented by {\bf Model B} has infinitesimal players at each node with strategies and utilities, similar to {\bf Model A}. 
We let $\phi^v = (\phi^v_1 , \phi^v_2 \ldots \phi^v_n)$ be the strategy profile at node $v$ with $\sum_{i=1}^n \phi_i^v=1$.
We let the utility functions belong to $\mathcal{U}$, the set of functions defined above that
are invertible, concave and increasing.
For each node $v$, let $h^v \in {\mathcal U}$.
Let $p^v=[p_1^v,\cdots,p_n^v]^T$ be a non-negative vector.
For each group $(v,i)$, let $U_i^v=h_i^v(\hS_i^v(\infty))=p_i^v\cdot h^v(\overline{S}_i^v(\infty))$ be its individual utility, and $UG_i^v = \phi_i^v U_i^v$ be its group utility. Note that each $U_i^v$ and $UG_i^v$ is also a function of $\phi$, where $\phi$ represents the strategies of all groups at all nodes. The social welfare function used in this model is $\sum_{v}\sum_i UG_i^v(\phi)$.

\subsection*{Results and Techniques:}
\begin{itemize}
    \item 
    We show that in {\bf Model A1}, while Nash equilibrium exists ({\bf Theorem \ref{thm:NE_existence}}), computing the Nash equilibrium for contagion games with general interacting policy sets is PPAD-hard ({\bf Theorem \ref{thm:PPAD}}). This is not surprising but nevertheless needs to be proved. A similar result holds for {\bf Model B1} with an arbitrary form of $\beta$.
    \item
    For contagion games with $n$ uniform interaction policy sets ({\bf Model A2}) we provide a convex program to determine the final size $S_i(\infty)$ ({\bf Theorem \ref{thm:decomp_convex_program}}). Determining the final size is key to the Nash computations. 
    \item 
    We provide an algorithm to compute the Nash equilibrium for the mode with uniform interaction policy sets ({\bf Model A2}) with complexity $O(n^2(n+\log( 1/\delta))$ where $\delta$ has a polynomial bound in terms of the input size  ({\bf Theorem \ref{thm:decomp_Nash_algo}}). The method utilizes a proof that the computation of Nash equilibrium in a game with $n$ policies can be determined by considering at most 2 policies.

    \item 
    We provide an algorithm to compute the Nash equilibrium for the network interaction model with uniform interaction policy sets ({\bf Model B3}) with polynomial complexity ({\bf Theorem~\ref{thm:net_Nash_algo}}). In the network model, the space of solutions is exponential in the network size. We reduce this space by establishing a dominance relation among utilities modeled by an acyclic tournament graph, the source node of which provides a potential solution. Polynomial number of graphs are used to determine the Nash solution. We have not found previous usage of this technique.

    \item
    We show that the upper bound of price of anarchy(POA) in the game with uniform interaction policy sets ({\bf Model A2}) is bounded above by $e^{R_0}$ ({\bf Theorem~\ref{thm:decomp_POA}}) and
    in the uniform network interaction ({\bf Model B3}) is bounded by  $e^{\alpha_{max} \overline{R_0}}$ where $\overline{R}_0 = \omega R_0 $ and $\alpha_{max}=\max_u \alpha_u$.
    This is bounded above by $e^{m R_0}   $ for the worst-case value of the interaction factor.
    ({\bf Theorem~\ref{thm:network_POA}}).
    We utilize a monotone property of the final size w.r.t. increase in group size. Simulations show that these  results are not tight and future work could improve these bounds.
     
\end{itemize}
All the algorithms for computing equilibrium determine approximate solutions. Due to page limits, some of the proofs are contained in appendix {\bf \ref{appendix:proofs}}.
\section{Hardness of Nash Equilibrium in Interacting Policy Sets}\label{sec:PPAD}
\subsection{Existence of Nash Equilibrium}\label{sec:in-PPAD}
We first show that the equilibrium always exists by the convex compact set version of Brouwer's fixed-point theorem. \\
 {\bf Brouwer's fixed-point theorem}:  
    {\em Every continuous function from a nonempty convex compact subset $\mathcal{K}$ of a Euclidean space to $\mathcal{K}$ itself has a fixed point.}
\begin{theorem}\label{thm:NE_existence}
    Nash equilibrium always exists in every contagion game.
\end{theorem}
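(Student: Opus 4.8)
The plan is to cast the equilibrium condition as a fixed point of a continuous self-map of the strategy simplex and invoke the stated Brouwer theorem. First I would fix the domain: the admissible strategy profiles form the standard simplex $\Delta=\{\phi\in\mathbb{R}^n:\phi_i\ge 0,\ \sum_{i=1}^n\phi_i=1\}$, which is nonempty, convex and compact, hence a legitimate $\mathcal{K}$. The key preliminary step is to establish that each individual utility $U_i(\phi)=p_i\,h(\overline{S}_i(\infty))$ is a continuous function of $\phi$ on $\Delta$. This reduces to continuity of the final-size map $\phi\mapsto\overline{S}_i(\infty)$: the normalized susceptible fraction obeys $\tfrac{\mathrm{d}\overline{S}_i}{\mathrm{d}t}=-\overline{S}_i\sum_j\beta_{i,j}I_j(t)$, so $\overline{S}_i(t)$ is monotonically decreasing and bounded in $[0,1]$, giving a well-defined limit, and the normalization removes the degeneracy at $\phi_i=0$ since this equation does not reference $\phi_i$ directly. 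Continuous dependence of the SIR trajectories on their initial data (which are linear in $\phi$) then yields continuity of $\overline{S}_i(\infty)$, and since $h$ is continuous this passes to $U_i$.

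Next I would build the (Nash-style) better-response map. Writing $\bar U(\phi)=\sum_k\phi_k U_k(\phi)$ for the social welfare and defining the gain $g_i(\phi)=\max\{0,\,U_i(\phi)-\bar U(\phi)\}$, set $T:\Delta\to\Delta$ by $T_i(\phi)=\dfrac{\phi_i+g_i(\phi)}{1+\sum_{j}g_j(\phi)}$. One checks that the denominator is $\ge 1>0$, that each $T_i\ge 0$, and that $\sum_i T_i=\dfrac{1+\sum_i g_i}{1+\sum_j g_j}=1$, so $T$ indeed maps $\Delta$ into itself; continuity of $T$ follows from continuity of the $U_i$ and of $\max\{0,\cdot\}$. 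Brouwer's theorem then furnishes a fixed point $\phi^\ast=T(\phi^\ast)$.

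Finally I would verify that any such fixed point is a Nash equilibrium. The identity $T_i(\phi^\ast)=\phi^\ast_i$ rearranges to $\phi^\ast_i\sum_j g_j(\phi^\ast)=g_i(\phi^\ast)$. Since $\bar U(\phi^\ast)$ is a convex combination of the $U_k(\phi^\ast)$, some $k$ in the support of $\phi^\ast$ satisfies $U_k\le\bar U$, whence $g_k=0$; the identity then forces $\phi^\ast_k\sum_j g_j=0$, and as $\phi^\ast_k>0$ this gives $\sum_j g_j=0$, i.e.\ $g_j=0$ for every $j$ and thus $U_j\le\bar U$ for all $j$. Because $\bar U$ is the weighted average of the support utilities, equality must hold on the support, so every participating policy attains $U_i=\bar U=\max_j U_j$; this is exactly the stated condition $\phi^\mathcal{N}_i>0\implies U_i\ge U_j\ \forall j$.

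The main obstacle is the continuity of the final-size map $\phi\mapsto\overline{S}_i(\infty)$, in particular its behaviour on the boundary of $\Delta$ where some groups are empty; the remainder is the routine fixed-point verification. I expect the technical care to concentrate on combining continuous dependence of the SIR solution on its initial data with the monotone convergence of $S_i(t)$ to its limit, to guarantee that the limit itself varies continuously in $\phi$.
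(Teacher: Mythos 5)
Your proposal is correct and sits in the same framework as the paper's proof --- construct a continuous self-map of the strategy simplex whose fixed points are exactly the Nash equilibria, then invoke Brouwer --- but the map you build is genuinely different, and the difference matters. The paper compares each $U_i(\phi)$ to the maximum $U_{max}(\phi)$, clamps reductions for the sub-maximal groups, and redistributes the total reduction $\Delta$ evenly over the argmax set $U^+$; you instead use the classical Nash map, measuring gains $g_i(\phi)=\max\{0,\,U_i(\phi)-\bar U(\phi)\}$ against the \emph{average} utility $\bar U(\phi)=\sum_k \phi_k U_k(\phi)$ and renormalizing. Your choice buys two things. First, the fixed-point verification is a clean algebraic identity ($\phi^*_i\sum_j g_j = g_i$; since some $k$ in the support has $U_k\le \bar U$, hence $g_k=0$, and $\phi^*_k>0$, this forces $\sum_j g_j=0$, so all utilities are at most $\bar U$ and the support utilities equal the maximum), whereas the paper's verification is only sketched. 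Second, and more substantively, your map is manifestly continuous: the paper's even redistribution divides by $|U^+|$, a cardinality that jumps when ties with the maximum appear or disappear (its update formulas also contain sign/increment slips as written), so continuity of its map would itself need an argument that the paper does not supply; your renormalized construction contains no set-valued ingredient at all. Finally, you are more careful than the paper about the one hypothesis both proofs rely on, continuity of $\phi \mapsto \overline{S}_i(\infty)$ and hence of $U_i$, which the paper silently assumes via ``composition of continuous functions''; your sketch (monotone decrease of $\overline{S}_i(t)$, the normalized initial condition $1-\epsilon$ removing the boundary degeneracy at $\phi_i=0$, continuous dependence on finite horizons) is the right skeleton, though as you note one still needs uniformity in $t$ when passing to the limit --- alternatively, continuity can be extracted from the final-size relation $S_i=(1-\epsilon)\phi_i e^{X_i}$ together with uniqueness of its solution, which the paper's convex-programming section effectively establishes. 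Net: same theorem, same Brouwer engine, but your map is the standard and more robust one.
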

\begin{proof}
Let $\mathcal{K}=\{\phi\in \mathbb{R}_+^n| \sum_i\phi_i=1\}$. 
$\mathcal{K}$ is convex and compact. 
Let $U_i(\phi)$ be the individual utility of group $i$ evaluated at point $\phi$. 
We now describe a mapping function from $\phi$ to $\hat{\phi}\in \mathcal{K}$, i.e. $f(\phi) = \hat{\phi}$.
Define $U_{max}(\phi)=\max_i U_i(\phi)$.
Define set $U^-=\{i|U_i(\phi)<U_{max}\}$ and set $U^+=\{i|U_i(\phi)=U_{max}\}$.
Let $0<\alpha<1$ be a small constant. 
For all $i\in U^-$, let $\hat{\phi}_i = \max(\phi_i - \alpha(U_i(\phi) - U_{max}), 0)$, i.e. reduce $\phi_i$ if group $i$'s individual utility is not max.
Let $\Delta = \sum_{i\in U^-} (\phi_i - \hat{\phi}_i)$, the total reduction in $\phi$ for groups in $U^-$.
For all $i\in U^+$, let $\hat{\phi}_i = \frac{\Delta}{|U^+|}$, evenly distribute the reduction into groups in $U^+$.

The intuition is that when the current point is not a fixed point then for all $\phi_i>0$ if $U_i(\phi)<U_{max}(\phi)$, then $\hat{\phi}_i$ is reduced.

The composition of continuous functions is also continuous. The $\max()$ function is continuous, thus $U_{max}$, $\Delta$ are continuous.
Therefore $f(\phi)$ as a composition of continuous functions, is also a continuous function, mapping from $\mathcal{K}$ back to $\mathcal{K}$. By \textbf{Brouwer's fixed-point theorem}, $f$ has a fixed point $\phi^*$, such that $f(\phi^*) = \phi^*$.
At the fixed point $\phi^*$, we have
$
\forall i|\phi^*_i>0 \implies
U_i(\phi^*) = U_{max}(\phi^*)
$.
Therefore it is a Nash equilibrium at the fixed point $\phi^*$. The Nash equilibrium always exists.
\end{proof}

\subsection{PPAD-hardness of the Contagion Game}
We now show that computing the Nash equilibrium in Contagion games with arbitrary interacting policy sets (i.e. with arbitrary $\beta$ matrix ) is PPAD-hard. 
We start with the problem of computing the 2-player Nash equilibrium, termed here as {\bf 2-NASH} which has been shown to be PPAD-complete\cite{chen2009settling}. 
  {\bf 2-NASH} has a polynomial reduction to  {\bf SYMMETRIC NASH}\cite{nisan2007introduction},  which is to find a symmetric Nash equilibrium when the two players have the same strategy sets and their utilities are the same when the player strategies are switched. 
Denote by {\bf CONTAGION NASH} 
the problem of computing a Nash equilibrium in the contagion game with interacting policy sets ({\bf Model A1}).
We reduce 2-player {\bf SYMMETRIC NASH} to {\bf CONTAGION NASH}, showing that it is PPAD-hard.

A 2-player symmetric game has an $n\times n$ payoff matrix $A$ for both players. 
Both players have the same strategy set of size $n$. 
$A_{i,j}$ is the payoff of player 1(2, respectively) when player 1(2, respectively) plays strategy $i$ and the other player plays strategy $j$. 
A Nash equilibrium is symmetric when both players have the same mixed strategy $\sigma^* = [\sigma^*_1, \cdots, \sigma^*_n]^T$. 
We first observe that we can transform any symmetric game with payoff matrix $A$ into a symmetric game with all negative payoffs $\overline{A}$. Let $C=\max_{i,j} A_{i,j}$ (assume for simplicity $C>0$). Define $\overline{A} = A - 2C$.
A Nash equilibrium in $A$ is clearly a Nash equilibrium in $\overline{A}$.
Let $U^*$ be the utility of the symmetric equilibrium $\sigma^*$ in the game defined by $A$. 
Define $\overline{U} = U^* -2C<0$. $\sigma^*$ is also a symmetric equilibrium for the payoff matrix $\overline{A}$ satisfying 
\begin{equation}\label{eq:condition1}
\begin{cases}
    \sum_{j\in NE} \overline{A}_{i,j} \sigma^*_j = \overline{U},\quad& \forall i \in NE\\
    \sum_{j\in NE} \overline{A}_{i,j} \sigma^*_j \leq \overline{U},\quad& \forall i \notin NE
\end{cases}\quad
\text{where }NE=\{i|\sigma^*_i>0\}
\end{equation}

We now discuss the properties of the final size $S_i(\infty)$ at equilibrium $\phi^*$ in the interacting policy sets. 
For simplicity we denote the final size of group $i$ by $S_i$. 
Denote $\overline{S}_i = \frac{S_i}{\phi_i}$. 
For all group $i$, denote 
$X_i=\sum_{j=1}^n \frac{\beta_{i,j}}{\gamma}(S_j-\phi^*_j)
= \sum_{j=1}^n \frac{\beta_{i,j}}{\gamma}\phi^*_j(\overline{S}_j - 1)$.
Applying Equation {\bf$(11)$} from \cite{magal2016final}, 
the final size satisfies
$S_i = S_i(0)\cdot e^{X_i} = (1-\epsilon)\phi^*_i \cdot e^{X_i}$.
And $\overline{S}_i = (1-\epsilon)e^{X_i}$. Since the final size $0\leq S_i<S_i(0)=(1-\epsilon)\phi^*_i$, we have $X_i<0$ and $0\leq \overline{S}_i<(1-\epsilon)$. 
Define set $\widetilde{NE}=\{i|\phi^*_i>0\}$. 
Since for all $i \notin \widetilde{NE}$, $\phi^*_i=0$, we have
$X_i = \sum_{j\in \widetilde{NE}} \frac{\beta_{i,j}}{\gamma}\phi^*_j(\overline{S}_j - 1)$.
Recall that the individual utility $U_i=p_i\cdot h(\overline{S}_i)$. We choose $h$ to be an identity function, i.e. $U_i=p_i\cdot\overline{S}_i$, where $p_i\geq0$.
Suppose the equilibrium individual utility is $N >0 $, we have the following
\begin{equation}\label{eq:condition2}
\begin{cases}
    p_i\cdot \overline{S}_i = N,\quad &\forall i \in \widetilde{NE}\\
    p_i\cdot \overline{S}_i \leq N,\quad &\forall i \notin \widetilde{NE}
\end{cases}\quad
\text{where }\widetilde{NE}=\{i|\phi^*_i>0\}
\end{equation}
Since for all $i\in \widetilde{NE}$, $\overline{S}_i = \frac{N}{p_i}$, we get
$X_i = \sum_{j\in \widetilde{NE}} \frac{\beta_{i,j}}{\gamma}(\frac{N}{p_j} - 1)\phi^*_j$.
Define function
$f(N) = \frac{\ln\frac{N}{1-\epsilon}}{N-1}, 0 < N\leq 1-\epsilon$.
\begin{lemma}\label{lm:inverse_function}
    $f(N)$ is a monotonically decreasing function in the domain $(0, 1-\epsilon]$.
\end{lemma}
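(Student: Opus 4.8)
The plan is to show that $f'(N)<0$ for every $N$ in the domain, which gives strict (and hence monotone) decrease. First I would record that the domain restriction $N\le 1-\epsilon<1$ keeps the denominator $N-1$ strictly negative and bounded away from zero, so $f$ is smooth there and the quotient rule applies with no removable-singularity subtleties; the genuine pole of the expression at $N=1$ lies outside the domain and never interferes.

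Next I would differentiate directly. Writing $f(N)=\frac{\ln N-\ln(1-\epsilon)}{N-1}$, the quotient rule yields $f'(N)=\frac{\frac{N-1}{N}-\bigl(\ln N-\ln(1-\epsilon)\bigr)}{(N-1)^2}$. Since $(N-1)^2>0$, the sign of $f'$ coincides with the sign of the numerator, which I would isolate as an auxiliary function $g(N)=1-\frac{1}{N}-\ln N+\ln(1-\epsilon)$. It therefore suffices to prove $g(N)<0$ on the interval $(0,1-\epsilon]$.

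The key step is a second differentiation: $g'(N)=\frac{1}{N^2}-\frac{1}{N}=\frac{1-N}{N^2}$, which is strictly positive for all $N<1$, so $g$ is strictly increasing throughout the domain. Consequently $g$ attains its maximum at the right endpoint, and evaluating $g(1-\epsilon)=1-\frac{1}{1-\epsilon}$ shows this maximum is strictly negative because $1-\epsilon<1$ forces $\frac{1}{1-\epsilon}>1$. Hence $g(N)\le g(1-\epsilon)<0$ on the entire interval, giving $f'(N)<0$ and the claimed strict decrease.

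I expect the only real obstacle to be that the numerator of $f'$ is not manifestly signed on its own; the resolution is precisely the auxiliary monotonicity argument for $g$, together with the mildly counterintuitive point that the correct endpoint to test is $N=1-\epsilon$ rather than $N=1$. Beyond that, everything reduces to routine calculus and careful sign bookkeeping.
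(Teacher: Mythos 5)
Your proof is correct and follows essentially the same route as the paper's: differentiate $f$ by the quotient rule, isolate the numerator $g(N)=1-\frac{1}{N}-\ln\frac{N}{1-\epsilon}$, show $g'(N)>0$ so $g$ is increasing, and conclude $g(N)\le g(1-\epsilon)=1-\frac{1}{1-\epsilon}<0$, hence $f'(N)<0$. Your write-up is, if anything, slightly more careful than the paper's about the domain (noting the pole at $N=1$ lies outside it and that the right endpoint $N=1-\epsilon$ is the correct place to test), but the argument is the same.
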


Thus for all $\overline{U}<0$, there exists a unique $N$ such that $f(N) = -\overline{U}$, in other words, 
$N=f^{-1}(-\overline{U})$.

We now construct the reduction.
From the {\bf SYMMETRIC NASH} instance $\overline{A}$, we construct an instance, $\overline{C
}$ of {\bf CONTAGION NASH} with
$\gamma=1,
\epsilon=0.0001,
\beta = -\overline{A}$ and $
p_i = 1, \forall i$.
The construction can be done in polynomial time. 
Let $\phi^* = \sigma^*$, and thus the sets $\widetilde{NE} = NE$.
We show the following. 
\begin{lemma}\label{lm:reduction}
    $\sigma^*$ is a Nash equilibrium of $\overline{A}$ $\iff$ $\phi^*$ is a Nash equilibrium of $\overline{C}$.
\end{lemma}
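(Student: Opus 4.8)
The plan is to verify both implications by collapsing each equilibrium condition to a single scalar equation under the substitutions of the construction ($\gamma=1$, $p_i=1$, $\beta=-\overline{A}$, $\phi^*=\sigma^*$, hence $\widetilde{NE}=NE$). The linchpin is the identity that, for a participating group $i$, the exponent in the final-size relation $\overline{S}_i=(1-\epsilon)e^{X_i}$ collapses once every participating group shares a common final-size fraction $N$. First I would substitute $\overline{S}_j=N$ for $j\in\widetilde{NE}$ into $X_i=\sum_{j\in\widetilde{NE}}\beta_{i,j}\phi^*_j(\overline{S}_j-1)$ and use $\beta_{i,j}=-\overline{A}_{i,j}$ to get $X_i=(1-N)\sum_{j}\overline{A}_{i,j}\phi^*_j$; imposing self-consistency $\overline{S}_i=N$, i.e. $X_i=\ln\frac{N}{1-\epsilon}$, and dividing by the negative quantity $N-1$ then rearranges into exactly $f(N)=-\sum_{j}\overline{A}_{i,j}\phi^*_j$. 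This single equation is what ties the two games together.

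For the forward direction I would start from \eqref{eq:condition1}, so $\sum_{j}\overline{A}_{i,j}\sigma^*_j=\overline{U}$ for $i\in NE$ with $\overline{U}<0$. Since Lemma \ref{lm:inverse_function} makes $f$ a decreasing bijection from $(0,1-\epsilon]$ onto $[0,\infty)$, the value $N:=f^{-1}(-\overline{U})$ is well-defined, unique, and lies in $(0,1-\epsilon)$ because $-\overline{U}>0$. Setting $\overline{S}_j=N$ for all $j\in\widetilde{NE}$ then satisfies the final-size relation for every participating group, by the identity above. The one point requiring care is to conclude that this candidate is the final size actually produced by the SIR dynamics at $\phi^*$: I would invoke uniqueness of the multigroup final-size solution (the final-size relation of \cite{magal2016final} admits a single solution in the feasible range), which identifies the candidate with the true final sizes. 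Finally, for a non-participating group $i\notin\widetilde{NE}$ (where $\phi^*_i=0$, so $\overline{S}_i$ is read as the counterfactual value of a deviating infinitesimal player rather than a $0/0$ ratio), the inequality $\sum_j\overline{A}_{i,j}\sigma^*_j\le\overline{U}$ together with $1-N>0$ gives $X_i\le\ln\frac{N}{1-\epsilon}$, hence $\overline{S}_i\le N$; with $p_i=1$ this is precisely \eqref{eq:condition2}, so $\phi^*$ is a contagion equilibrium.

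The reverse direction runs the same computation backwards and needs no uniqueness assumption. Assuming $\phi^*$ is a contagion equilibrium, \eqref{eq:condition2} with $p_i=1$ gives $\overline{S}_i=N$ for $i\in\widetilde{NE}$ and $\overline{S}_i\le N$ otherwise. Reading $\overline{S}_i=(1-\epsilon)e^{X_i}$ as $X_i=\ln\frac{N}{1-\epsilon}$ for participating groups and dividing by $N-1$ yields $\sum_j\overline{A}_{i,j}\phi^*_j=-f(N)$, so defining $\overline{U}:=-f(N)<0$ recovers the equality part of \eqref{eq:condition1}; for $i\notin\widetilde{NE}$, monotonicity of $x\mapsto(1-\epsilon)e^{x}$ turns $\overline{S}_i\le N$ into $X_i\le\ln\frac{N}{1-\epsilon}$, and dividing by $1-N>0$ gives $\sum_j\overline{A}_{i,j}\phi^*_j\le\overline{U}$, the inequality part. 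Thus $\sigma^*$ is a symmetric equilibrium of $\overline{A}$.

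I expect the main obstacle to be the forward direction's step of pinning down the SIR final sizes: the final-size relation is only a necessary fixed-point condition, so matching the explicit candidate $\overline{S}_j=N$ to the dynamically produced final size rests on uniqueness of that fixed point for an arbitrary nonnegative $\beta=-\overline{A}$. A secondary point to keep straight throughout is the bookkeeping of signs — $N-1<0$ and $\overline{U}<0$ — which is exactly what lets the decreasing function $f$ translate the (negative) symmetric-game payoffs into the (sub-unit) final-size fractions while sending equalities to equalities and inequalities to inequalities in the correct direction.
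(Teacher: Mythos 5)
Your proposal is correct and follows essentially the same route as the paper's proof: the same identification $\overline{U}=-f(N)$ via Lemma \ref{lm:inverse_function}, the same substitution $\overline{S}_j=\frac{N}{p_j}$ for participating groups, and the same sign bookkeeping with $N-1<0$ when translating the inequality constraints for non-participating groups. The one place you go beyond the paper is the forward direction, where you rightly note that identifying the constant-$N$ candidate with the dynamically realized final sizes requires uniqueness of the solution to the multigroup final-size system --- a step the paper's chain of equivalences leaves implicit, since its substitution $X_i=\sum_{j\in NE}\beta_{i,j}\bigl(\frac{N}{p_j}-1\bigr)\phi^*_j$ presupposes $\overline{S}_j=\frac{N}{p_j}$ in that direction --- and your appeal to uniqueness of the final-size solution from \cite{magal2016final} is the appropriate patch.
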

\begin{proof}
    Suppose $\sigma^*$ is a {\bf SYMMETRIC NASH} equilibrium.
    \begin{enumerate}[label=(\roman*)]
        \item $\forall i\in NE$,
        \begin{align*}
            \sum_{j\in NE} \overline{A}_{i,j}\sigma^*_j = \overline{U} \iff
            \sum_{j\in NE} -\beta_{i,j}\phi^*_j = -\frac{\ln\frac{N}{1-\epsilon}}{N-1} \iff
            \sum_{j\in NE} \beta_{i,j}(N-1)\phi^*_j =\ln\frac{N}{1-\epsilon} \iff\\
            \text{(Recall that } \gamma=1\text{ and } p_i=1,\forall i\text{)}\quad
            \sum_{j\in NE} \beta_{i,j}(\frac{N}{p_j}-1)\phi^*_j =\ln\frac{N}{(1-\epsilon)p_i} \iff\\
            X_i=\ln\frac{N}{(1-\epsilon)p_i}\iff 
            p_i(1-\epsilon)e^{X_i}=N\iff
            p_i\overline{S}_i=N
        \end{align*}
        \item $\forall i\notin NE$,
        \begin{align*}
            \sum_{j\in NE} \overline{A}_{i,j}\sigma^*_j \leq \overline{U} \iff
            \sum_{j\in NE} -\beta_{i,j}\phi^*_j \leq -\frac{\ln\frac{N}{1-\epsilon}}{N-1} \iff
            \sum_{j\in NE} \beta_{i,j}\phi^*_j \geq \frac{\ln\frac{N}{1-\epsilon}}{N-1} \iff\\
            \text{(Recall that }N-1<0\text{)}\quad
            \sum_{j\in NE} \beta_{i,j}(N-1)\phi^*_j \leq \ln\frac{N}{1-\epsilon} \iff\\
            \sum_{j\in NE} \beta_{i,j}(\frac{N}{p_j}-1)\phi^*_j \leq\ln\frac{N}{(1-\epsilon)p_i} \iff
            X_i\geq\ln\frac{N}{(1-\epsilon)p_i}\iff 
            p_i(1-\epsilon)e^{X_i}\leq N\iff\\
            p_i\overline{S}_i\leq N
        \end{align*}
    \end{enumerate}
    Since $\widetilde{NE}=NE$,
    $
        \text{$\sigma^*$ is a {\bf SYMMETRIC NASH} equilibrium}\iff
        \text{Condition {\bf (\ref{eq:condition1}})}
        \iff$\\
    $
        \text{Condition {\bf (\ref{eq:condition2})}}
        \iff\text{$\phi^*$ is a {\bf CONTAGION NASH} equilibrium}
    $.\\
    In terms of the numerical error, $\delta$ in {\bf SYMMETRIC NASH} translates to $e^\delta > \delta$, which makes sure that if {\bf CONTAGION NASH} is computed by a $\delta$-approximation, the corresponding {\bf SYMMETRIC NASH} is no worse than a $\delta$-approximation.
\end{proof}

This completes the polynomial reduction from {\bf SYMMETRIC NASH} to {\bf CONTAGION NASH}, proving that  {\bf CONTAGION NASH} is PPAD-hard. 
\begin{theorem}\label{thm:PPAD}
     {\bf CONTAGION NASH} in games with interacting policy sets ( with arbitrary $\beta$ matrix) is PPAD-hard.
\end{theorem}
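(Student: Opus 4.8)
The plan is to establish hardness by a polynomial-time reduction from a problem already known to be PPAD-complete. Since a single interacting population in {\bf Model A1} is inherently symmetric --- every infinitesimal player has the same strategy set $\mathcal{P}$, and the utility of choosing $P_i$ depends only on the aggregate profile $\phi$ --- the natural source problem is {\bf SYMMETRIC NASH}, which inherits PPAD-completeness from {\bf 2-NASH} \cite{chen2009settling} through the standard symmetrization \cite{nisan2007introduction}. First I would fix the shape of the target instance: take the utility class to be the identity homogeneous function ($h(x)=x$, so $U_i=p_i\overline{S}_i$), set $\gamma=1$, $p_i=1$ for all $i$, and a small initial seed $\epsilon$, leaving the transmission matrix $\beta$ as the only free parameter into which the game payoffs must be encoded.

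The core of the reduction is to turn the nonlinear endemic-equilibrium condition into the linear best-response condition of the bimatrix game. Using the Magal--Webb final-size identity \cite{magal2016final}, the final size satisfies $\overline{S}_i=(1-\epsilon)e^{X_i}$ with $X_i=\sum_j\frac{\beta_{i,j}}{\gamma}(\overline{S}_j-1)\phi_j$. At a contagion Nash with common utility level $N$, every participating group has $\overline{S}_i=N/p_i$, so taking logarithms collapses the transcendental relation to $\sum_j\beta_{i,j}(N-1)\phi_j=\ln\frac{N}{1-\epsilon}$, i.e. $\sum_j\beta_{i,j}\phi_j=f(N)$ with $f(N)=\frac{\ln(N/(1-\epsilon))}{N-1}$. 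This matches exactly the linear form $\sum_j\overline{A}_{i,j}\sigma_j=\overline{U}$ of condition (\ref{eq:condition1}) once I set $\beta=-\overline{A}$. The preliminary shift to an all-negative payoff matrix $\overline{A}=A-2C$ is precisely what makes $\beta=-\overline{A}$ a legal nonnegative transmission matrix while preserving the equilibrium set.

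The key remaining issue --- and the step I expect to be the main obstacle --- is that the utility level $N$ and the symmetric equilibrium payoff $\overline{U}$ must be matched self-consistently across two nonlinearly related quantities. Here I would invoke Lemma \ref{lm:inverse_function}: because $f$ is strictly monotone on $(0,1-\epsilon]$, for each $\overline{U}<0$ there is a unique $N=f^{-1}(-\overline{U})$, so the dictionary between the contagion utility level and the game payoff is a bijection. With this in hand, Lemma \ref{lm:reduction} yields the biconditional both on the support ($i\in NE$, the equality) and off the support ($i\notin NE$, the inequality, where one must carefully track the sign flip caused by $N-1<0$), giving that $\sigma^*$ is a symmetric Nash of $\overline{A}$ if and only if $\phi^*=\sigma^*$ is a contagion Nash of the constructed instance $\overline{C}$.

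Finally I would verify the two housekeeping requirements of a PPAD reduction: the construction (first $\overline{A}$, then $\beta,\gamma,p,\epsilon$) is computable in polynomial time, and approximate solutions transfer --- since the logarithmic and exponential maps relating the two equilibrium conditions are Lipschitz on the relevant bounded range, a $\delta$-approximate contagion Nash yields an $O(\delta)$-approximate symmetric Nash (using $e^\delta>\delta$). Chaining {\bf 2-NASH} to {\bf SYMMETRIC NASH} to {\bf CONTAGION NASH} then proves that {\bf CONTAGION NASH} with arbitrary $\beta$ is PPAD-hard. At this point the argument is essentially the assembly of the preceding lemmas; the genuine conceptual content lies entirely in the linearization-by-logarithm trick and in the monotonicity of $f$ that makes the payoff-to-utility-level correspondence well defined.
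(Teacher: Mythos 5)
Your proposal is correct and follows essentially the same route as the paper's own proof: a reduction from \textbf{SYMMETRIC NASH} with the shift $\overline{A}=A-2C$, the instance $\gamma=1$, $p_i=1$, $h=\mathrm{id}$, $\beta=-\overline{A}$, the logarithmic linearization of the final-size identity yielding $\sum_j\beta_{i,j}\phi_j=f(N)$, and the monotonicity of $f$ (Lemma~\ref{lm:inverse_function}) to pin down $N=f^{-1}(-\overline{U})$, exactly as in Lemma~\ref{lm:reduction}. The only points you flag as delicate --- the sign flip from $N-1<0$ in the off-support inequalities and the $\delta$-approximation transfer --- are handled in the paper in precisely the way you describe.
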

In the following sections, we focus our attention on the special cases of uniform interaction policy sets and separable policy sets.
\section{Uniform Interaction Policy Sets}
\label{sec:decomp}
Given the hardness of the general policy game, in this section we focus on a special case of the game which is the case with uniform interaction policy sets, i.e. {\bf Model A2}. Recall that the difference from the general interacting model is that we require each entry in the $\beta$ matrix to be
$\beta_{i,j} = \kappa_i \kappa_j \beta_0, \forall i,j$,
where $1=\kappa_1>\kappa_2>\cdots>\kappa_n>0$ and $\gamma/\beta_0 <1$. 
Each group $i$ has a parameter $\kappa_i$ that uniformly determines its interaction with all other groups.

\noindent
{\bf{Preliminaries:}}
For simplicity we denote the  size $S_i(\infty)$ of group $i$ by $S_i$.  Denote by $X_i=\sum_{j=1}^n \frac{\beta_{i,j}}{\gamma}(S_j-\phi_j),  \forall i=1,\cdots, n$. 
$S_i$ satisfies $S_i=S_i(0)\cdot e^{X_i}$ \cite{magal2016final}. 
For better clarity, in later proofs we may denote $e^x$ by $exp(x)$ when $x$ is a long expression.
We assume that the vector $p$ satisfies that $p_1>p_2>\cdots>p_n$. 
\begin{lemma}\label{lm:decreasing_payments}
W.L.O.G. the following property holds for the groups:    $p_1>p_2>\cdots>p_n$.
\end{lemma}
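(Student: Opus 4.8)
The plan is to exploit a monotonicity relation between the interaction parameter $\kappa_i$ and the final size $\overline{S}_i$, and then argue that any group whose payment violates the $\kappa$-ordering is strictly dominated and therefore never participates in a Nash equilibrium. First I would rewrite the fixed-point characterization of the final size for Model A2. Substituting $\beta_{i,j}=\kappa_i\kappa_j\beta_0$ into $X_i=\sum_j \frac{\beta_{i,j}}{\gamma}(S_j-\phi_j)$ gives $X_i=\kappa_i Z$, where $Z=\frac{\beta_0}{\gamma}\sum_j \kappa_j(S_j-\phi_j)$ is a single quantity shared by all groups. Since $S_j\le S_j(0)=(1-\epsilon)\phi_j<\phi_j$ for every participating group, every term of $Z$ is nonpositive and at least one is strictly negative, so $Z<0$.

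Next I would read off the monotonicity. Because $\overline{S}_i=(1-\epsilon)e^{X_i}=(1-\epsilon)e^{\kappa_i Z}$ with $Z<0$, the map $\kappa\mapsto (1-\epsilon)e^{\kappa Z}$ is strictly decreasing, so $\kappa_i>\kappa_j$ implies $\overline{S}_i<\overline{S}_j$. This holds pointwise at every configuration $\phi$ with a participating group, in particular at every Nash equilibrium, and since $h$ is increasing it gives $h(\overline{S}_i)<h(\overline{S}_j)$.

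I then define domination: call group $i$ \emph{dominated} if there is a group $j$ with $\kappa_j<\kappa_i$ (i.e.\ $j>i$ in the given $\kappa$-ordering) and $p_j\ge p_i$. For such a pair, $U_i=p_i h(\overline{S}_i)<p_j h(\overline{S}_j)=U_j$ at every equilibrium, so group $i$ is never a best response and must satisfy $\phi_i^{\mathcal N}=0$. A zero-mass group has $S_i(0)=0$, hence contributes the zero term $\kappa_i(S_i-\phi_i)=0$ to $Z$ and does not affect the final sizes or utilities of the remaining groups, so deleting it preserves the Nash equilibria of the reduced game. Removing all dominated groups and re-indexing the survivors in decreasing order of $\kappa$, any pair with $\kappa_i>\kappa_j$ must now satisfy $p_i>p_j$ (otherwise $i$ would still be dominated), which is exactly $p_1>p_2>\cdots>p_n$.

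The main obstacle is making the removal step airtight: I must verify that deleting a zero-mass group genuinely preserves the equilibrium set of the reduced game and does not create new participating groups or change the value of $Z$ seen by the survivors. This reduces to two observations, namely that a group with $\phi_i=0$ contributes nothing to the coupling term $Z$ (immediate from $S_i\le(1-\epsilon)\phi_i=0$), and that the domination relation is pointwise and independent of which other groups are present, hence preserved under deletion; a short induction on the number of removed groups then closes the argument.
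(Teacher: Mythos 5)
Your proposal is correct and follows essentially the same route as the paper: your coupling quantity $Z$ is exactly the paper's $X_0$, and the argument is the same decomposition $X_i=\kappa_i X_0$ with $X_0<0$, giving $\overline{S}_i=(1-\epsilon)e^{\kappa_i X_0}$ strictly decreasing in $\kappa_i$, so any group $i$ with $\kappa_i>\kappa_j$ but $p_i\leq p_j$ has $U_i<U_j$ at every configuration and can be deleted. Your additional verification that a zero-mass group contributes nothing to $Z$ and that deletion preserves the equilibria of the reduced game is a welcome tightening of a step the paper leaves implicit (the paper also asserts, without detail, that the dominated group is absent from the social optimum, which your argument does not address but which is not needed for the ordering claim itself).
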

\begin{proof}
Recall that $S_i = (1-\epsilon)\phi_ie^{X_i}$.
Given any $\phi$,
$X_i = \sum_j \frac{\beta_{i,j}}{\gamma}(S_j-\phi_j)
= \sum_j \frac{\kappa_i \kappa_j \beta_0}{\gamma}(S_j-\phi_j)
=\kappa_i X_0$,
where $X_0=\sum_j \frac{\kappa_j \beta_0}{\gamma}(S_j-\phi_j)<0$.
$X_0$ is independent of group $i$. 
The individual utility
$U_i=p_i\cdot h(\overline{S}_i)
=p_i\cdot h((1-\epsilon)\cdot e^{\kappa_i X_0})$.
When $i<j$, since $h$ is increasing, if  $p_i\leq p_j$, we always have $U_i<U_j$.  We may thus remove group $i$ as it will not be in any Nash equilibrium or social optimum. Therefore for any $i<j$, we can assume  $p_i>p_j$.
\end{proof}
We first establish a bound on all the final sizes (proof is in the appendix).
\begin{lemma}\label{lm:n_group_final_upper}
$\sum_{i=1}^n \frac{\kappa_i^2 \beta_0}{\gamma} S_i < 1$.
\end{lemma}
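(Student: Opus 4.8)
The plan is to read the target quantity as an effective reproduction number and to show it must fall below $1$ by the time the epidemic dies out. Write $\rho = \beta_0/\gamma = R_0$, so the claim is $\rho\sum_{i=1}^n \kappa_i^2 S_i(\infty) < 1$. I would introduce the weighted infectious mass $W(t) = \sum_{j=1}^n \kappa_j I_j(t)$. Using $\beta_{i,j}=\kappa_i\kappa_j\beta_0$ in the $I$-equation of \textbf{Model A2}, a one-line computation gives
\[
\frac{\mathrm{d}W}{\mathrm{d}t} = \sum_i \kappa_i\left(S_i(t)\,\kappa_i\beta_0\,W(t) - \gamma I_i(t)\right) = \gamma\, W(t)\left(\rho\sum_i \kappa_i^2 S_i(t) - 1\right).
\]
Thus the sign of $\dot W$ is controlled exactly by whether $g(t) := \rho\sum_i\kappa_i^2 S_i(t)$ exceeds $1$, and it is $g(\infty)$ that I want to bound.

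Next I would record two facts about $W$. First, $W(t)>0$ for every finite $t$: since the infection term is non-negative, $\dot I_i \ge -\gamma I_i$, so $I_i(t)\ge I_i(0)e^{-\gamma t}>0$ for each group with $\phi_i>0$, and at least one such group exists as $\sum_i\phi_i=1$. Second, $W(\infty)=\sum_j\kappa_j I_j(\infty)=0$, since endemic equilibrium forces $I_j(\infty)=0$. I would also note that $g$ is strictly decreasing on $[0,\infty)$: from $\dot S_i = -S_i\kappa_i\beta_0 W$ we get $\dot g = -\rho\beta_0 W\sum_i\kappa_i^3 S_i < 0$, because $W>0$ and each $S_i(t)\ge S_i(\infty)=(1-\epsilon)\phi_i e^{X_i}>0$ keeps $\sum_i\kappa_i^3 S_i$ strictly positive.

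The conclusion then follows by contradiction. Suppose $g(\infty)\ge 1$. Since $g$ is strictly decreasing, $g(t)>g(\infty)\ge 1$ for all finite $t$, whence $\dot W = \gamma W(g(t)-1)>0$ throughout; so $W$ is strictly increasing and $W(\infty)\ge W(0)>0$, contradicting $W(\infty)=0$. Therefore $g(\infty)=\rho\sum_i\kappa_i^2 S_i<1$, which is the statement. I expect the main obstacle to be the regularity bookkeeping rather than the core idea: one must carefully justify that $W$ stays strictly positive on all of $[0,\infty)$ yet tends to $0$, and that the weighted sums stay strictly positive so that $g$ is genuinely strictly monotone, which is what upgrades the bound from $\le 1$ to the strict $<1$ asserted.

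As an alternative that never tracks the trajectory and instead builds directly on the final-size relation already set up in the excerpt, I would set $Y=-X_0=\rho\sum_j\kappa_j(\phi_j-S_j)>0$ and $B(y)=\rho(1-\epsilon)\sum_i\kappa_i\phi_i e^{-\kappa_i y}$. Then $X_i=\kappa_i X_0=-\kappa_i Y$ makes $Y$ a root of the \emph{convex} equation $G(y):=y+B(y)=\rho\sum_i\kappa_i\phi_i=:A$, while $S_i=(1-\epsilon)\phi_i e^{-\kappa_i Y}$ gives $\rho\sum_i\kappa_i^2 S_i=-B'(Y)$. Since $G(0)=B(0)=(1-\epsilon)A$ lies strictly below $A$, the point $0$ is interior to the sublevel set $\{G\le A\}$, so at its right endpoint $Y>0$ convexity forces $G'(Y)=1+B'(Y)>0$, i.e.\ $-B'(Y)<1$, again yielding $\rho\sum_i\kappa_i^2 S_i<1$. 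Here the subtlety to settle is uniqueness of the relevant root and the strictness of $G'(Y)>0$, both of which come from strict convexity ($B''>0$) together with $G(0)<A$.
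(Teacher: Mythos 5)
Your proposal is correct, and its two halves relate to the paper differently. Your first argument is at heart the same computation as the paper's: the paper also works with the weighted infectious mass $\sum_i \kappa_i I_i(t)$ --- it takes the inequality $S_i(t)\sum_j \kappa_i\kappa_j\beta_0 I_j(t) < \gamma I_i(t)$ at a time when every $I_i$ is decreasing, multiplies by $\kappa_i$, sums over $i$, cancels the common factor $\sum_j \kappa_j I_j(t)$, and then uses $S_i(\infty)<S_i(t)$. However, the paper merely asserts that there exists one time $t$ at which $\frac{\mathrm{d}I_i(t)}{\mathrm{d}t}<0$ holds \emph{simultaneously for all $i$}, without justification; your version packages the identical algebra as the exact identity $\frac{\mathrm{d}W}{\mathrm{d}t}=\gamma W\bigl(g(t)-1\bigr)$ and replaces that unproven simultaneity claim by the contradiction from $W(0)>0$, $W(\infty)=0$, and monotonicity of $g(t)=R_0\sum_i\kappa_i^2 S_i(t)$ --- so it is a tightened rendition of the paper's proof rather than a new one. (Two small touch-ups: weak monotonicity of $g$ already suffices, since $g(t)\ge g(\infty)\ge 1$ would give $\frac{\mathrm{d}W}{\mathrm{d}t}\ge 0$ throughout; and your positivity claim $S_i(\infty)>0$ holds only for groups with $\phi_i>0$, but one such group exists, which keeps $\sum_i\kappa_i^3 S_i$ strictly positive.) Your second argument is genuinely different from the paper: it never tracks the trajectory, using only the final-size relation $S_i=(1-\epsilon)\phi_i e^{-\kappa_i Y}$ and strict convexity of $G(y)=y+B(y)$, which buys independence from any dynamical regularity claims. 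Moreover, the uniqueness subtlety you flag is unnecessary: for \emph{any} root $Y>0$ of $G(Y)=A$ with $G(0)=(1-\epsilon)A<A$, the mean value theorem yields $\xi\in(0,Y)$ with $G'(\xi)=\bigl(A-G(0)\bigr)/Y>0$, and convexity forces $G'(Y)\ge G'(\xi)>0$, i.e.\ $-B'(Y)=R_0\sum_i\kappa_i^2 S_i<1$, giving the strict bound directly.
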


\subsection{Polynomial Time Convex Programming Approach to Compute the Final Size}\label{sec:decomp_convex}
In this section we consider computation of the final size $S_i(\infty)$, and provide a polynomial algorithm to compute an approximation to this size.

Our strategy is to define a convex program that computes the fixed point of functions that defines $S_i(\infty)$.
Given a point $s=[s_1,s_2,\cdots,s_n]^T$, we define function
$f_i(s)=s_i - (1-\epsilon)\phi_i\cdot e^{X_i},\forall i$.
The following convex program finds the final sizes as its unique optimum solution, illustrated by {\bf Figure \ref{fig:convex}}.
\begin{align*}
    \min_s\quad &\sum_{i=1}^n s_i&\\
    \mathrm{s.t.}\quad &f_i(s)\geq 0,\quad  i=1,\cdots, n\\
    &0\leq s_i\leq(1-\epsilon)\phi_i,\quad i=1,\cdots, n
\end{align*}
\begin{figure}
    \centering
    \includegraphics[width = 0.7\textwidth]{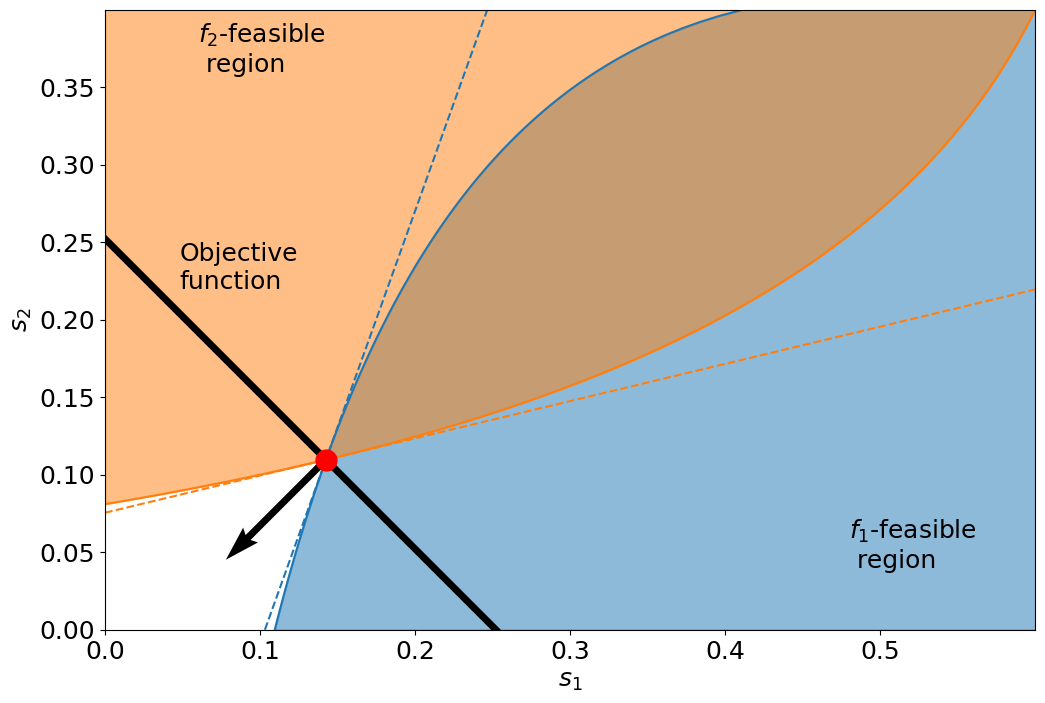}
    \caption{A demonstration of the convex program in a 2-group setting. The red point is the final size point $F^*$.}
    \label{fig:convex}
\end{figure}
We show that this  program is convex by proving that  each $f_i$ is concave and thus together with  $f_i\geq 0$ our domain is  a convex region.
\begin{lemma}\label{lm:f_i_concave}
    $f_i(s)$ is concave.
\end{lemma}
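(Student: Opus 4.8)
The plan is to read $f_i$ as a function of the decision vector $s$ and to exploit the fact that the exponent $X_i$ is \emph{affine} in $s$. Writing $X_i(s) = \sum_{j=1}^n \frac{\beta_{i,j}}{\gamma}(s_j - \phi_j)$, the coefficients $\beta_{i,j}/\gamma$ and the additive shift $-\sum_j \frac{\beta_{i,j}}{\gamma}\phi_j$ are constants independent of $s$, so $X_i(\cdot)$ is an affine map $\mathbb{R}^n \to \mathbb{R}$. (Under the uniform interaction of {\bf Model A2} one may further factor $X_i(s) = \kappa_i X_0(s)$ as in Lemma~\ref{lm:decreasing_payments}, but this refinement is not needed here; affineness of $X_i$ is all that the argument uses.)

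From there I would invoke the standard composition rules of convex analysis in three steps. First, since $t \mapsto e^t$ is convex on $\mathbb{R}$ and the composition of a convex function with an affine map is convex, $s \mapsto e^{X_i(s)}$ is convex. Second, the multiplier $(1-\epsilon)\phi_i$ is a nonnegative constant, so $(1-\epsilon)\phi_i\, e^{X_i(s)}$ remains convex, and its negation $-(1-\epsilon)\phi_i\, e^{X_i(s)}$ is concave. Third, the leading term $s_i$ is linear, hence concave, and a sum of concave functions is concave; therefore $f_i(s) = s_i - (1-\epsilon)\phi_i\, e^{X_i(s)}$ is concave. The single point requiring care is the sign: concavity of the negated exponential relies on $(1-\epsilon)\phi_i \ge 0$, which holds because $0 < \epsilon < 1$ and $\phi_i \ge 0$.

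Alternatively, to keep the argument self-contained I would verify concavity through the Hessian. Differentiating twice gives $\frac{\partial^2 f_i}{\partial s_k \partial s_l} = -(1-\epsilon)\phi_i\, \frac{\beta_{i,k}\beta_{i,l}}{\gamma^2}\, e^{X_i(s)}$, so the Hessian equals $-\frac{(1-\epsilon)\phi_i\, e^{X_i(s)}}{\gamma^2}\, b_i b_i^{T}$ where $b_i = (\beta_{i,1},\dots,\beta_{i,n})^{T}$. The outer product $b_i b_i^{T}$ is positive semidefinite and the scalar prefactor is nonpositive, so the Hessian is negative semidefinite and $f_i$ is concave.

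Since this is essentially a textbook application of convexity-preserving operations, I do not anticipate a genuine obstacle. The only thing worth flagging is how concavity feeds the program: the superlevel set $\{s : f_i(s) \ge 0\}$ of a concave function is convex, the box constraints $0 \le s_i \le (1-\epsilon)\phi_i$ define a convex set, and the objective $\sum_i s_i$ is linear, so the whole program is convex. Establishing this lemma is precisely what justifies the claim made just before it that the feasible region is convex.
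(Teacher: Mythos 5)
Your proof is correct, and your Hessian alternative is essentially the paper's own argument: the paper computes the same second derivatives $-S_i(0)\frac{\beta_{i,k}\beta_{i,l}}{\gamma^2}e^{X_i}$ and, using the uniform structure $\beta_{i,j}=\kappa_i\kappa_j\beta_0$ of {\bf Model A2}, factors the Hessian as $-(1-\epsilon)\phi_i\frac{\kappa_i^2\beta_0^2}{\gamma^2}e^{X_i}\,\vec{\kappa}\vec{\kappa}^{T}$, concluding negative semidefiniteness from the rank-one outer product exactly as you do with $b_i b_i^{T}$. Your primary argument --- that $X_i$ is a single affine functional of $s$, so $s\mapsto e^{X_i(s)}$ is convex as a convex function composed with an affine map, and $f_i$ is concave after negation, nonnegative scaling by $(1-\epsilon)\phi_i$, and addition of the linear term $s_i$ --- is a route the paper does not take, and it is both more elementary (no differentiation needed) and more general: it makes explicit that concavity holds for an arbitrary nonnegative $\beta$ matrix, with no use of the decomposition $\beta_{i,j}=\kappa_i\kappa_j\beta_0$. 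In fact your Hessian version already shows this too, since the rank-one structure $b_ib_i^{T}$ comes from $X_i$ being a single linear functional rather than from uniformity; the paper's $\vec{\kappa}\vec{\kappa}^{T}$ factoring is thus a cosmetic specialization. This generality is mildly useful downstream, as the same concavity is reused verbatim for $f_i^v$ in the network model of Section~\ref{sec:net_convex}, and your closing remark correctly identifies the role of the lemma: the superlevel sets $\{s: f_i(s)\ge 0\}$ are convex, which together with the box constraints and linear objective makes the program convex.
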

Let $F^*=(S_1, S_2, \cdots, S_n)$ be the final size point.
$F^*$ is feasible to the convex program since $f_i(F^*)=0,\forall i$. 
We next show that $F^*$ is the unique optimum point to this program.
Let $H^*$ be the objective hyperplane passing through $F^*$. 
Any point $s=(s_1,s_2,\cdots,s_n)$ on $H^*$ satisfies
$\sum_{i=1}^n (s_i-S_i)=0$.
We show that any point $s\neq F^*$ on $H^*$ is infeasible. Since the feasible region is convex, it suffices to show that with a small deviation $\Delta\in \mathbb{R}^n$, $s=F^*+\Delta$ is infeasible.
Since point $s$ is on $H^*$, we get $\sum_{i=1}^n \Delta_i=0$.
We partition the components of $\Delta$ into two sets:
$\Delta_- =\{i|\Delta_i<0\}$ and 
$\Delta_+ =\{i|\Delta_i\geq 0\}$.
It is obvious that $\sum_{i \in \Delta_-}\Delta_i = - \sum_{i \in \Delta_+}\Delta_i$.

The Jacobian of $f_i$ at point $F^*$ is $J_{f_i} = 
[\frac{\mathrm{d}f_i}
{\mathrm{d}s_1},
\cdots,
\frac{\mathrm{d}f_i}
{\mathrm{d}s_n}]^T$, where
$
\frac{\mathrm{d}f_i}
{\mathrm{d}s_j} = 
-\frac{\beta_{i,j}}{\gamma} S_i
= -\frac{\kappa_i\kappa_j\beta_0}{\gamma} S_i
,\ \forall j\neq i$ and
$\frac{\mathrm{d}f_i}
{\mathrm{d}s_i} = 
1 - \frac{\beta_{i,i}}{\gamma} S_i 
= 1 - \frac{\kappa_i^2\beta_0}{\gamma} S_i
$.
\begin{lemma}\label{lm:f_i_Delta_infeasible}
    There exists $i$ such that $f_i(F^*+\Delta) <0$.
\end{lemma}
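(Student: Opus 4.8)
The plan is to show that moving from $F^*$ in any nonzero direction $\Delta$ lying in the objective hyperplane $H^*$ immediately violates at least one constraint $f_i \geq 0$. Since $f_i(F^*)=0$ and $\Delta$ is a small deviation, I would work to first order: using the Jacobian entries recorded above, the directional derivative of $f_i$ at $F^*$ along $\Delta$ is
\[
D_i := \sum_{j=1}^n \frac{\mathrm{d}f_i}{\mathrm{d}s_j}\Delta_j = \Delta_i - \frac{\kappa_i\beta_0}{\gamma}S_i\,W,\qquad W:=\sum_{j=1}^n \kappa_j \Delta_j,
\]
so that $f_i(F^*+\Delta) = D_i + o(\|\Delta\|)$. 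It therefore suffices to exhibit an index $i$ with $D_i<0$ strictly, since for small enough $\Delta$ the higher-order term cannot rescue the sign, giving $f_i(F^*+\Delta)<0$.

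I would then argue by contradiction, assuming $D_i \geq 0$ for every $i$, i.e. $\Delta_i \geq \frac{\kappa_i\beta_0}{\gamma}S_i\,W$ for all $i$. The idea is to extract a contradiction from two different weighted sums of these inequalities. Summing with unit weights and using $\sum_i \Delta_i = 0$ (which holds because $s=F^*+\Delta$ lies on $H^*$) gives $0 \geq W\sum_i \frac{\kappa_i\beta_0}{\gamma}S_i$; since at least one participating group has $S_i>0$, the quantity $A:=\sum_i \frac{\kappa_i\beta_0}{\gamma}S_i$ is strictly positive, forcing $W \leq 0$. Summing instead with weights $\kappa_i$ gives $W \geq W\sum_i \frac{\kappa_i^2\beta_0}{\gamma}S_i = W\,B$. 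Here the crucial input is Lemma \ref{lm:n_group_final_upper}, which guarantees $B:=\sum_i \frac{\kappa_i^2\beta_0}{\gamma}S_i < 1$, so $W(1-B)\geq 0$ with $1-B>0$ forces $W \geq 0$. Combining the two bounds yields $W=0$.

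Finally, with $W=0$ the assumption collapses to $\Delta_i \geq 0$ for all $i$, which together with $\sum_i \Delta_i = 0$ forces $\Delta = 0$, contradicting $\Delta \neq 0$. Hence some $D_i<0$ and the lemma follows.

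I expect the main obstacle to be recognizing that a single weighted sum is insufficient: the unit-weight sum only handles the case $W>0$, and one must pair it with the $\kappa$-weighted sum --- whose right-hand side matches exactly the quantity bounded below $1$ in Lemma \ref{lm:n_group_final_upper} --- to rule out $W<0$, while the degenerate case $W=0$ needs the separate observation that nonnegativity plus a zero sum forces $\Delta=0$. A secondary technical point I would be careful about is justifying that the first-order analysis suffices, i.e. that a strictly negative directional derivative indeed produces an infeasible point for all sufficiently small $\Delta$ along $H^*$.
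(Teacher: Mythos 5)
Your proof is correct, but it takes a genuinely different route from the paper's. The paper argues directly: it multiplies each directional derivative $J_{f_i}^T\Delta$ by $\kappa_i$ and sums only over the index set $\Delta_-=\{i \mid \Delta_i<0\}$ (nonempty since $\sum_i\Delta_i=0$ and $\Delta\neq 0$), then bounds $\sum_j(-\kappa_j\Delta_j)$ from above using $\kappa_{min}=\min_{i\in\Delta_+}\kappa_i$, and invokes Lemma~\ref{lm:n_group_final_upper} to conclude $\sum_{i\in\Delta_-}\kappa_i J_{f_i}^T\Delta<0$; this even localizes the violated constraint among the coordinates with $\Delta_i<0$. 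You instead argue by contradiction with two global weighted sums (weights $1$ and weights $\kappa_i$), pinning $W=\sum_j\kappa_j\Delta_j$ between $W\leq 0$ and $W\geq 0$ and then dispatching the degenerate case $W=0$ by nonnegativity plus zero sum. Both arguments hinge on Lemma~\ref{lm:n_group_final_upper}; your only extra ingredient is $\sum_i\frac{\kappa_i\beta_0}{\gamma}S_i>0$, which indeed holds since some $\phi_i>0$ and hence $S_i=(1-\epsilon)\phi_i e^{X_i}>0$ for that $i$. Your version is more symmetric and avoids the $\kappa_{min}$ estimate; the paper's buys the extra information that the infeasible constraint lies in $\Delta_-$. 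On the technical point you flagged: the paper is no more careful than you are (it treats the linearization as exact for small $\Delta$, having reduced to small deviations via convexity of the feasible region), but in fact concavity of each $f_i$ (Lemma~\ref{lm:f_i_concave}) gives $f_i(F^*+\Delta)\leq f_i(F^*)+J_{f_i}^T\Delta=J_{f_i}^T\Delta$ for \emph{every} $\Delta$, so a strictly negative directional derivative yields infeasibility exactly, with no smallness or $o(\|\Delta\|)$ bookkeeping needed.
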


This shows that any point $s=F^*+\Delta$ is infeasible, the final size point $F^*$ is the only feasible point on hyperplane $H^*$. Since the feasible region is convex, it must lie on one side of $H^*$. We observe that vector $\vec{h}(-1,-1,\cdots,-1)$ is a normal to $H^*$, pointing to the direction that reduces the objective function value. Let point $p_\phi=\Big((1-\epsilon)\phi_1,(1-\epsilon)\phi_2,\cdots,(1-\epsilon)\phi_n\Big)$. It is a feasible point since for all $i$,
$   f_i(p_\phi)
    =(1-\epsilon)\phi_i
    \Big(1 - exp\Big(-\sum_{j=1}^n \frac{\beta_{i,j}}{\gamma}
    \epsilon\phi_j
    \Big)\Big)
    >0$.
For all $i$, $p_\phi[i]>F^*[i]$, thus $\vec{h}\cdot (p_\phi - F^*)<0$, a feasible point $p_\phi$ is on the opposite side of $\vec{h}$, the entire feasible region is on the opposite side of $\vec{h}$. No other feasible point can further reduce the objective function value than $F^*$, therefore $F^*$ is the unique optimum point of the convex program.

There is no bound on the precision of the numbers in the solution; hence we will offer an approximation, based on restricting the location of the solution by solving the convex program using methods like the Ellipsoid method to give the following result:
\begin{theorem}\label{thm:decomp_convex_program}
For the contagion game with uniform interaction policy sets, there exists a convex program to compute $S_i(\infty), \forall i$ in polynomial time.
\end{theorem}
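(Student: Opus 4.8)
The plan is to assemble the pieces already developed above into a bona fide polynomial-time convex-optimization statement. First I would verify that the displayed mathematical program is genuinely convex: the objective $\sum_{i=1}^n s_i$ is linear, the box constraints $0\le s_i\le(1-\epsilon)\phi_i$ are linear, and each remaining constraint $f_i(s)\ge 0$ carves out a convex set because $f_i$ is concave by Lemma \ref{lm:f_i_concave}. The feasible region is therefore an intersection of finitely many convex sets, hence convex, and we are minimizing a linear functional over it, which is a convex program.

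Next I would certify that the final-size point $F^*=(S_1,\ldots,S_n)$ is the unique optimal solution. Feasibility is immediate from the defining fixed-point relation $S_i=(1-\epsilon)\phi_i\,e^{X_i}$, which gives $f_i(F^*)=0\ge 0$ for every $i$. Optimality and uniqueness then follow from the hyperplane argument already set up: letting $H^*$ be the objective level set through $F^*$, Lemma \ref{lm:f_i_Delta_infeasible} shows that every point $F^*+\Delta\neq F^*$ lying on $H^*$ (so $\sum_i\Delta_i=0$) is infeasible, while the explicit feasible interior point $p_\phi$ together with the normal $\vec{h}=(-1,\ldots,-1)$ shows that the entire feasible region lies on the side of $H^*$ on which the objective is larger. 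Hence no feasible point attains a value below $\sum_i S_i$, so $F^*$ is the unique minimizer, and solving the program recovers the final sizes.

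With correctness settled, the remaining and genuinely delicate work is the polynomial-time claim, which I expect to be the main obstacle. Because the coordinates of $F^*$ are in general transcendental, there is no exact finite-precision solution, so I would phrase the result as computing a $\delta$-approximation. Concretely I would run the Ellipsoid method (or an interior-point method) on the program: evaluating each $f_i$ and its gradient at a rational point is polynomial-time, which furnishes a polynomial-time separation oracle; the feasible region sits inside the bounded box $\prod_{i}[0,(1-\epsilon)\phi_i]$; and the interior point $p_\phi$ with $f_i(p_\phi)>0$ certifies full-dimensionality, so that a Euclidean ball of inverse-polynomial radius fits inside the feasible set. The one quantitative point requiring care is a polynomial upper bound on the number of bits, parametrized by $\delta$, needed so that the Ellipsoid output, after projection back onto the feasible set, lies within the target accuracy of $F^*$; here I would use the gradient bounds on $f_i$ (the same Jacobian entries $\tfrac{\mathrm{d}f_i}{\mathrm{d}s_j}$ computed above) to convert an objective-gap guarantee into a coordinate-wise guarantee on the solution, and argue that the requisite $\delta$ is polynomially bounded in the input size. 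Making this bit-complexity estimate rigorous, rather than the convexity or the uniqueness, is the crux of the polynomial-time guarantee.
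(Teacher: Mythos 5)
Your proposal follows essentially the same route as the paper's own proof: the same convex program with concavity of each $f_i$ from Lemma \ref{lm:f_i_concave}, the same hyperplane argument through $F^*$ using Lemma \ref{lm:f_i_Delta_infeasible} together with the interior feasible point $p_\phi$ and normal $\vec{h}=(-1,\ldots,-1)$ to establish unique optimality, and the same appeal to the Ellipsoid method for a polynomial-time $\delta$-approximation. If anything, your discussion of the separation oracle, full-dimensionality, and bit-complexity is more explicit than the paper's, which simply asserts the approximation ``using methods like the Ellipsoid method'' without spelling out those quantitative details.
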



\subsection{Computing the Nash Equilibrium}\label{sec:decomp_NE}
We first look at the individual utility $U_i = p_i\cdot h(\overline{S}_i)$, where $h:R\rightarrow R$ is a concave, monotonically increasing homogeneous function s.t. $h(0)=0$.
We let $h$ be a homogeneous function of degree $d$. Since $h$ is concave, $0\leq d\leq 1$.
We present an algorithm to compute a Nash equilibrium. We first show that at  Nash equilibrium at most  2 groups will participate.
\begin{lemma}\label{lm:2Group_NE}
    For every contagion game with uniform interaction policy sets there exists a Nash equilibrium $\phi^*$ with at most 2 policy groups participating, i.e. $| \{i: \phi^*_i >0 \} | \leq 2$.
\end{lemma}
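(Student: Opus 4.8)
The plan is to exploit the fact that in the uniform model every group's fate is governed by the single scalar $X_0 = \sum_j \tfrac{\kappa_j\beta_0}{\gamma}(S_j-\phi_j)$: as shown in the proof of Lemma \ref{lm:decreasing_payments}, $\overline{S}_i = (1-\epsilon)e^{\kappa_i X_0}$ for every $i$, so the whole utility vector $U_i = p_i h(\overline{S}_i)$ depends on the profile $\phi$ only through $X_0$. Starting from an arbitrary Nash equilibrium $\phi^*$ (which exists by Theorem \ref{thm:NE_existence}) with participating set $T=\{i:\phi^*_i>0\}$ and common equilibrium value $X_0^*$, I would collapse the support down to the two extreme groups of $T$: $a$, the index in $T$ with the largest $\kappa$, and $b$, the index in $T$ with the smallest $\kappa$, while preserving the aggregate value $X_0^*$. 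If I can produce an allocation on $\{a,b\}$ whose consistency equation again yields $X_0^*$, then, since utilities depend only on $X_0$, the entire utility vector is unchanged; in particular $a$ and $b$ remain tied maximizers and every other group stays dominated, so the new profile is a Nash equilibrium with at most two participating groups. The cases $|T|\le 2$ are immediate, and $d=0$ is trivial since then all $U_i$ are constant and group $1$ is the unique best response.

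The first substantive step is to show that $X_0^*$ lies in the interval spanned by the single-group equilibrium values. For each $i$ let $X_0^{(i)}$ be the value of $X_0$ produced when the whole population sits in group $i$, i.e. the unique root of $X = \tfrac{\beta_0}{\gamma}\kappa_i((1-\epsilon)e^{\kappa_i X}-1)$. Writing $R_j(X)=\tfrac{\beta_0}{\gamma}\kappa_j((1-\epsilon)e^{\kappa_j X}-1)$, the consistency equation for a profile $\phi$ reads $X = \sum_j \phi_j R_j(X)$, and Lemma \ref{lm:n_group_final_upper} guarantees $\sum_j \phi_j R_j'(X) = \sum_j \tfrac{\kappa_j^2\beta_0}{\gamma}S_j < 1$, so $X\mapsto \sum_j\phi_j R_j(X)-X$ is strictly decreasing with a unique root. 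Next I would observe that for fixed $X<0$ the map $\kappa\mapsto \kappa((1-\epsilon)e^{\kappa X}-1)$ is strictly decreasing in $\kappa$ (a short derivative check using $(1-\epsilon)e^{\kappa X}(1+\kappa X)<1$), which forces $X_0^{(i)}$ to be strictly decreasing in $\kappa_i$ and lets me compare the $R_j$. Evaluating the decreasing function at $X_{\max}=X_0^{(b)}$ gives $\sum_{j\in T}\phi^*_j(R_j(X_{\max})-X_{\max})\le 0$, because $R_b(X_{\max})=X_{\max}$ and $R_j(X_{\max})<X_{\max}$ for $j\ne b$; symmetrically at $X_{\min}=X_0^{(a)}$ the value is $\ge 0$. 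Hence $X_{\min}\le X_0^*\le X_{\max}$.

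The second step restricts attention to the edge of the simplex spanned by $a$ and $b$: for $t\in[0,1]$ put $\phi(t)=t\,e_a+(1-t)e_b$ and let $X_0(t)$ be the corresponding root of the consistency equation. By the implicit function theorem $X_0(t)$ is continuous, and its derivative keeps a constant sign, because the denominator $\tfrac{\partial}{\partial X}(\sum\phi_j R_j - X)$ is negative everywhere (Lemma \ref{lm:n_group_final_upper}) while the numerator $R_a(X_0)-R_b(X_0)$ keeps a fixed sign by the monotonicity of $\kappa\mapsto\kappa((1-\epsilon)e^{\kappa X}-1)$. Thus $X_0(t)$ moves strictly monotonically from $X_0(0)=X_0^{(b)}=X_{\max}$ to $X_0(1)=X_0^{(a)}=X_{\min}$, and by the intermediate value theorem together with the bracketing of the previous step there is a $t^*$ with $X_0(t^*)=X_0^*$. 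The profile $\phi(t^*)$ is supported on $\{a,b\}$, induces exactly $X_0^*$, and hence reproduces the original utility vector; since $a,b\in T$ were maximizers and all other groups were dominated at $X_0^*$, $\phi(t^*)$ is a Nash equilibrium with at most two participating groups.

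The main obstacle is the pair of monotonicity facts that make the bracketing and the intermediate-value argument go through: that $\kappa\mapsto\kappa((1-\epsilon)e^{\kappa X}-1)$ is decreasing for $X<0$ (so single-group values order with $\kappa$ and the $R_j$ can be compared), and that the consistency map has slope strictly below $1$, which is exactly the content of Lemma \ref{lm:n_group_final_upper}. Everything else is continuity plus the fixed-point uniqueness these sign conditions provide. A minor point to dispatch is that the two extreme groups are genuinely distinct when $|T|\ge 3$ (true since $\kappa$ is injective on $T$), and that degenerate ties among three or more collinear points $(\kappa_i,\ln p_i)$ cause no difficulty, because we only ever retain the two extreme members of $T$.
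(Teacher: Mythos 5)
Your proof is correct in substance but takes a genuinely different route from the paper's. The paper also reduces everything to the single scalar $X_0$, but then finishes with a linear-algebra step: holding $X_0$ and the equilibrium utility level fixed, the Nash conditions become the polytope $\frac{\beta_0}{\gamma}\sum_{i\in NE}\kappa_i(\hat{N}/\overline{p}_i-1)\phi_i=X_0$, $\sum_{i\in NE}\phi_i=1$, $\phi\geq 0$, whose two-equality system admits a basic feasible solution with at most two positive components; since that solution preserves $X_0$, the entire utility vector and hence the Nash property are preserved. Your bracketing-plus-intermediate-value argument along the edge spanned by the extreme-$\kappa$ members of the support reaches the same conclusion, at the cost of the extra monotonicity lemmas, and in exchange identifies \emph{which} two groups can carry the equilibrium (the paper's basic feasible solution is an arbitrary pair from $NE$); both arguments share the same implicit step, namely that the new profile actually realizes the original $X_0$ as its endemic value, i.e., uniqueness of the root of the consistency equation $X=\sum_j\phi_jR_j(X)$ on $(-\infty,0]$. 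One caution on your justification of that uniqueness: the identity $\sum_j\phi_jR_j'(X)=\sum_j\frac{\kappa_j^2\beta_0}{\gamma}S_j$ holds only at the true fixed point, so Lemma \ref{lm:n_group_final_upper} does not make the map $X\mapsto\sum_j\phi_jR_j(X)-X$ globally strictly decreasing (with a single group and $(1-\epsilon)R_0>1$ its slope at $X=0$ is positive). The conclusion survives because each $R_j$ is convex and $\sum_j\phi_jR_j(0)=-\epsilon\frac{\beta_0}{\gamma}\sum_j\phi_j\kappa_j<0$, so the convex function $\sum_j\phi_jR_j(X)-X$, which tends to $+\infty$ as $X\to-\infty$ and is negative at $X=0$, has exactly one root on $(-\infty,0]$; the same repair covers your claimed monotonicity of $X_0(t)$ in the second step, which is in any case dispensable, since continuity of $X_0(t)$ (legitimately obtained from the implicit function theorem, as the derivative bound of Lemma \ref{lm:n_group_final_upper} does hold at the fixed point itself) together with the endpoint bracketing already yields the intermediate value $t^*$ you need.
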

\begin{proof}
Assume point $\phi^*$ is a Nash equilibrium with $k$ positive components and the corresponding individual utility being $N$. Denote $\overline{S}_i=\frac{S_i}{\phi_i^*}$. 
Let $NE$ be the set of groups in the Nash equilibrium, i.e. $NE=\{i|\phi^*_i>0\}$. 
Since $h$ is homogeneous, for all $i\in NE$,
$
N = p_i \cdot h(\overline{S}_i)\implies
N =h(p_i^{1/d}\cdot\overline{S}_i)
$.
And for all $i \notin NE$, $N \geq p_i\cdot h(\overline{S}_i)$.
Denote $\hN=h^{-1}(N)$ and $\overline{p}_i=p_i^{1/d}$, we have
\[
\begin{cases}
    \forall i \in NE, \quad\phi_i^*>0, \quad \overline{p}_i\cdot\overline{S}_i = \hN,\\
    \forall i \notin NE, \quad\phi_i^*=0, \quad \overline{p}_i\cdot\overline{S}_i \leq \hN
\end{cases}
\]
Note that since $\forall i \notin NE$, $\phi^*_i=0$ and $S_i=0$,
\begin{align*}
X_0 &=\sum_{j=1}^n \frac{\kappa_j \beta_0}{\gamma}(S_j - \phi_j^*) 
= \sum_{j\in NE} \frac{\kappa_j \beta_0}{\gamma}(S_j - \phi_j^*)
= \frac{\beta_0}{\gamma}\sum_{j\in NE} \kappa_j \phi^*_j (\frac{S_j}{\phi^*_j} - 1)
= \frac{\beta_0}{\gamma}\sum_{j\in NE} \kappa_j \phi^*_j (\overline{S}_j - 1)\\
&\text{(replace $\overline{S}_j$ by $\frac{\hN}{\overline{p}_j}$, $\forall j \in NE$)}\quad
= \frac{\beta_0}{\gamma}\sum_{j\in NE} \kappa_j \phi^*_j (\frac{\hN}{\overline{p}_j} - 1)
\end{align*}
The Nash equilibrium $\phi^*$ satisfies the following system of inequalities over the vector-valued variable $\phi$ that defines a polytope over
the space of  non-negative $\phi$:
\[
\begin{cases}
    \frac{\beta_0}{\gamma}\sum_{i\in NE} \kappa_i(\frac{\hN}{\overline{p}_i} - 1)\phi_i = X_0,\\
    \sum_{i\in NE} \phi_i = 1,\\
    \phi_i\geq 0, \quad\forall i \in NE
\end{cases}
\]
Note that $X_0$, which is calculated from $\phi^*$, is a constant independent to the variable $\phi$. The rank of the polytope is at most 2, therefore there exists a basic feasible solution $\hat{\phi}$ with at most 2 non-negative components from the set $NE$. 
If we evaluate the individual utilities at point $\hat{\phi}$, we still get that
$
U_i(\hat{\phi}) = N, \forall i \in NE$ and $U_i(\hat{\phi}) \leq N, \forall i \notin NE
$.
And the at most 2 positive components are from the set $NE$ by construction. Thus we obtain a new point $\phi$ with at most 2 groups and still satisfies the Nash equilibrium conditions. This proves that a Nash equilibrium with at most 2 groups participating exists.
\end{proof}
Now we compute the Nash equilibrium.
First, assume a Nash equilibrium $\phi^*$ with 2 groups exists, namely group $i,j$. 
The 2 groups have the same individual utility, $U_i=U_j=N$.
\[
\begin{aligned}
N=U_i=p_i\cdot h(\hS_i)\implies \ \ \ 
\hN=\hp_i \hS_i=\hp(1-\epsilon)e^{\kappa_i X_0} \implies \ \ 
X_0=\frac{1}{\kappa_i}\ln{\frac{\hN}{(1-\epsilon)\hp_i}}
\end{aligned}
\]
Similarly, we have $X_0=\frac{1}{\kappa_j}\ln{\frac{\hN}{(1-\epsilon)\hp_j}}$.
\[
    \frac{1}{\kappa_i}\ln\frac{\hN}{(1-\epsilon)\overline{p}_i}
    =\frac{1}{\kappa_j}\ln\frac{\hN}{(1-\epsilon)\overline{p}_j}\implies
    \hN = (1-\epsilon)
    \Big(\frac{\overline{p}_j^{\kappa_i}}{\overline{p}_i^{\kappa_j}}
    \Big)^{\frac{1}{\kappa_i - \kappa_j}}
\]
For every $i,j$ pair, compute the Nash equilibrium individual utility $\hN$, then the value of $X_0$. Now solve the following equations to compute $\phi^*_i$ and $\phi^*_j$.
\begin{align*}
&\begin{cases}
    \phi^*_i + \phi^*_j = 1\\
    X_0 = \frac{\beta_0}{\gamma}\Big[\phi^*_i \kappa_i(\frac{\hN}{\overline{p}_i} - 1) 
    + \phi^*_j \kappa_j (\frac{\hN}{\overline{p}_j} - 1)\Big]
\end{cases}\\
\implies
&\begin{cases}
    \phi^*_i=
    \Big[\frac{X_0}{R_0} - \kappa_j(\frac{\hN}{\overline{p}_j}-1)
    \Big] /
    \Big[ \kappa_i(\frac{\hN}{\overline{p}_i}-1) 
    - \kappa_j(\frac{\hN}{\overline{p}_j}-1)\Big]\\
    \phi^*_j=
    \Big[\frac{X_0}{R_0} - \kappa_i(\frac{\hN}{\overline{p}_i}-1)
    \Big] /
    \Big[ \kappa_j(\frac{\hN}{\overline{p}_j}-1) 
    - \kappa_i(\frac{\hN}{\overline{p}_i}-1)\Big]
\end{cases},
\end{align*}
where $R_0 = \beta_0/\gamma$.
Set the remaining group sizes to be 0. For all group $l\neq i,j$, check if its individual utility satisfies $U_l\leq \hN$. If so, we obtain a Nash equilibrium with group pair $i,j$ participating. The entire process can be done in $O(n^3)$.

If no pair produces a Nash equilibrium from the process above, then we try to find equilibrium with only 1 group. Assume group $i$ alone is in the Nash equilibrium, then we have
$\phi^*_i=1$ and 
$\phi^*_j=0,\forall j\neq i$.
We can calculate its final size using the binary search $S_i = FinalSize_i(1)$ from \textbf{Section \ref{sec:sep_LB}}. 
For all other groups $j\neq i$, since $\phi^*_j=0$, $S_j=0$. 
Then we can compute for every group $X_i=\sum_{j=1}^n \frac{\beta_{i,j}}{\gamma}(S_j-\phi_j)$ and the individual utility $\overline{S}_i = (1-\epsilon)e^{X_i}$. 
Let $\hN=\overline{S}_i$. If $\hN\geq \overline{S}_j, \forall j\neq i$, then we obtain a Nash equilibrium with only group $i$ participating. 
The process can be done in $O(n+\log\frac{1}{\delta})$ for 1 group, which will be explained in {\bf Section \ref{sec:sep_LB}}, where $\delta$ is the error bound on $\phi^*$. The overall time complexity is $O(n(n+\log\frac{1}{\delta}))$.

Now we discuss the precision of the algorithm for Nash equilibrium with 1 group. We assume all the inputs are provided in the form of rational number $\pm n_1/n_2$, where $n_1,n_2\leq n_0$. We choose the following value for $\delta$ such that the numerical calculation of Nash equilibrium is correct:

\begin{lemma}\label{lm:decomp_delta_bound}
    $\delta \leq \frac{1}{4 n_0^4}$ guarantees that the group that participates in Nash equilibrium is chosen correctly.
\end{lemma}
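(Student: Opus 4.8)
The plan is to confine every effect of numerical error to a single kind of decision and then show that a separation over the rational data makes that decision robust. The one-group routine of \textbf{Section \ref{sec:sep_LB}} returns $\tilde S_i$ with $|\tilde S_i-S_i|\le\delta$ and declares group $i$ a sole Nash equilibrium exactly when $\overline{S}_i\ge\overline{S}_j$ for all $j\neq i$; the routine can only err if, for some $j$ with $\overline{S}_i\neq\overline{S}_j$, the computed difference $\tilde{\overline{S}}_i-\tilde{\overline{S}}_j$ carries the wrong sign. Hence it suffices to choose $\delta$ so that the perturbation of every compared quantity stays below the smallest true nonzero gap.

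First I would express the compared quantities as functions of the single scalar $S_i$. With $\phi_i=1$ and $S_\ell=\phi_\ell=0$ for $\ell\neq i$, the exponents collapse to $X_i=\tfrac{\kappa_i^2\beta_0}{\gamma}(S_i-1)$ and $X_j=\tfrac{\kappa_i\kappa_j\beta_0}{\gamma}(S_i-1)$, so each $\overline{S}_\ell=(1-\epsilon)e^{X_\ell}$ is a monotone function of $S_i$ alone with $\bigl|\tfrac{d\overline{S}_\ell}{dS_i}\bigr|=(1-\epsilon)e^{X_\ell}\tfrac{\kappa_i\kappa_\ell\beta_0}{\gamma}\le R_0$, using $X_\ell<0\Rightarrow e^{X_\ell}\le1$ and $\kappa\le1$. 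Since all inputs are rationals $\pm n_1/n_2$ with $n_1,n_2\le n_0$, we get $\beta_0\le n_0$ and $1/\gamma\le n_0$, so $R_0=\beta_0/\gamma\le n_0^2$; consequently each $\overline{S}_\ell$ is perturbed by at most $n_0^2\delta$ and each compared difference by at most $2n_0^2\delta$.

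The crux is to lower bound the true gap $|\overline{S}_i-\overline{S}_j|$ whenever it is nonzero. Because $X_i-X_j=\tfrac{\kappa_i\beta_0}{\gamma}(\kappa_i-\kappa_j)(S_i-1)$, the two sizes coincide iff $\kappa_i=\kappa_j$, so a nonzero gap is tied to the rational separation $|\kappa_i-\kappa_j|\ge 1/n_0^2$ of distinct denominator-$n_0$ rationals. I would propagate this through the mean-value estimate $|e^{X_i}-e^{X_j}|\ge e^{\min(X_i,X_j)}|X_i-X_j|$, using $R_0\ge1$, $\kappa_i\ge1/n_0$, and $1-S_i\ge\epsilon$ (since $S_i\le S_i(0)=1-\epsilon$), together with the standing hypothesis that the final sizes stay bounded away from $0$, to obtain a polynomial lower bound on the gap. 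I expect this step to be the main obstacle, precisely because the quantities actually compared are transcendental: the clean rational separation applies only to the algebraic exponents, and one must carry it across the monotone map $S_i\mapsto e^{X}$ while keeping both the exponential factor $e^{\min(X_i,X_j)}$ and the displacement $1-S_i$ bounded below in terms of $n_0$.

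Finally the arithmetic closes the argument. The dominant effects are the rational gap $1/n_0^2$ and the Lipschitz inflation $R_0\le n_0^2$; balancing the perturbation $2n_0^2\delta$ against the gap, and absorbing the remaining bounded factors, pins the safe threshold at order $1/n_0^4$, and the stated choice $\delta\le\tfrac{1}{4n_0^4}$ realizes this with a constant margin. With such $\delta$, every computed difference shares the sign of its true counterpart, so each accept/reject test, and hence the choice of the participating group, is resolved correctly.
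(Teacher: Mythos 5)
Your route is genuinely different from the paper's, and its crux step has a real gap. You model the two compared quantities as \emph{independently} perturbed and then try to prove a separation lemma: every true nonzero gap is at least $1/n_0^{O(1)}$. Two problems. First, the test the algorithm performs (and the paper's proof analyzes) is $\hp_i\hS_i \geq \hp_j\hS_j$, not $\hS_i \geq \hS_j$; your claim that ``the two sizes coincide iff $\kappa_i=\kappa_j$'' concerns the wrong objects, because the payments are free rational parameters and the weighted utilities $\hp_i\hS_i$ and $\hp_j\hS_j$ can be equal or arbitrarily close regardless of the separation $|\kappa_i-\kappa_j|\geq 1/n_0^2$ (indeed equality is exactly the two-group equilibrium condition). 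No rational separation of the exponents transfers to these weighted transcendental quantities, and you offer no proof of the needed gap bound --- you flag it yourself as the main obstacle. Second, even your own sketch cannot deliver the stated threshold: the factor $e^{\min(X_i,X_j)}$ in your mean-value estimate is only bounded below by roughly $e^{-R_0}$ with $R_0\leq n_0^2$ (compare the lower bound $LB$ of {\bf Lemma \ref{lm:1_group_UB_LB}}, which is of this exponentially small order; the paper has no hypothesis keeping $\hS$ polynomially away from $0$), so balancing your perturbation $2n_0^2\delta$ against such a gap forces $\delta$ exponentially small in $n_0$, not $\delta\leq \frac{1}{4n_0^4}$.

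The paper needs no gap lower bound at all, because in its error model the perturbations are \emph{correlated}: only the scalar $\widetilde{S}_i=\overline{S}_i\pm\delta$ carries numerical error, and every compared quantity, including the $j=i$ one, is then recomputed exactly as $\widetilde{S}_j=(1-\epsilon)e^{\frac{\beta_{j,i}}{\gamma}(\widetilde{S}_i-1)}$. Consequently, in the ratio $\frac{\hp_i\widetilde{S}_i}{\hp_j\widetilde{S}_j}\cdot\frac{\hp_j\hS_j}{\hp_i\hS_i}$ the exponents cancel except for the term $\pm(\kappa_i-\kappa_j)\kappa_i R_0\,\delta$, so the paper only has to force this \emph{multiplicative drift} below $\tfrac{1}{4n_0^2}$ (the resolution of the rational data), yielding $\delta \leq \frac{1/(4n_0^2)}{|\kappa_i-\kappa_j|\,\kappa_i R_0} \geq \frac{1}{4n_0^4}$ since $|\kappa_i-\kappa_j|\kappa_i\leq 1$ and $R_0\leq n_0^2$. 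That is a forward-error (sign-preservation-up-to-tolerance) argument that never compares against a true gap. This cancellation is precisely what your independent-perturbation model discards, and it is why the paper's $\delta$ is polynomial while your approach, even if its separation lemma could be salvaged, would not justify the bound claimed in the lemma.
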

Since we have proven the existence of the Nash equilibrium earlier, we are bound to find at least 1 Nash equilibrium from the 2 processes. The overall time complexity is $O(n(n^2+\log \frac{1}{\delta}))$.
\begin{theorem}\label{thm:decomp_Nash_algo}
    The Nash equilibrium $\phi^*$ of contagion game with uniform interaction policy sets can be approximated by $O(n(n^2+\log \frac{1}{\delta}))$ operations where $\delta $ is bounded above by  $ \frac{1}{4 n_0^4}$.
\end{theorem}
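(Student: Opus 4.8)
The plan is to prove the theorem by assembling the structural and algorithmic facts developed above into a single correctness-and-complexity statement. First I would invoke Theorem~\ref{thm:NE_existence} to guarantee that a Nash equilibrium exists, and Lemma~\ref{lm:2Group_NE} to reduce the search to equilibria whose support has size at most two. This justifies the enumeration strategy: it suffices to examine every candidate support consisting of a single group or of a pair of groups, of which there are $n + \binom{n}{2} = O(n^2)$.

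For each candidate pair $\{i,j\}$ I would certify that the closed-form computation is correct. The equal-utility requirement $U_i = U_j = N$ forces $X_0 = \frac{1}{\kappa_i}\ln\frac{\hN}{(1-\epsilon)\hp_i} = \frac{1}{\kappa_j}\ln\frac{\hN}{(1-\epsilon)\hp_j}$, which (since $\kappa_i \ne \kappa_j$ by the ordering assumption) pins down $\hN$ algebraically, after which the two linear equations $\phi^*_i + \phi^*_j = 1$ and the $X_0$-identity determine $\phi^*_i,\phi^*_j$ uniquely. The candidate is accepted only when these values are nonnegative and the non-participation inequalities $U_l \le \hN$ hold for every remaining group $l$, which are precisely the conditions defining a Nash equilibrium. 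Verifying one candidate costs $O(n)$ (evaluating $\overline{S}_l$ for all $l$), so the two-group phase costs $O(n^2)\cdot O(n) = O(n^3)$ with no dependence on $\delta$, since $\hN$ and $\phi^*$ are obtained in closed form. For each single-group candidate $i$ I would compute $S_i = FinalSize_i(1)$ by the binary search of Section~\ref{sec:sep_LB} at cost $O(n + \log\frac1\delta)$ and then verify $\overline{S}_i \ge \overline{S}_j$ for all $j$, giving $O(n(n + \log\frac1\delta))$ for this phase. The two phases together yield the claimed $O(n(n^2 + \log\frac1\delta))$ bound.

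Correctness follows in two directions: any accepted candidate satisfies the defining Nash inequalities by construction, while Lemma~\ref{lm:2Group_NE} guarantees that at least one enumerated support yields a genuine equilibrium, so the search cannot fail to return one. The remaining and most delicate step is numerical: the single-group final size is available only approximately, so I must ensure the approximation error never causes the algorithm to accept a spurious support or reject the true one. Here I would appeal to Lemma~\ref{lm:decomp_delta_bound}, whose bound $\delta \le \frac{1}{4n_0^4}$, stated in terms of the bound $n_0$ on the rational inputs, certifies that the computed utilities stay separated from their thresholds by more than the accumulated error, so the comparison selecting the participating group is made correctly. I expect this precision control to be the main obstacle, since it requires tracking how the error in $S_i$ propagates through $\overline{S}_i = (1-\epsilon)e^{X_i}$ and the subsequent comparisons, whereas the enumeration and the closed-form algebra for the two-group case are routine.
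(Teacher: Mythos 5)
Your proposal is correct and follows essentially the same route as the paper's own proof: reduce to supports of size at most two via Lemma~\ref{lm:2Group_NE}, enumerate the $O(n^2)$ pair candidates using the closed-form determination of $\hN$, $X_0$, and $\phi^*_i,\phi^*_j$ with an $O(n)$ verification each, handle singleton supports with the binary search of Section~\ref{sec:sep_LB} at cost $O(n+\log\frac{1}{\delta})$ per candidate, and control numerical precision via Lemma~\ref{lm:decomp_delta_bound} with $\delta \leq \frac{1}{4n_0^4}$. The enumeration strategy, the appeal to existence to guarantee the search succeeds, and the complexity accounting all match the paper's argument.
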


\subsection{Price of Anarchy}\label{sec:decomp_POA}
In this section we present results on the price of anarchy(POA) for the contagion game model with uniform interaction policy sets. 

For every Nash equilibrium with the corresponding individual utility $N$, let $\phi^{NE} \in R^n$ be the Nash equilibrium solution using the utility function $p_i\cdot h(\hS_i)$. 
For each group $i$, let $N_i$ be its individual utility and $S_i^{NE}$ be the corresponding final size the current at the Nash equilibrium. 
Note that for all group $i$ participating in the Nash equilibrium, $N_i=N$.
Denote $\hS_i^{NE}=\frac{S_i^{NE}}{\phi_i^{NE}}$.
We construct a new utility function $g(\hS_i) = k(\hS_i-\hS_i^{NE}) + N_i
= k\hS_i +(N_i-k\hS_i^{NE})$, where $k=\left.\frac{\mathrm{d}h_i}{\mathrm{d}\hS_i}\right|_{\hS_i^{NE}}$.
The function $g_i$ defines a tangent line at the current Nash equilibrium point $(\hS_i^{NE}, N_i)$, which is an affine utility function.
Since $g_i$ is a tangent to $h_i$, which is a non-negative concave function, we have $g_i(\hS_i) \geq h_i(\hS_i), \forall \hS_i \in R^{+}$.

We show that the POA using  the original utility functions ${\mathcal H} = \{ h_i \}_i$ is bounded by the POA using these new affine utilities.
Denote the POA using function ${\mathcal H}$ and ${\mathcal G} = \{ g_i \}_i$ by $POA_h$, $POA_g$, respectively.
\begin{lemma}\label{lm:POA_fg}
    $POA_g\geq POA_h$.
\end{lemma}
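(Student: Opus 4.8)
The plan is to reduce, instance by instance, the price of anarchy of the homogeneous game to that of the affine game, exploiting the single most important structural fact about this model: the per-capita final sizes $\hS_i$ are determined by the SIR dynamics and the population profile $\phi$ alone, and do not depend on which utility functions are attached to the groups. Fix an arbitrary homogeneous instance $G_h$ (given by $p$ and $h$), and let $\phi^{NE}$ be a worst-case Nash equilibrium, i.e. one attaining $\min_{\mathcal{N}\in\mathcal{NE}(G_h)}\sum_i UG_i(\phi^{\mathcal{N}})$. Let $\hS_i^{NE}$ and $N_i = h_i(\hS_i^{NE})$ be the corresponding per-capita final sizes and individual utilities, so that $N_i = N$ for every participating group and $N_i \le N$ for every non-participating group. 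Build the tangent affine utilities $g_i$ as defined above, producing a companion affine instance $G_g$. Two facts about $g_i$ will be used repeatedly: first, $g_i(\hS_i^{NE}) = N_i = h_i(\hS_i^{NE})$ by construction; second, $g_i(\hS_i) \ge h_i(\hS_i)$ for all $\hS_i \ge 0$, which is exactly the tangent-above-a-concave-curve inequality already noted.

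First I would verify that $\phi^{NE}$ is itself a Nash equilibrium of $G_g$. Because the final sizes depend only on $\phi$, replacing $h_i$ by $g_i$ leaves every $\hS_i^{NE}$ unchanged, so the individual utility of group $i$ at $\phi^{NE}$ under $g$ equals $g_i(\hS_i^{NE}) = N_i$. Every participating group then has $g$-utility equal to $N$, and every non-participating group has $g$-utility $N_i \le N$; these are precisely the non-atomic Nash conditions at the fixed profile $\phi^{NE}$, so $\phi^{NE}\in\mathcal{NE}(G_g)$.

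Next I would compare the three quantities that enter the two ratios. The social welfare at $\phi^{NE}$ is unchanged: non-participating groups contribute $0$ (their $\phi_i^{NE}=0$) and participating groups satisfy $g_i(\hS_i^{NE}) = h_i(\hS_i^{NE})$, so $\sum_i UG_i^g(\phi^{NE}) = \sum_i UG_i^h(\phi^{NE})$. Hence the worst Nash welfare of $G_g$ is no larger than that of $G_h$, since $\min_{\mathcal{N}\in\mathcal{NE}(G_g)}\sum_i UG_i^g \le \sum_i UG_i^g(\phi^{NE}) = \sum_i UG_i^h(\phi^{NE}) = \min_{\mathcal{N}\in\mathcal{NE}(G_h)}\sum_i UG_i^h$. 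For the optima, the pointwise inequality $g_i \ge h_i$ together with $\phi_i \ge 0$ gives $\sum_i \phi_i g_i(\hS_i(\phi)) \ge \sum_i \phi_i h_i(\hS_i(\phi))$ for every feasible $\phi$, so $OPT_g \ge OPT_h$. Combining, the numerator of the $g$-ratio grows while its denominator shrinks, giving $OPT_g/\min_{NE} SW_g \ge OPT_h/\min_{NE} SW_h$; taking the maximum over all homogeneous instances (each mapping to an affine instance) yields $POA_h \le POA_g$.

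The step I expect to require the most care is the Nash-equilibrium transfer: one must argue cleanly that an infinitesimal player's deviation does not move $\phi$ and hence does not move any $\hS_i$, so that the equilibrium test reduces to comparing the fixed values $g_i(\hS_i^{NE})$, and one must treat non-participating groups correctly by using their limiting per-capita final size $\hS_i = (1-\epsilon)e^{X_i}$, which is well-defined even though $\phi_i^{NE}=0$, rather than the ill-formed ratio $S_i/\phi_i$. Everything else is bookkeeping that follows directly from the two tangent-line properties of $g_i$.
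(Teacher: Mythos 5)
Your proposal is correct and follows essentially the same route as the paper: construct the tangent affine utilities $g_i$ at the worst-case Nash point, observe that the per-capita final sizes depend only on $\phi$ so that $\phi^{NE}$ remains a Nash equilibrium under $g$ with unchanged welfare, and compare the ratios. If anything, your write-up is slightly more careful than the paper's on two points it leaves implicit --- you state explicitly that $OPT_g \geq OPT_h$ via the pointwise inequality $g_i \geq h_i$, and you correctly use $\min_{\mathcal{N}\in\mathcal{NE}(G_g)} SW_g \leq SW_g(\phi^{NE}) = SW_h(\phi^{NE})$ rather than the paper's looser assertion of equality $SW_g^{NE} = SW_h^{NE}$.
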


\begin{proof}[Proof of {\bf Lemma \ref{lm:POA_fg}}]
Denote the social welfare at optimum with utility function $h$ and $g$ by $SW_h^{OPT}$ and $SW_g^{OPT}$, respectively. 
Denote the minimum valued social welfare from amongst all Nash equilibria using utility function $h$ and $g$ by $SW_h^{NE}$ and $SW_g^{NE}$, respectively. We have  $POA_h=\frac{SW^{OPT}_h}{SW_h^{NE}}$, $POA_g = \frac{SW^{OPT}_g}{SW_g^{NE}}$.

We first show that $\phi^{NE}$ is still a Nash equilibrium when the utility function is  $g$. We show this as follows: Since $h_i$ is an increasing function, the slope of the tangent at $\hS_i^{NE}$ which is $k$, is $positive$ and $g_i$ is a linear increasing function. 
By definition, $g_i(\hS_i^{NE}) = h_i(\hS_i^{NE})$. Thus, for all group $i$ , $U_i=g_i(\hS_i^{NE})=h_i(\hS_i^{NE}) = N_i$. 
All groups still satisfy that $\phi_i >0 \implies U_i\geq U_j,\forall j$. With utility functions $\{ g_i \}_i$, the current point is still a Nash equilibrium, $SW_g^{NE} = SW_h^{NE} = N$.
        

Therefore $POA_g = \frac{SW^{OPT}_g}{SW_g^{NE}} \geq \frac{SW^{OPT}_h}{SW_h^{NE}} = POA_h$.
\end{proof}
Since $g_i, \forall i$ is an affine function, $POA$ using ${\mathcal G}$ is bounded by the $POA$ of utility functions chosen from the affine family,  namely $\hat{g_i}(\hS_i)=a_i \hS_i+b_i$, where $a_i>0,b_i\geq 0$.
With a proof similar to {\bf Lemma~\ref{lm:decreasing_payments}}, we further assume that $a_1> a_2> \cdots > a_n$.
\begin{lemma}\label{lm:POA_affine}
The price of anarchy,    $POA_{\hat{g}}$ is maximized when $b_i=0, \forall i$.
\end{lemma}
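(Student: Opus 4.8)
The plan is to exploit the fact, already recorded in the proof of Lemma~\ref{lm:POA_fg}, that at any Nash equilibrium the social welfare equals the common equilibrium utility: since every participating group has individual utility exactly $N$ and $\sum_i\phi_i=1$, we get $SW^{NE}=\sum_i \phi_i U_i = N$. For an affine instance $\hat g_i(\hS_i)=a_i\hS_i+b_i$ the group utility is $UG_i=\phi_i(a_i\hS_i+b_i)=a_iS_i+b_i\phi_i$, so the social welfare at any profile splits as
\[
SW(\phi)=\underbrace{\textstyle\sum_i a_iS_i(\phi)}_{W_0(\phi)}+\langle b,\phi\rangle ,\qquad \langle b,\phi\rangle=\textstyle\sum_i b_i\phi_i ,
\]
where the final sizes $S_i(\phi)$ (equivalently $\hS_i$) depend only on $\phi$ and the fixed SIR parameters, not on $a,b$. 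Writing $POA_{\hat g}=\max_{a,b}\frac{SW^{OPT}(a,b)}{N(a,b)}$, the goal is to show this maximum is attained on the face $b=0$.

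The main engine is a uniform-shift argument. Suppose $c:=\min_i b_i>0$ and replace $b$ by $b-c\mathbf 1$. Because this subtracts the constant $c$ from every individual utility simultaneously, it leaves every Nash equilibrium and the social optimum at the same profiles: all the defining comparisons $U_i\ge U_j$ and the maximization of $SW$ are invariant under a common additive shift, which merely lowers $SW(\phi)$ by $c$. Hence $SW^{OPT}$ and $N$ each drop by exactly $c$. Since $SW^{OPT}\ge N>0$ and $SW^{NE}=N$, and since $N=\sum_i\phi_i^{NE}(a_i\hS_i+b_i)>\sum_i\phi_i^{NE}b_i\ge c$ (participating groups have $a_i\hS_i>0$), we obtain
\[
\frac{SW^{OPT}-c}{N-c}\ \ge\ \frac{SW^{OPT}}{N}.
\]
Thus removing the common baseline only increases the ratio, so we may assume $\min_i b_i=0$.

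To drive the remaining $b_i$ to $0$ I would argue along the segment $t\mapsto tb$, $t\in[0,1]$, and show the ratio $R(t)=SW^{OPT}(tb)/N(tb)$ is non-increasing, so that $R(0)\ge R(1)$. The two parts behave differently: for the optimum the envelope theorem gives $\frac{d}{dt}SW^{OPT}(tb)=\langle b,\phi^{OPT}(t)\rangle$, while for the denominator I would use the identity $SW^{NE}=N$ together with the structural fact (Lemma~\ref{lm:2Group_NE}, which carries over to affine utilities, since for fixed $N$ the equalisation $a_i\hS_i+b_i=N$ again yields a rank-two system in $\phi$) that the worst equilibrium is supported on at most two groups, so $N(tb)$ is determined by one or two equalisation equations in the single scalar $X_0$. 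Comparing the logarithmic derivatives $\frac{d}{dt}\log SW^{OPT}$ and $\frac{d}{dt}\log N$ then reduces to showing that the baseline contributes relatively at least as much to the equilibrium utility as to the optimum, which I would establish using the ordering $a_1>\cdots>a_n$ (assumed after a domination argument analogous to Lemma~\ref{lm:decreasing_payments}) and the monotonicity of the final size in the group sizes.

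The step I expect to be the genuine obstacle is precisely this comparison for the denominator: unlike $SW^{OPT}$, the worst Nash value $N(tb)=\min_{\mathcal N}SW^{NE}$ is a minimum over equilibria whose support can switch as $t$ varies, so it is only piecewise smooth and the envelope theorem does not apply. I would handle the non-smooth points by checking that the ratio cannot jump upward when the supporting pair of groups changes (the value $N$ is continuous in $t$, and at a switch two equilibria tie), and on each smooth piece reduce the claim to the at-most-two-group equalisation identities above. Once $b=0$ is reached the utilities are homogeneous of degree one, which is exactly the regime in which Lemmas~\ref{lm:decreasing_payments} and~\ref{lm:2Group_NE} and the subsequent $e^{R_0}$ bound apply.
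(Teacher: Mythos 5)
Your first step (the uniform shift) is correct, and it is in substance the paper's entire argument: the paper's proof writes $SW^{OPT}=\overline{SW}^{OPT}+\hat b$ and $SW^{NE}=\overline{SW}^{NE}+\hat b$ with a \emph{single} baseline $\hat b=\sum_i\phi_i b_i$ appearing in both numerator and denominator, and concludes $\frac{\overline{SW}^{OPT}}{\overline{SW}^{NE}}>\frac{\overline{SW}^{OPT}+\hat b}{\overline{SW}^{NE}+\hat b}$. Treating $\hat b$ as one constant shared by the optimum profile and the equilibrium profile is legitimate precisely when the baseline contribution is profile-independent, i.e., in the common-shift case you handle rigorously (since $\sum_i\phi_i=1$, a common constant $b$ contributes exactly $b$ at every profile, so the equilibrium set and the optimizer are unchanged and both welfare values drop by $b$). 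Your verification that the shift preserves all comparisons $U_i\geq U_j$, and that $N>c$ so the ratio inequality $\frac{SW^{OPT}-c}{N-c}\geq\frac{SW^{OPT}}{N}$ is applicable, is careful and correct.

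The genuine gap is your second stage, and you name it yourself. After normalizing $\min_i b_i=0$, nothing is actually proved for non-uniform $b$: the monotonicity of $R(t)=SW^{OPT}(tb)/N(tb)$ along $t\mapsto tb$ is stated as a goal, and the decisive comparison of $\frac{\mathrm{d}}{\mathrm{d}t}\log SW^{OPT}$ with $\frac{\mathrm{d}}{\mathrm{d}t}\log N$ --- ``the baseline contributes relatively at least as much to the equilibrium utility as to the optimum'' --- is essentially a restatement of the lemma along the path, supported only by a list of tools (ordering of the $a_i$, monotonicity of final sizes, the rank-two support structure) rather than an argument. The non-smoothness of $N(tb)$ under support switches is acknowledged but not resolved: continuity of $N$ at a switch point does not prevent the derivative inequality from failing on a smooth piece, and the claim that the ratio cannot jump upward is asserted, not shown. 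So as written the proposal establishes Lemma \ref{lm:POA_affine} only when all $b_i$ coincide; for distinct $b_i$ it is a program, not a proof. It is fair to note that the paper's one-line proof silently makes the same identification (its $\hat b$ is profile-dependent once the $b_i$ differ), so your step two isolates a real subtlety that the paper glosses over --- but isolating the obstacle is not the same as closing it, and the proposal is therefore incomplete.
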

\begin{proof}[Proof of {\bf Lemma \ref{lm:POA_affine}}]
The social welfare is $\sum_i \phi_i \hat{g}_i(\hS_i) = 
\sum_i \phi_i (a_i\hS_i + b_i) = \sum_i \phi_ia_i \hS_i + \sum_i \phi_i b_i$. 
Denote $\hat{b}=\sum_i \phi_i b_i$.
By contradiction, assume $\hat{b}>0$. Let $POA_{\hat{g}}=\frac{SW^{OPT}}{SW^{NE}}$ for the function $\hat{g}$ (we omit the subscript $\hat{g}$ for simplicity for the rest of the proof), 
where $SW^{OPT}= \overline{SW}^{OPT}+ \hat{b}$ and $SW^{NE}= \overline{SW}^{NE}+ \hat{b}$. 
Since $\overline{SW}^{OPT}>\overline{SW}^{NE}$ and $\hat{b}>0$,
\[
\begin{split}
   &\overline{SW}^{OPT}\cdot \hat{b} > \overline{SW}^{NE}\cdot \hat{b}\\
   \implies
   &\overline{SW}^{OPT}\cdot \overline{SW}^{NE} +
   \overline{SW}^{OPT}\cdot \hat{b} > 
   \overline{SW}^{OPT}\cdot \overline{SW}^{NE} +
   \overline{SW}^{NE}\cdot \hat{b}\\
   \implies
   &\overline{SW}^{OPT}(\overline{SW}^{NE} + \hat{b}) >
   \overline{SW}^{NE}(\overline{SW}^{OPT} + \hat{b})\\
   \implies
   &\frac{\overline{SW}^{OPT}}{\overline{SW}^{NE}} >
   \frac{\overline{SW}^{OPT} + \hat{b}}{\overline{SW}^{NE} + \hat{b}} = POA
\end{split}
\]
which is a contradiction. Thus $POA$ is achieved when $b_i=0,\forall i$ for the class of affine functions $\hat{g}$.
\end{proof}
We can focus on the affine utility function $\hat{g}_i(\hS_i) = a_i \hS_i$. 
We first show the lower bound of the group 1's individual utility $U_1$.
Recall that $\beta_{i,j}=\kappa_i \kappa_j \beta_0$ with $1=\kappa_1>\kappa_2>\cdots>\kappa_n>0$, so group 1 has the highest $\beta$'s. We show that $U_1$ is lowest when $\phi_1=1$. 
Let $\phi_{END}=[1,0,\cdots,0]^T$.
To show $U_1(\phi_{END}) \leq U_1(\overline{\phi}),
\forall \overline{\phi}=[\overline{\phi}_1, \overline{\phi}_2\, \cdots, \overline{\phi}_n]^T$, we show the following.

\begin{lemma}\label{lm:S1_bar_lower_bound}
$\overline{S}_1(\phi_{END}) \leq \overline{S}_1(\overline{\phi}),
\forall \overline{\phi}$.
\end{lemma}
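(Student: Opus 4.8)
The plan is to collapse this multivariate inequality into a one-dimensional comparison of fixed points. Since $\kappa_1=1$, Lemma \ref{lm:decreasing_payments} gives $\overline{S}_1=(1-\epsilon)e^{\kappa_1 X_0}=(1-\epsilon)e^{X_0}$, where $X_0=\frac{\beta_0}{\gamma}\sum_j\kappa_j\phi_j(\overline{S}_j-1)$. Because $e^{(\cdot)}$ is increasing, the lemma is equivalent to the claim that the scalar $X_0$ is minimized at $\phi_{END}$, i.e. $X_0(\phi_{END})\le X_0(\overline{\phi})$ for every profile $\overline{\phi}$. So first I would recast the target as minimizing $X_0$ over $\phi$.

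Next I would make $X_0$ intrinsic. Substituting $\overline{S}_j=(1-\epsilon)e^{\kappa_j X_0}$ back into its own definition shows that, for a fixed $\phi$, the equilibrium value $X_0(\phi)$ is the fixed point in $(-\infty,0)$ of
\[
\Psi_\phi(x)=R_0\sum_{j=1}^n\kappa_j\phi_j\big((1-\epsilon)e^{\kappa_j x}-1\big),\qquad R_0=\beta_0/\gamma .
\]
I would record the structure: $\Psi_\phi$ is convex and increasing (each summand is convex increasing with nonnegative weight $\kappa_j\phi_j$); $\Psi_\phi(0)=-R_0\epsilon\sum_j\kappa_j\phi_j<0$; and $\Psi_\phi(x)\to-R_0\sum_j\kappa_j\phi_j$ (finite) as $x\to-\infty$. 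Hence $\Phi_\phi(x):=\Psi_\phi(x)-x$ is convex with $\Phi_\phi(0)<0$ and $\Phi_\phi(-\infty)=+\infty$, so it has a unique zero $X_0(\phi)$ on $(-\infty,0]$, is positive on $(-\infty,X_0(\phi))$, and negative on $(X_0(\phi),0]$; Lemma \ref{lm:n_group_final_upper} confirms $\Psi_\phi'(X_0)<1$, consistent with this downward crossing.

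The heart of the argument is a pointwise domination $\Psi_{\phi_{END}}(x)\le\Psi_\phi(x)$ for all $x\le 0$ and all profiles $\phi$. Writing $g_x(\kappa)=\kappa\big((1-\epsilon)e^{\kappa x}-1\big)$, this reduces (using $\sum_j\phi_j=1$) to showing $g_x(\kappa)\ge g_x(1)$ for every $\kappa\in(0,1]$ and $x\le0$, i.e. that $g_x$ decreases in $\kappa$ on $(0,1]$. A short computation gives $g_x'(\kappa)=(1-\epsilon)e^{\kappa x}(1+\kappa x)-1$, and with $u=\kappa x\le 0$ this is negative once $e^u(1+u)\le 1$ for all $u\le 0$, an elementary scalar fact ($u\mapsto e^u(1+u)$ equals $1$ at $u=0$ and stays $\le 1$ on $(-\infty,0]$). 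Since $\kappa_1=1$ is the largest exponent, $\Psi_{\phi_{END}}(x)=R_0 g_x(\kappa_1)=R_0 g_x(1)\le R_0\sum_j\phi_j g_x(\kappa_j)=\Psi_\phi(x)$.

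Finally I would convert pointwise domination of the maps into an ordering of their fixed points. Evaluating $\Phi_\phi$ at $a:=X_0(\phi_{END})$ gives $\Phi_\phi(a)=\Psi_\phi(a)-a\ge\Psi_{\phi_{END}}(a)-a=\Phi_{\phi_{END}}(a)=0$, since $a$ is the fixed point of $\Psi_{\phi_{END}}$. Because $\Phi_\phi$ is positive on $(-\infty,X_0(\phi))$ and negative on $(X_0(\phi),0]$, the inequality $\Phi_\phi(a)\ge 0$ forces $a\le X_0(\phi)$, i.e. $X_0(\phi_{END})\le X_0(\overline{\phi})$, whence $\overline{S}_1(\phi_{END})\le\overline{S}_1(\overline{\phi})$. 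I expect this last conversion to be the main obstacle: pointwise domination of two maps does not in general order their fixed points, so the argument genuinely needs the convexity of $\Phi_\phi$ together with the boundary signs (and the monotone crossing guaranteed by Lemma \ref{lm:n_group_final_upper}) to locate the fixed point relative to the sign change.
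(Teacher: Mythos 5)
Your proof is correct, and it takes a genuinely different route from the paper. The paper proves the lemma by a path-differentiation argument: it parameterizes a line segment from an interior profile to $\phi_{END}$, differentiates the ratios $\overline{S}_i$ along it, assembles the derivatives into a linear system $\hat{A}\hat{D}=b$, inverts $\hat{A}$ by the Sherman--Morrison formula, and shows $\hat{D}_1>0$ (hence $\frac{\mathrm{d}\overline{S}_1}{\mathrm{d}\phi_1}<0$), with Lemma \ref{lm:n_group_final_upper} supplying the crucial sign $1-\frac{\beta_0}{\gamma}\sum_j\kappa_j^2\overline{S}_j\phi_j>0$ in the denominator. You instead collapse everything to the scalar $X_0$, characterize $X_0(\phi)$ as the unique zero of the convex function $\Phi_\phi(x)=\Psi_\phi(x)-x$ on $(-\infty,0]$, prove the pointwise domination $\Psi_{\phi_{END}}(x)\le\Psi_\phi(x)$ via the elementary inequality $e^u(1+u)\le 1$ for $u\le 0$ (your computation in fact gives $g_x'(\kappa)\le(1-\epsilon)-1=-\epsilon<0$, so the domination is strict whenever $\phi\neq\phi_{END}$), and then order the fixed points via the sign pattern of $\Phi_\phi$ forced by convexity together with $\Phi_\phi(0)<0$ and $\Phi_\phi(x)\to+\infty$ as $x\to-\infty$. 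The obstacle you flag at the end is handled correctly: pointwise domination alone does not order fixed points of arbitrary maps, but your sign-structure argument (strictly positive left of the unique zero, strictly negative on the open interval to its right, by taking convex combinations with the point $0$) does exactly the needed work; and the identification of the dynamical value with the unique zero in $(-\infty,0)$ is legitimate because the paper's standing fact is $X_0<0$. As for what each approach buys: the paper's calculus yields local sensitivity information ($\overline{S}_1$ strictly decreasing in $\phi_1$ along the segment) at the cost of a matrix inversion and a WLOG restriction to strictly positive components, whereas your argument is more elementary, avoids linear algebra entirely, applies uniformly to boundary profiles with some $\overline{\phi}_j=0$, and renders Lemma \ref{lm:n_group_final_upper} inessential --- in your proof it appears only as a consistency check that $\Psi_\phi'(X_0)<1$, while the paper's proof cannot do without it.
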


Thus $U_1(\phi_{END})\leq U_1(\overline{\phi})$, for all $\overline{\phi}$.
Group 1's individual utility is lowest when its group size is 1.
For any Nash equilibrium point $\phi^{NE}$ with the corresponding individual utility $N$, if group 1 is participating, $N=U_1(\phi^{NE}) \geq U_1(\phi_{END})$; if group 1 is not participating, $N\geq U_1(\phi^{NE}) \geq U_1(\phi_{END})$.
Therefore $U_1(\phi_{END})$ is a lower bound of individual utility of any Nash equilibrium.

When at $\phi_{END}=[1,0,0,\cdots ,0]^T$, the entire population is in group 1, there is no interaction with other groups. 
We may apply the lower bound $LB$ of $\overline{S}_1$ with $\phi_1=1$ from the separable model obtained in {\bf Section \ref{sec:sep_LB}}.
Since $\phi_1=1$, $\overline{S}_1=S_1$.
Let the individual utility at Nash equilibrium be $N>0$, we get
$N \geq  U_1(\phi_{END})=a_1\overline{S}_1(\phi_{END})
    \geq \frac{a_1}
    {B_1}$,
where $B_1 = e^{\frac{\beta_0}{\gamma}} / (1 - \epsilon) - \frac{\beta_0}{\gamma}$.
The social optimum is
\[OPT = \max_{\phi} \sum_i UG_i 
=\max_\phi \sum_i \phi_i a_i \hS_i
= \max_\phi \sum_i a_i S_i \]
Since for all $i$, $S_i\leq S_i(0) = (1 - \epsilon) \phi_i$,
\begin{align*}
OPT &\leq \max_\phi \sum_i a_i(1-\epsilon)\phi_i
\leq (1-\epsilon)a_1\sum_i\phi_i = (1-\epsilon)a_1
\end{align*}
Assume $N$ to be fixed, we set up the following maximization program with variables $a_1$:
\begin{align}\label{prog:decomp_OPT}
    \max_{a_1}\quad & (1-\epsilon)a_1\\
    \mathrm{s.t.}\quad & N \geq \frac{a_1}{B_1}\nonumber
\end{align}
The optimum value of program (\ref{prog:decomp_OPT}) is 
$(1-\epsilon)N B_1$.
The price of anarchy(POA) is bounded by the following
\begin{align*}
    POA = \frac{OPT}{N}
    \leq \frac{(1-\epsilon)NB_1}{N} \leq  (1 - \epsilon) B_1 
    =e^{R_0} - (1 - \epsilon) R_0 \leq e^{R_0},
\end{align*}
where $R_0=\frac{\beta_0}{\gamma}$ is the largest reproduction number. 
We summarize the results as:
\begin{theorem}\label{thm:decomp_POA}
The price of anarchy for the contagion game with uniform interaction policy sets is bounded above  by $e^{R_0}$ 
where $R_0$ is the maximum reproduction number of the contagion over all policy sets.
\end{theorem}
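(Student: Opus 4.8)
The plan is to assemble the three reduction lemmas already in hand and then carry out a direct optimization over the remaining free parameter. First I would invoke Lemma~\ref{lm:POA_fg} to replace each concave utility $h_i$ by the affine function $g_i$ tangent to it at the Nash point; since this only increases the ratio, it suffices to bound $POA_g$. Next, by Lemma~\ref{lm:POA_affine} the worst case among affine utilities $\hat{g}_i(\hS_i)=a_i\hS_i+b_i$ is attained at $b_i=0$, so I may restrict attention to the purely linear family $\hat{g}_i(\hS_i)=a_i\hS_i$, and (by the argument of Lemma~\ref{lm:decreasing_payments}) assume $a_1>a_2>\cdots>a_n$, i.e.\ group $1$ carries both the largest coefficient and, through $\kappa_1=1$, the strongest transmission.

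The second step is to lower-bound the social welfare of every Nash equilibrium. At a Nash equilibrium with common individual utility $N$, the welfare equals $\sum_i\phi_i U_i = N\sum_i\phi_i = N$, because every participating group attains utility exactly $N$ and the nonparticipating groups contribute $\phi_i=0$. I then argue that $N\ge U_1(\phi_{END})$ with $\phi_{END}=[1,0,\ldots,0]^T$: by Lemma~\ref{lm:S1_bar_lower_bound} the quantity $\hS_1$ is minimized at $\phi_{END}$, so $U_1$ is minimized there as well, and because group $1$ has the largest coefficient $a_1$, at any equilibrium $N\ge U_1(\phi^{NE})\ge U_1(\phi_{END})$ whether or not group $1$ participates. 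Substituting the separable lower bound on $\hS_1$ at $\phi_1=1$ from Section~\ref{sec:sep_LB} gives $N\ge U_1(\phi_{END})=a_1\hS_1(\phi_{END})\ge a_1/B_1$, where $B_1=e^{R_0}/(1-\epsilon)-R_0$.

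The third step bounds the optimum from above. Since each final size obeys $S_i\le S_i(0)=(1-\epsilon)\phi_i$, I obtain $OPT=\max_\phi\sum_i a_iS_i\le(1-\epsilon)a_1\sum_i\phi_i=(1-\epsilon)a_1$. Holding $N$ fixed, the extremal choice of $a_1$ solves $\max_{a_1}(1-\epsilon)a_1$ subject to $N\ge a_1/B_1$, whose value is $(1-\epsilon)NB_1$. Dividing by the Nash welfare $N$ then yields
\[
POA\le\frac{(1-\epsilon)NB_1}{N}=(1-\epsilon)B_1=e^{R_0}-(1-\epsilon)R_0\le e^{R_0}.
\]

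I expect the main obstacle to be the justification that $U_1(\phi_{END})$ is a uniform lower bound over the \emph{entire} set of Nash equilibria, not merely over those in which group $1$ participates. This rests on combining the monotonicity statement of Lemma~\ref{lm:S1_bar_lower_bound} with the ordering $a_1>\cdots>a_n$: one must verify that even when group $1$ sits out of an equilibrium, the common utility of the participating lower-coefficient groups cannot fall below the value group $1$ would secure on its own, which is precisely what guarantees $N\ge U_1(\phi_{END})$ across all equilibria. Once the linear reduction and this lower bound are secured, the remaining inequalities and the single-variable optimization are routine.
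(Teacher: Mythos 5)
Your proposal is correct and follows essentially the same route as the paper's proof: reduce to tangent affine utilities via Lemma~\ref{lm:POA_fg}, drop the intercepts via Lemma~\ref{lm:POA_affine}, lower-bound the Nash welfare $N$ by $U_1(\phi_{END})=a_1\hS_1(\phi_{END})\geq a_1/B_1$ using Lemma~\ref{lm:S1_bar_lower_bound} and the separable bound of Section~\ref{sec:sep_LB}, bound $OPT\leq(1-\epsilon)a_1$, and optimize over $a_1$. The obstacle you flag is resolved exactly as you anticipate: when group $1$ does not participate, the Nash condition itself gives $N\geq U_1(\phi^{NE})$, and Lemma~\ref{lm:S1_bar_lower_bound} gives $U_1(\phi^{NE})\geq U_1(\phi_{END})$, which is precisely the paper's two-case argument.
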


\section{Policy Sets over an Interacting Network}
\label{sec:network_policy}

In this section we consider the network interacting model.
In order to analyze this model we need some notations.
Denote $X_i^v=\sum_{(u,j)} \frac{\beta_{i,j}^{v,u}}{\gamma}(S_j^u - \phi_j^u)$. As in the case of the interaction model, {\bf Model A2}, the analysis in \cite{magal2016final} indicates that the final size satisfies $S_i^v = S_i^v(0)\cdot e^{X_i^v} = (1-\epsilon)\phi_i^v \cdot e^{X_i^v}$. 
Denote $\overline{S}_i^v =S_i^v/\phi_i^v = (1-\epsilon) e^{X_i^v}$.

In order to consider the computation of Nash equilibrium, we recall the details of the game theoretic model. 
The individual utility of group $i$ at node $v$ is  $U_i^v = p_i^v\cdot h_i^v(\hS_i^v)$. 
The group utility of group $i$ at node $v$ is $UG_i^v = \phi_i^v U_i^v$. 
Again, the individual utility can be evaluated even when $\phi_i^v=0$.
A Nash equilibrium $\mathcal{N}$ with $\phi^{\mathcal{N}} = [\cdots, \phi_1^{v{\mathcal{N}}},\cdots,\phi_n^{v{\mathcal{N}}},\cdots]^T, \forall v$, is achieved when for all node $v$, $\forall i| \phi_i^{v{\mathcal{N}}} > 0 \implies U_i^v \geq U_j^v, \forall j$
, i.e. the population in node $v$ is only in the group(s) with the highest individual utility.
In this version of game, each group $(v,i)$ only competes with other groups within the same node $v$. We may apply the same mapping function from {\bf Theorem \ref{thm:NE_existence}} separately within every node. This gives an overall mapping function satisfying {\bf Brouwer's fixed-point theorem}, therefore  Nash equilibrium exists. 

For the case of {\bf Model B1}, we note that 
{\bf Section \ref{SIR_model1}} is a special case of the network version where there is only 1 node in the network. Therefore we have a direct reduction from {\bf Model 1}, showing that this case is PPAD-hard.  
While the complexity of {\bf Model B2}  is left as unknown, we next consider the last model {\bf Model B3}. 

    




\subsection{Polynomial Time Convex Programming Approach to Compute the Final Size in the   Network Interaction Model}\label{sec:net_convex}
Given a joint strategy $\phi$ over the network, we first present a similar convex program to compute the final sizes. Given a point $s = [\cdots,s_i^v,\cdots]^T$, define function $f_i^v(s) = s_i^v - (1-\epsilon)\phi_i^v \cdot e^{X_i^v}, \forall (v,i)$. The following convex program finds the final sizes as its unique optimum solution.
\begin{align*}
    \min_s\quad &\sum_{(v, i)} s_i^v&\\
    \mathrm{s.t.}\quad &f_i^v(s)\geq 0,\quad  \forall (v,i)\\
    &0\leq s_i^v\leq(1-\epsilon)\phi_i^v,\quad \forall (v,i)
\end{align*}
Note that the interaction between any pair of groups $(v,i)$ and $(u,j)$, $\beta_{i,j}^{v,u} = \overline{\kappa}_i^v \cdot \overline{\kappa}_j^u \cdot \beta_0$.
We compare the function $f_i^v$ with the function $f_i$ in {\bf Section \ref{sec:decomp_convex}} as listed below:
\begin{align*}
    f_i^v(s) &= s_i^v - (1-\epsilon)\phi_i^v\cdot e^{X_i^v},\quad \forall (v,i),\quad
    \text{where } X_i^v = \sum_{(u,j)} \frac{\beta_{i,j}^{v,u}}{\gamma}(s_j^u - \phi_j^u)
    =\overline{\kappa}_i^v \sum_{(u,j)} \frac{\overline{\kappa}_j^u \beta_0}{\gamma} (s_j^u - \phi_j^u)\\
    f_i(s) &= s_i - (1-\epsilon)\phi_i\cdot e^{X_i}, \quad \forall i,\quad
    \text{where } X_i = \sum_j \frac{\beta_{i,j}}{\gamma}(s_j - \phi_j)
    =\kappa_i \sum_j \frac{\kappa_j \beta_0}{\gamma} (s_j - \phi_j)
\end{align*}
If we map every group $(v,i)$ into $i$ and substitute $\overline{\kappa}$ into $\kappa$, $f_i^v$ becomes equivalent to $f_i$, therefore $f_i^v$ is concave,
the proof of the correctness of the convex program in {\bf Theorem \ref{thm:decomp_convex_program}} also applies. 

\subsection{Computing the Nash Equilibrium in the Uniform  Network Interaction Model}
In this subsection we discuss algorithms to compute the Nash equilibrium. We first prove that at Nash equilibrium either there is  a node with two policy groups participating in the equilibrium or  all nodes of the network have only one policy group participating in the Nash equilibrium. 
We assume that in every node $v$, $p_1^v> p_1^v> \cdots p_n^v$, the proof being similar to {\bf Lemma \ref{lm:decreasing_payments}}.
Denote $X_0 = \sum_{(u,j)}  \frac{\overline{\kappa}_j^u \beta_0}{\gamma} (S_j^u - \phi_j^u)$, $X_0<0$. 
For all group $(v,i)$, 
$X_i^v = \sum_{(u,j)} \frac{\beta_{i,j}^{v,u}}{\gamma}(S_j^u - \phi_j^u) 
= \overline{\kappa}_i^v X_0$.
In a Nash equilibrium $\phi^*$, at node $v$, denote by $N_v$ node $v$'s highest individual utility, and $\overline{S}_i^v =\frac{S_i^v}{\phi_i^v}$. 
Let $NE_v = \{ i | \phi_i^{v*} >0 \}$ be the set of policy groups participating in the equilibrium in node $v$. 
Denote $\hp_i^v=(p_i^v)^{1/d^v}$.
For all $i<j$, $p_i^v>p_j^v\implies\hp_i^v>\hp_j^v$.
For all $i \in NE_v$,
$
N_v = p_i^v\cdot h_i^v(\overline{S}_i^v)
\implies N_v = h^v(\hp_i^v\hS_i^v)$.
For all $i \notin NE_v$, $N_v \geq p_i^v \cdot h_i^v(\overline{S}_i^v)$.
Denote $\hN_v=(h^v)^{-1}(N_v)$, we have
\[
\begin{cases}
\forall i \in NE_v,\quad {\phi_i^v}^*>0,\quad \hp_i^v\cdot \hS_i^v=\hN_v,\\
\forall i\notin NE_v,\quad {\phi_i^v}^*=0,\quad \hp_i^v\cdot \hS_i^v\leq\hN_v
\end{cases}
\]
Since for all $i \notin NE_v$, ${\phi_i^v}^*=0$ and $S_i^v=0$,
\begin{align*}
X_0 &=\sum_{(u,j)}^n \frac{\overline{\kappa}_j^u \beta_0}{\gamma}(S_j^u - \phi_j^{u*}) 
= \frac{\beta_0}{\gamma}\sum_u \sum_{j\in NE_u} \overline{\kappa}_j^u   \phi_j^{u*} (\overline{S}_j^u - 1)\\
&\text{(Replace $\overline{S}_j^u$ by $\frac{\hN_u}{\hp_j^u}$, $\forall j \in NE_u$)}\quad
= \frac{\beta_0}{\gamma}\sum_{j\in NE} \overline{\kappa}_j^u\phi_j^{u*} (\frac{\hN_u}{\hp_j^u} - 1)
\end{align*}
For all $i\in NE_v$,
$\frac{N_v}{p_i^v} = (1-\epsilon)exp\Big(\frac{\overline{\kappa}_i^v \beta_0}{\gamma} \sum_u \sum_{j \in NE_u} \overline{\kappa}_j^u \phi_j^{u*} (\frac{N_u}{p_j^u} - 1)\Big)\implies $
\begin{align*}
    \frac{\gamma}{\overline{\kappa}_i^v \beta_0} \ln\frac{N_v}{(1-\epsilon)p_i^v}=
    \sum_u \sum_{j\in NE_u} \overline{\kappa}_j^u (\frac{N_u}{p_j^u} - 1)\phi_j^{u*}
\end{align*}
The Nash equilibrium $\phi^*$ satisfies the following system of inequalities over the vector-valued variable $\phi$ that defines a polytope over the space of non-negative $\phi$:
\[
\begin{cases}
    \sum_v\sum_{i\in NE_v} \overline{\kappa}_i^v(\frac{\hN_v}{\hp_i^v} - 1)\phi_i^{v} = X_0\\
    \sum_{i\in NE_v} \phi_i^{v} = 1, \quad \forall v\\
    \phi_i^{v}\geq 0, \quad\forall (i,v)| i\in NE_v
\end{cases}
\]
Note that $X_0$, which is calculated from $\phi^*$, is a constant independent to the variable $\phi$.
The rank of this polytope is $m+1$, but because of $\sum_{i\in NE_v} \phi_i^{v} = 1,\forall v$, the rank is at least $m$, there is at least one positive component in $\phi^v$, for every node $v$. 
Give any Nash equilibrium $\phi^*$, we can construct a new equilibrium $\phi$ satisfying one of the two following cases.
\begin{enumerate}[label=(\roman*)]
\item 
\label{NetworkCase1}
Only one node $v$ has two groups $i,j$ participating in the Nash equilibrium, with $U_i^v = U_j^v \geq U_l^v,\forall l$ and $\phi_i^{v}, \phi_j^{v}>0$, $\phi_l^{v}=0, \forall l\neq i,j$. The rest of the nodes all have only 1 dominating group, namely $i$, with $U_i^v \geq U_j^v, \forall j$ and $\phi_i^{v}=1$, $\phi_j^{v}=0, \forall j\neq i$.
\item 
\label{NetworkCase2}
Every node $v$ has only one group $i$ dominating all other groups.
\end{enumerate}
We proceed to present algorithms to compute the Nash equilibrium in both cases.\\ \\
\noindent
{\bf{Case \ref{NetworkCase1}}:}
In this case we determine the node  that has two groups participating in the Nash equilibrium. To do so, we iterate over every possible candidate combination of node $v$ and group $i,j$ in $v$ (this is a polynomial number of combinations). Assume node $v$ is the node with 2 groups in the equilibrium, namely $i,j$. 
\begin{align*}
    &U_i^v = U_j^v\implies
    \hp_i^v \overline{S}_i^v = \hp_j^v \overline{S}_j^v \implies
    \hp_i^v e^{X_i^v} = \hp_j^v e^{X_j^v}\\
    &\implies
    \ln \frac{\hp_i^v}{\hp_j^v} = X_j^v - X_i^v = (\overline{\kappa}_j^v - \overline{\kappa}_i^v)X_0
    \implies X_0 = \frac{1}{\overline{\kappa}_j^v - \overline{\kappa}_i^v}\ln \frac{\hp_i^v}{\hp_j^v}
\end{align*}
We denote  $X_{i,j}^v = \frac{1}{\overline{\kappa}_j^v - \overline{\kappa}_i^v}\ln \frac{\hp_i^v}{\hp_j^v}$.
Note that $X_0$ and hence $X_{i,j}^v$ is always negative.
Now we can compute every group's individual utility, across all nodes.
$
U_l^v = p_l^v \cdot h^v((1-\epsilon) e^{\overline{\kappa}_l^v X_{i,j}^v}$), $\forall (v,l)$.
We first check if $U_i^v$ and $U_j^v$ indeed dominate all other groups in node $v$ by comparing them to all other groups' individual utility $U_l^v,\forall l\neq i,j$.
If not, we move to the next candidate $(v,i,j)$. For each node $u\neq v$, we find the group $l$ with the highest individual utility and set $\phi_l^u=1$. We solve for $\phi_i^v, \phi_j^v$ from the following equations.
\begin{equation}
\label{eq:phi_ij}
\begin{cases}
    \sum_v\sum_{i\in NE_v} \overline{\kappa}_i^v(\frac{\hN_v}{\hp_i^v} - 1)\phi_i^{v} = X_{i,j}^v\\
    \phi_i^v + \phi_j^v = 1
\end{cases}   
\end{equation}
If both $\phi_i^v, \phi_j^v$ are non-negative, we have found a Nash equilibrium. Otherwise we move onto the next candidate $(v,i,j)$. The number of candidates is $O(mn^2)$, for each candidate we spend $O(mn)$ steps. The time complexity for {\bf Case \ref{NetworkCase1}} is $O(m^2n^3)$.
We summarize the algorithm in {\bf Algorithm \ref{alg:net1}} in {\bf Appendix \ref{appendix:algo}}.

\noindent
{\bf{Case \ref{NetworkCase2}}:}
Recall the value $X_{i,j}^v = \frac{1}{\overline{\kappa}_j^v - \overline{\kappa}_i^v}\ln \frac{\hp_i^v}{\hp_j^v}$. When $X_0= X_{i,j}^v$, we have the property  that in node $v$ the individual utility of group $i$ and $j$ are equal. Without loss of generality, assume $i<j$, thus $\hp_i^v > \hp_j^v$ and $\overline{\kappa}_i^v > \overline{\kappa}_j^v$. We show the following.
\begin{lemma}\label{lm:indi_utility_range}
    When $X_0 < X_{i,j}^v$, $U_i^v < U_j^v$, and vice versa.
\end{lemma}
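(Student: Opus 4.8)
The plan is to reduce the claim to a single chain of equivalences by exploiting monotonicity and homogeneity of $h^v$, and then to track signs carefully. First I would recall that, because $h^v$ is homogeneous of degree $d^v$ and increasing, the individual utilities can be written as $U_i^v = p_i^v\, h^v(\hS_i^v) = h^v(\hp_i^v \hS_i^v)$, where $\hp_i^v = (p_i^v)^{1/d^v}$. Since $h^v$ is strictly increasing, comparing $U_i^v$ and $U_j^v$ is equivalent to comparing the arguments $\hp_i^v \hS_i^v$ and $\hp_j^v \hS_j^v$. Thus the inequality $U_i^v < U_j^v$ holds if and only if $\hp_i^v \hS_i^v < \hp_j^v \hS_j^v$.

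Next I would substitute the explicit form of the final size. Recall from the start of this section that $\hS_i^v = (1-\epsilon)\, e^{\overline{\kappa}_i^v X_0}$, so the factor $(1-\epsilon)$ is common and cancels. Hence $U_i^v < U_j^v$ is equivalent to
\[
\hp_i^v\, e^{\overline{\kappa}_i^v X_0} < \hp_j^v\, e^{\overline{\kappa}_j^v X_0}.
\]
Taking logarithms (both sides are positive) turns this into $\ln \hp_i^v + \overline{\kappa}_i^v X_0 < \ln \hp_j^v + \overline{\kappa}_j^v X_0$, which rearranges to $\ln\frac{\hp_i^v}{\hp_j^v} < (\overline{\kappa}_j^v - \overline{\kappa}_i^v) X_0$.

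The crux of the argument is the final division. Since we assume $i<j$, the ordering $\overline{\kappa}_i^v > \overline{\kappa}_j^v$ gives $\overline{\kappa}_j^v - \overline{\kappa}_i^v < 0$; dividing both sides of the last inequality by this negative quantity reverses the direction, yielding
\[
\frac{1}{\overline{\kappa}_j^v - \overline{\kappa}_i^v}\ln\frac{\hp_i^v}{\hp_j^v} > X_0.
\]
The left-hand side is exactly $X_{i,j}^v$ by definition, so $U_i^v < U_j^v \iff X_0 < X_{i,j}^v$, which is the first assertion. Because every step above is an equivalence, reversing the strict inequalities throughout gives $U_i^v > U_j^v \iff X_0 > X_{i,j}^v$, establishing the ``vice versa'' direction as well. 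I expect the only subtlety to be this sign bookkeeping around the negative factor $\overline{\kappa}_j^v - \overline{\kappa}_i^v$, which is what makes the monotone dependence of the utility gap on $X_0$ come out in the stated direction; the rest is a routine substitution.
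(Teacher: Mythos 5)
Your proof is correct and takes essentially the same route as the paper: both reduce the comparison of $U_i^v$ and $U_j^v$, via homogeneity and strict monotonicity of $h^v$, to comparing $\hp_i^v e^{\overline{\kappa}_i^v X_0}$ with $\hp_j^v e^{\overline{\kappa}_j^v X_0}$, and both hinge on the sign of $\overline{\kappa}_i^v - \overline{\kappa}_j^v$ --- the paper multiplies by the positive factor $(\overline{\kappa}_i^v - \overline{\kappa}_j^v)$ and exponentiates, while you take logarithms and divide by the negative factor $(\overline{\kappa}_j^v - \overline{\kappa}_i^v)$, a mirror image of the same manipulation. Your single chain of equivalences delivers the ``vice versa'' direction in one stroke, where the paper handles it with a closing ``similarly''; this is a cosmetic, not substantive, difference.
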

\begin{proof}
    \begin{align*}
    &X_0 < X_{i,j}^v \implies
    (\overline{\kappa}_i^v - \overline{\kappa}_j^v)X_0 
    < (\overline{\kappa}_i^v - \overline{\kappa}_j^v) X_{i,j}^v \implies\\
    &exp\Big((\overline{\kappa}_i^v - \overline{\kappa}_j^v)X_0\Big) 
    <
    exp\Big((\overline{\kappa}_i^v - \overline{\kappa}_j^v) X_{i,j}^v \Big)
    = \frac{\hp_j^v}{\hp_i^v} \implies
    \frac{\overline{S}_i^v}{\overline{S}_j^v} < \frac{p_j^v}{p_i^v} \implies
    U_i^v < U_j^v
\end{align*}
Similarly, when $X_0 > X_{i,j}^v$, $U_i^v > U_j^v$.
\end{proof}

\begin{figure}
    \centering
    \includegraphics[width = 0.6\textwidth]{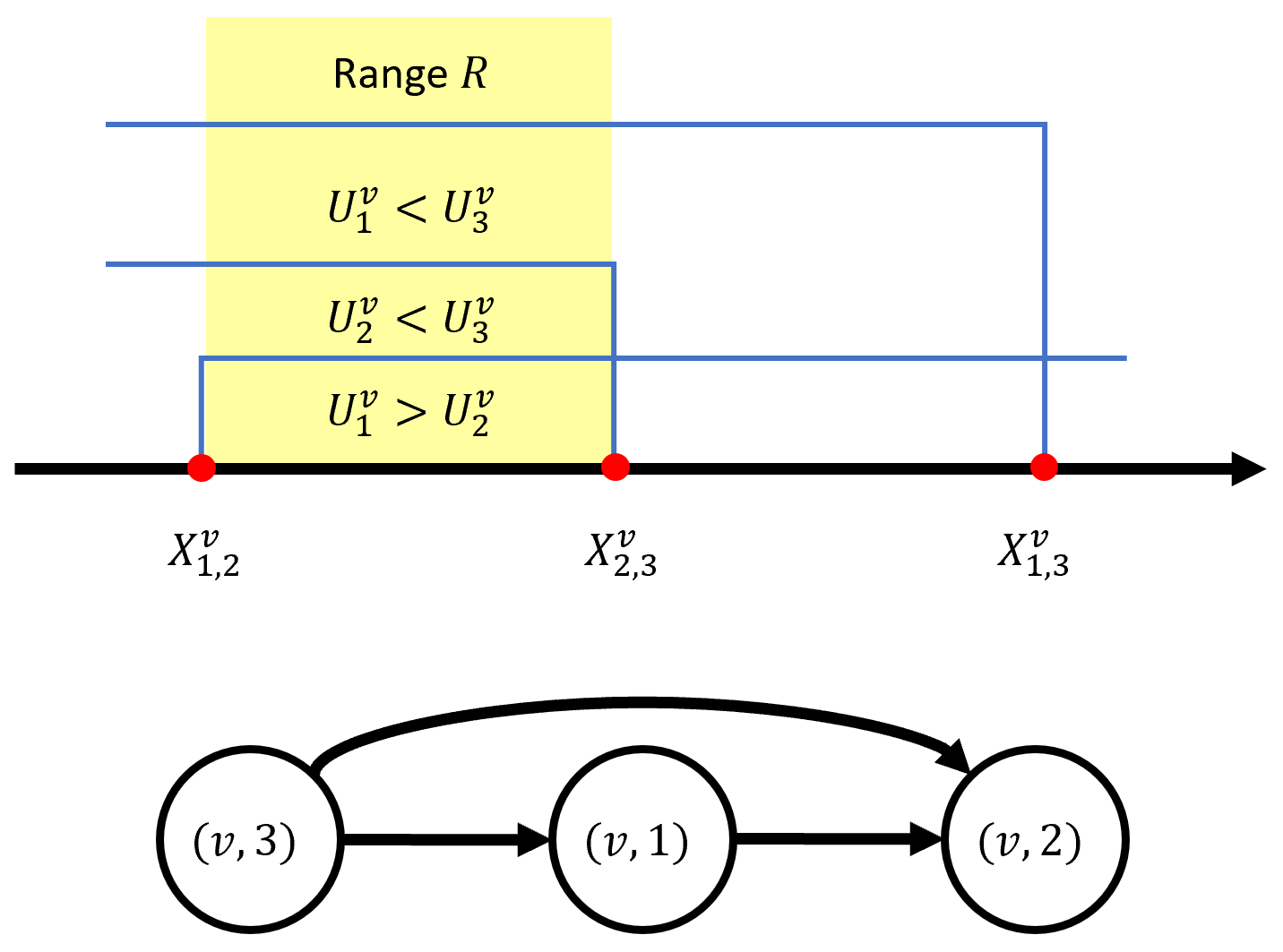}
    \caption{Axis of $X_0$. When $X_0$ lies within a specific range $R$, $G_v$ is constructed.}
    \label{fig:X0_axis}
\end{figure}
We may determine for all group pair $(v,i),(v,j)$, which group's individual utility is higher based on the location of $X_0$ with respect to $X_{i,j}^v$, illustrated in {\bf Figure \ref{fig:X0_axis}}.
On the axis of the value of $X_0$, in each node $v$, every pair of group $(v,i)$ and $(v,j)$ defines a point $X_{i,j}^v$, termed an event point. 
Sort all event points on the axis, each pair of adjacent points defines a range. 
Assume at the Nash equilibrium, the value of $X_0$ is within a specific range $R$, we can construct a graph $G_v$ representing the relationship between each group's individual utility in node $v$. We create a vertex for each group $(v,i)$, and a directed edge from $(v,i)$ to $(v,j)$ if $U_i^v \geq U_j^v$. This is a directed tournament graph with an edge between every pair of nodes. We next prove $G_v$ is acyclic.
\begin{lemma}
    The relationship graph $G_v$ is acyclic.
\end{lemma}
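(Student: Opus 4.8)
The plan is to observe that for any fixed value of $X_0$ in the interior of the range $R$, the edges of $G_v$ merely encode the comparison of the $n$ real numbers $U_1^v, \ldots, U_n^v$, so $G_v$ is nothing more than the tournament induced by a total order on the real line and must therefore be transitive, hence acyclic. The whole argument reduces to showing that this comparison is a genuine \emph{strict} total order inside $R$.

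First I would argue that within the open range $R$ no two utilities coincide. The utility $U_i^v = p_i^v \cdot h^v((1-\epsilon) e^{\overline{\kappa}_i^v X_0})$ is a fixed real number once $X_0$ is fixed, and by \textbf{Lemma \ref{lm:indi_utility_range}} the equality $U_i^v = U_j^v$ holds exactly when $X_0 = X_{i,j}^v$, i.e.\ precisely at the event point associated with the pair $(i,j)$. Since $R$ is delimited by adjacent sorted event points and its interior contains none of them, every pair $i \neq j$ satisfies a strict inequality there; in particular $X_0$ lies on a definite side of each $X_{i,j}^v$, so by \textbf{Lemma \ref{lm:indi_utility_range}} the direction of the edge between $(v,i)$ and $(v,j)$ is unambiguous and well defined.

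Next I would rule out cycles. Suppose for contradiction that $G_v$ contained a directed cycle $(v,i_1) \to (v,i_2) \to \cdots \to (v,i_k) \to (v,i_1)$. By the definition of the edges, each arc $(v,i_\ell) \to (v,i_{\ell+1})$ gives $U_{i_\ell}^v \geq U_{i_{\ell+1}}^v$, and by the strict distinctness established above this inequality is in fact strict. Chaining the strict inequalities around the cycle yields $U_{i_1}^v > U_{i_1}^v$, a contradiction. Hence $G_v$ has no directed cycle and is acyclic; equivalently it is the transitive tournament determined by the strict total order of the values $\{U_i^v\}$, whose unique source is the group of maximal individual utility in node $v$ (the candidate solution referenced in the construction).

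The only point requiring care, and the one I would flag as the crux, is the tie handling at the boundary: the argument relies on $R$ being an open interval whose interior avoids all event points, so that every pairwise comparison is strict. If $X_0$ were permitted to sit exactly on an event point $X_{i,j}^v$, then $U_i^v = U_j^v$ would produce the mutual pair of arcs $(v,i) \to (v,j)$ and $(v,j) \to (v,i)$, i.e.\ a $2$-cycle; this is exactly why the ranges are taken strictly between consecutive sorted event points rather than including their endpoints.
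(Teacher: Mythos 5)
Your proof is correct and follows essentially the same route as the paper: assume a directed cycle in $G_v$ and chain the utility inequalities around it to reach the contradiction $U_i^v > U_i^v$. Your extra care about strictness is a welcome tightening rather than a different approach --- the paper defines edges via $U_i^v \geq U_j^v$ yet silently chains \emph{strict} inequalities in its cycle argument, and your observation that \textbf{Lemma \ref{lm:indi_utility_range}} forces all pairwise comparisons to be strict in the interior of a range $R$ (ties occurring only at the event points $X_{i,j}^v$ themselves) is exactly the detail needed to make that step airtight.
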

\begin{proof}
    Assume there is a directed cycle in $G_v$, $(v,i)\rightarrow(v,j)\rightarrow \cdots \rightarrow(v,i)$. This implies that when $X_0$ is in the current range, $U_i^v > U_j^v > \cdots > U_i^v\implies U_i^v>U_i^v$, which is impossible.
\end{proof}
We can then perform topological sort on $G_v$. Since $G_v$ is acyclic and there is an edge between every pair of nodes, there is one unique source vertex $(v,i^*)$, representing a group whose individual utility dominates all other groups in node $v$. Since in {\bf Case \ref{NetworkCase2}} we assume there is only one group participating in the Nash equilibrium, we have $\phi_{i^*}^v=1$ and $\phi_j^v=0, \forall j\neq i^*$.

There are $O(n^2)$ event points for each node $v$, and $O(mn^2)$ event points in total for the network.
This gives $O(mn^2)$ value ranges in total on the axis of $X_0$.
When $X_0$ is within any specific range $R$, we can determine the relationship graph $G_v$ for every node $v$.
Sorting all event points on the space of  $X_0$ gives all value ranges of $X_0$ in $O(mn^2 \log mn^2)$ steps. 
 For every range $R$, we calculate $\phi$ by performing $m$ topological sorts to find the source vertex $(i^*,v)$ and set $\phi_{i^*}^v=1$ for each $G_v$.
For each non-source vertex $(j,v)$, set $\phi_j^v=0$. 
Thus the entire vector $\phi$ can be computed in $O(mn^2)$ steps in total. 
With vector $\phi$ computed, we compute the final sizes $S$ using the convex program in {\bf Section \ref{sec:net_convex}} in polynomial number of steps.
Lastly we compute $X_0=\sum_{(u,j)} \frac{\overline{\kappa}_j^u \beta_0}{\gamma}(S_j^u - \phi_j^{u}) $ in $O(mn)$ steps and check if $X_0$ is within the current range $R$.
If yes, we have found a Nash equilibrium. If not, we move to test the next range of $X_0$.
{\bf Case \ref{NetworkCase2}} finishes in polynomial number of steps. We summarize the method in {\bf Algorithm \ref{alg:net2}} in {\bf Appendix \ref{appendix:algo}}.

Since we showed that the Nash equilibrium always exists in the beginning of {\bf Section \ref{sec:network_policy}},
we are guaranteed to obtain at least one Nash equilibrium from {\bf Case \ref{NetworkCase1} \& \ref{NetworkCase2}} in polynomial number of steps. In this abstract we ignore numerical precision errors and the stopping conditions of the convex program in line 7 of the algorithm. 
\begin{theorem}\label{thm:net_Nash_algo}
    The Nash equilibrium of contagion game with uniform network policy sets can be computed in polynomial time.
\end{theorem}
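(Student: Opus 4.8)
The plan is to combine the already-established existence of a Nash equilibrium with the structural reduction into the two cases, and then to show that each case admits an exhaustive polynomial-time search. Existence was settled at the start of Section~\ref{sec:network_policy} by applying the node-wise mapping of Theorem~\ref{thm:NE_existence} and invoking Brouwer's fixed-point theorem, so it suffices to exhibit a polynomial-time procedure that is \emph{guaranteed} to return at least one equilibrium.

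The simplification I would foreground is that the uniform interaction assumption makes every group's exponent factor through a single global scalar: $X_i^v=\overline{\kappa}_i^v X_0$, where $X_0=\sum_{(u,j)}\frac{\overline{\kappa}_j^u\beta_0}{\gamma}(S_j^u-\phi_j^u)$. Writing the equilibrium conditions as a linear system over $\phi$ yields a polytope of rank $m+1$; the $m$ normalization constraints $\sum_{i\in NE_v}\phi_i^v=1$ force at least one positive coordinate per node, so a basic feasible solution carries at most one extra positive coordinate, i.e. at most one node with two participating groups. This is exactly the dichotomy between Case~\ref{NetworkCase1} and Case~\ref{NetworkCase2}, and it lets me replace any equilibrium by a structurally simple one with identical utilities.

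For Case~\ref{NetworkCase1} I would enumerate the $O(mn^2)$ candidate triples $(v,i,j)$. Equating $U_i^v=U_j^v$ pins $X_0$ to the closed form $X_{i,j}^v=\frac{1}{\overline{\kappa}_j^v-\overline{\kappa}_i^v}\ln\frac{\hp_i^v}{\hp_j^v}$, which then determines every utility $U_l^u=p_l^u h^u((1-\epsilon)e^{\overline{\kappa}_l^u X_{i,j}^v})$ across the network. After checking that $(v,i),(v,j)$ dominate their node and selecting the unique dominating group elsewhere, the two unknown sizes solve the linear system~(\ref{eq:phi_ij}); non-negativity of the solution certifies an equilibrium. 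Each candidate costs $O(mn)$, for $O(m^2n^3)$ overall.

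The heart of the argument, and the step I expect to be the main obstacle, is Case~\ref{NetworkCase2}, where a naive search over which single group survives at each node is exponential ($n^m$). The device that defeats this blow-up is the monotonicity of pairwise comparisons in Lemma~\ref{lm:indi_utility_range}: the sign of $U_i^v-U_j^v$ flips precisely as $X_0$ crosses $X_{i,j}^v$. I would collect the $O(mn^2)$ event points, sort them into $O(mn^2)$ ranges of $X_0$, and observe that within a fixed range each node's dominance tournament $G_v$ is determined and acyclic, hence has a unique topological source giving the surviving group; this produces one candidate $\phi$ per range. For each range I then solve the final-size convex program of Section~\ref{sec:net_convex}, recompute $X_0$ from the resulting sizes, and accept the candidate iff $X_0$ falls back into the range that generated it. This self-consistency check on the scalar $X_0$ is the crux: it is precisely the fixed-point condition tying the combinatorial guess to the SIR dynamics, and the existence result guarantees that some candidate across the two cases passes it. Since both searches run in polynomial time and their union is exhaustive over the reduced equilibrium structures, the algorithm computes a Nash equilibrium in polynomial time, modulo the approximation inherent in the convex program which the abstract sets aside.
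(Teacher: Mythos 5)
Your proposal is correct and follows essentially the same route as the paper: the polytope rank-$(m+1)$ argument yielding the dichotomy between Case~\ref{NetworkCase1} and Case~\ref{NetworkCase2}, the $O(m^2n^3)$ enumeration of triples $(v,i,j)$ with the closed form for $X_{i,j}^v$ and the linear system~(\ref{eq:phi_ij}), and the event-point/acyclic-tournament machinery of Lemma~\ref{lm:indi_utility_range} with the self-consistency check that the recomputed $X_0$ lands in its generating range. Even the final appeal to existence (Theorem~\ref{thm:NE_existence}, applied node-wise) to guarantee that the exhaustive search succeeds matches the paper's argument, so there is nothing to add.
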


\subsection{Price of Anarchy}\label{sec:network_POA}
In this section we present results on the price of anarchy(POA) for contagion in the uniform network game model.

We start with a lower bound  of  group 1's individual utility $U_1^v$ at any node $v$.
Recall that $\beta_{i,j}(v) =\kappa_i \kappa_j \beta_0(v)$ with $1=\kappa_1>\kappa_2>\cdots>\kappa_n>0$, so group 1 has the highest $\beta$.
We show that for a fixed $\phi_j^{v'}, \forall j,v'\neq v$,  $U_1^v$ is lowest when $\phi_1^v=1$.  Let
$\phi= [\phi_j^v,\phi^{v'}_j]$ be the vector of population, initially, across all nodes and all policy classes where we
assume w.l.o.g. that $v$ is the  node with index $1$ and  where $\phi_j^{v'}, v' \neq v$ will be assumed to be fixed.
Furthermore,
Let $\phi_{END}=[1,0,\cdots,0, (\phi_j^{v'})_{j,v'\neq v}]^T$.
We show that $U_1^v(\phi_{END}) \leq U_1^v(\overline{\phi}),
\forall \overline{\phi}=[\overline{\phi}_1^{v}, \overline{\phi}_2^v\, \cdots, \overline{\phi}_n^v, (\phi_j^{v'})_{j,v'\neq v}]^T$.
Note that the individual utility 
$U_1^v = p_1^v  \cdot h^v(\overline{S}_1^v)$
is an increasing function of $\overline{S}_1^v$.
We  show the following.
\begin{lemma}\label{lm:S1_bar_lower_boundN1}
$\overline{S}_1^v(\phi_{END}) \leq \overline{S}_1^v(\overline{\phi}),
\forall \overline{\phi}$.
\end{lemma}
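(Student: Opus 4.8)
The plan is to reduce the multivariate final-size system to a single scalar fixed-point equation in $X_0$ and then argue that moving all of node $v$'s population into the most infectious group (group $1$) makes $X_0$ as negative as possible. First I would recall that in the uniform network model every normalized final size factors through the single scalar $X_0$: since $X_i^v = \overline{\kappa}_i^v X_0$, we have $\overline{S}_i^v = (1-\epsilon)e^{\overline{\kappa}_i^v X_0}$ and $S_i^v = \phi_i^v(1-\epsilon)e^{\overline{\kappa}_i^v X_0}$. In particular $\overline{S}_1^v = (1-\epsilon)e^{\alpha_v X_0}$ (using $\kappa_1 = 1$), which is strictly increasing in $X_0$ because $\alpha_v > 0$. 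As $X_0 < 0$, the lemma is therefore equivalent to the statement that, among all strategies $\overline{\phi}$ at node $v$ with the other nodes held fixed, the value of $X_0$ is minimized at $\overline{\phi} = \phi_{END}$.

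Substituting $S_j^u = \phi_j^u(1-\epsilon)e^{\overline{\kappa}_j^u X_0}$ back into the definition of $X_0$ shows that $X_0$ is a root of the scalar equation $x = \Phi(x)$, where $\Phi(x) = \sum_{(u,j)}\frac{\overline{\kappa}_j^u\beta_0}{\gamma}\phi_j^u\big((1-\epsilon)e^{\overline{\kappa}_j^u x}-1\big)$. The function $\Phi$ is convex and increasing, and a short sign analysis of $D(x)=\Phi(x)-x$ on $(-\infty,0]$ (using $D(0)<0$, $D(x)\to+\infty$ as $x\to-\infty$, and convexity) shows $D$ has a unique root there; this is the physical $X_0$, with $D(x)>0$ for $x$ strictly below it and $D(x)\le 0$ for $x$ at and above it. I would then isolate node $v$'s contribution by writing $\Phi(x;\overline{\phi}) = C(x) + \sum_j \overline{\phi}_j\,\psi_j(x)$, where $\psi_j(x) = \frac{\overline{\kappa}_j^v\beta_0}{\gamma}\big((1-\epsilon)e^{\overline{\kappa}_j^v x}-1\big)$ is the per-unit contribution of group $j$ at node $v$ and $C(x)$ collects the contributions of all nodes $u\neq v$; crucially $C(x)$ does not depend on $\overline{\phi}$.

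The key monotonicity step is to show that for every $x<0$ the contribution $\psi_j(x)$ is smallest for group $1$. Writing $\psi_j(x) = \frac{\beta_0}{\gamma}\eta(\overline{\kappa}_j^v)$ with $\eta(a)=a\big((1-\epsilon)e^{ax}-1\big)$, one computes $\eta'(a) = (1-\epsilon)e^{ax}(1+ax)-1$, and since $e^t(1+t)\le 1$ for all $t\le 0$ (equality only at $t=0$) we get $\eta'(a)<0$ for $x<0$, so $\eta$ is strictly decreasing in $a$. Because $\overline{\kappa}_1^v=\alpha_v$ is the largest of the $\overline{\kappa}_j^v$ (as $\kappa_1>\kappa_2>\cdots$), it follows that $\psi_1(x)\le\psi_j(x)$ for all $j$ and all $x<0$. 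Summing with weights $\overline{\phi}_j\ge 0$, $\sum_j\overline{\phi}_j=1$, yields $\Phi(x;\phi_{END}) = C(x)+\psi_1(x) \le C(x)+\sum_j\overline{\phi}_j\psi_j(x) = \Phi(x;\overline{\phi})$ for all $x<0$.

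Finally I would convert this pointwise inequality into the desired comparison of roots. Let $X_0^{END}$ and $X_0^{gen}$ be the unique physical roots for $\phi_{END}$ and $\overline{\phi}$, and set $D_{END}(x)=\Phi(x;\phi_{END})-x$, $D_{gen}(x)=\Phi(x;\overline{\phi})-x$. Evaluating at $X_0^{gen}$ gives $D_{END}(X_0^{gen}) \le D_{gen}(X_0^{gen}) = 0$; since $D_{END}$ is strictly positive below its root $X_0^{END}$ and nonpositive at and above it, $D_{END}(X_0^{gen})\le 0$ forces $X_0^{gen}\ge X_0^{END}$. Hence $X_0$ is minimized at $\phi_{END}$, so $\overline{S}_1^v(\phi_{END})\le\overline{S}_1^v(\overline{\phi})$. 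I expect the main obstacle to be this fixed-point comparison: the scalar reduction and the per-group monotonicity are essentially mechanical, but care is needed to identify the correct (physical) root of the possibly non-monotone $D$ and to turn the pointwise domination $\Phi(\cdot;\phi_{END})\le\Phi(\cdot;\overline{\phi})$ into $X_0^{END}\le X_0^{gen}$, and the convexity of $\Phi$ is exactly what makes that step valid.
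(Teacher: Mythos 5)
Your proof is correct, but it takes a genuinely different route from the paper. The paper argues locally: it differentiates the fixed-point identity $X_0=\sum_{(u,j)}\frac{\overline{\kappa}_j^u\beta_0}{\gamma}\phi_j^u\big((1-\epsilon)e^{\overline{\kappa}_j^u X_0}-1\big)$ implicitly with respect to $\phi_1^v$, obtaining $\frac{\mathrm{d}X_0}{\mathrm{d}\phi_1^v}\big(1-\sum_{u,j}\frac{(\overline{\kappa}_j^u)^2\beta_0}{\gamma}S_j^u\big)\leq 0$, and signs the derivative using the network analogue of the final-size bound $\sum_{u,j}\frac{(\overline{\kappa}_j^u)^2\beta_0}{\gamma}S_j^u<1$ (Lemma \ref{lm:n_group_final_upper}; the Model A2 counterpart, Lemma \ref{lm:S1_bar_lower_bound}, does the same along a parametrized segment with a Sherman--Morrison inversion). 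You instead make a global comparison: you characterize $X_0$ as the unique negative root of $x=\Phi(x;\phi)$, prove the pointwise domination $\Phi(x;\phi_{END})\leq\Phi(x;\overline{\phi})$ for $x<0$ via the elementary inequality $e^t(1+t)\leq 1$ for $t\leq 0$ (monotone decrease of $a\mapsto a((1-\epsilon)e^{ax}-1)$), and convert it into the root comparison $X_0^{END}\leq X_0^{gen}$ using convexity of $\Phi$. Your route buys several things: it needs no implicit function theorem and no final-size bound (convexity of $\Phi$ plays that role), it compares $\phi_{END}$ with an arbitrary $\overline{\phi}$ in one step rather than integrating a derivative along a path, and it establishes uniqueness of the negative root as a byproduct; it also sidesteps a point the paper's terse network proof glosses over, namely how the remaining coordinates of $\phi^v$ vary to keep $\sum_i\phi_i^v=1$ when $\phi_1^v$ increases. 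What the paper's approach buys is the stronger local statement $\frac{\mathrm{d}\overline{S}_1^v}{\mathrm{d}\phi_1^v}<0$ everywhere, reusing the Model A2 machinery. Two negligible caveats in your write-up: when $\alpha_v=0$ the lemma holds trivially with equality (your ``strictly increasing in $X_0$'' step needs $\alpha_v>0$), and the strictness in the sign analysis of $D$ uses strict convexity, which holds whenever some $\overline{\kappa}_j^u\phi_j^u>0$ --- the degenerate case again being trivial.
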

\begin{proof}
We first express $\overline{S}_1^v$ as:
$
\overline{S}_1^v = (1-\epsilon)e^{\overline{\kappa}_1^v X_0}
$.
We consider the change in $X_0$ at any point
specified by $\overline \phi$ in the direction of $\phi_{END}-\overline{\phi}$.
$X_0 = \sum_{(u,j)}  \frac{\overline{\kappa}_j^u \beta_0}{\gamma} \phi^u_j(\overline{S}_j^u - 1) = \sum_{(u,j)} \frac{\overline{\kappa}_j^u \beta_0}{\gamma} \phi^u_j ((1-\epsilon)
e^{\overline{\kappa}_j^u X_0}-1)
$.
Consider the slope of $X_0$ with respect to $\phi^v_1$ we get
$
\frac{\mathrm{d} X_0}{\mathrm{d} \phi^v_1} (1- \sum_{u,j} \frac{(\overline{\kappa}_j^u)^2 \beta_0}{\gamma}S^u_j ) \leq 0
$.
Since $1- \sum_{u,j} \frac{(\overline{\kappa}_j^u)^2 \beta_0}{\gamma}S^u_j >0$ we get the result that $\frac{\mathrm{d} \overline{S}_1^v}{\mathrm{d} \phi^v_1} < 0$.
\end{proof}
Repeating the above argument for all nodes we get the following result, where\\
$\phi_{END}=[\phi^v_{END}]_v^T, \phi^v_{END} = [1,0,\cdots,0, ]^T$.
\begin{lemma}
\label{lm:S1_bar_lower_boundN2}
$\overline{S}_1(\phi_{END}) \leq \overline{S}_1(\overline{\phi}),
\forall \overline{\phi}$.
\end{lemma}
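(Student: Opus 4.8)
The plan is to reduce the claim to a single scalar monotonicity statement and then chain the one-node result of \textbf{Lemma \ref{lm:S1_bar_lower_boundN1}} across all $m$ nodes. The crucial observation is that, at the final size, every group-1 ratio is governed by the \emph{same} global scalar $X_0 = \sum_{(u,j)} \frac{\overline{\kappa}_j^u \beta_0}{\gamma}\phi_j^u(\overline{S}_j^u - 1)$ through the relation $\overline{S}_1^v = (1-\epsilon)e^{\overline{\kappa}_1^v X_0}$. Since $\overline{\kappa}_1^v > 0$, the map $X_0 \mapsto \overline{S}_1^v$ is strictly increasing for every node $v$ simultaneously. Hence it suffices to prove that $X_0$ attains its minimum over the product of simplices at $\phi_{END}$; the inequality $\overline{S}_1^v(\phi_{END}) \leq \overline{S}_1^v(\overline{\phi})$ then follows for each $v$ at once.

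To minimize $X_0$, I would use \textbf{Lemma \ref{lm:S1_bar_lower_boundN1}} as an atomic step. Its proof shows, for any node $v$ and \emph{arbitrary} fixed populations $\phi_j^{v'}$ at the remaining nodes $v' \neq v$, that $\frac{\mathrm{d}X_0}{\mathrm{d}\phi_1^v} \leq 0$ along the direction that shifts node $v$'s mass onto group $1$ (this is exactly where the positivity $1 - \sum_{u,j}\frac{(\overline{\kappa}_j^u)^2\beta_0}{\gamma}S_j^u > 0$ asserted in that proof is used). The chaining argument is then: start from an arbitrary profile $\overline{\phi}$ and enumerate the nodes $v_1, \dots, v_m$. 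At stage $t$, hold all other nodes at their current values and move node $v_t$ along the straight segment from its current profile to $[1,0,\dots,0]^T$; by the directional-derivative sign just quoted, $X_0$ is nonincreasing along this segment, so the stage can only decrease $X_0$. After $m$ stages every node carries its full mass on group $1$, i.e. the profile is $\phi_{END}$, and therefore $X_0(\phi_{END}) \leq X_0(\overline{\phi})$. Combining with the monotonicity of the first paragraph yields $\overline{S}_1^v(\phi_{END}) \leq \overline{S}_1^v(\overline{\phi})$ for every $v$, which is the claim.

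The main obstacle is making the chaining rigorous: \textbf{Lemma \ref{lm:S1_bar_lower_boundN1}} gives a \emph{local} derivative sign, but I invoke it at many intermediate profiles where the other nodes are no longer at their original values. This is legitimate precisely because that lemma was proved for an \emph{arbitrary} fixed configuration of the remaining nodes, so its conclusion is uniform over all intermediate states encountered in the chain; I would state this uniformity explicitly to justify reusing the derivative bound at each stage. A secondary point worth verifying is that the implicit dependence of each $\overline{S}_j^u$ on $X_0$ (which makes $X_0$ an implicitly defined quantity rather than an explicit function of $\phi$) does not spoil the sign of the directional derivative along each segment; this is already handled inside the proof of \textbf{Lemma \ref{lm:S1_bar_lower_boundN1}}, so I would simply cite it rather than redo the implicit differentiation.
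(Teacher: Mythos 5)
Your proposal is correct and follows essentially the same route as the paper: the paper's entire proof of this lemma is the remark ``Repeating the above argument for all nodes,'' i.e.\ exactly the node-by-node chaining of \textbf{Lemma \ref{lm:S1_bar_lower_boundN1}} that you carry out, which is legitimate precisely because that lemma's derivative-sign argument holds for an arbitrary fixed configuration of the remaining nodes. Your explicit reduction through the shared scalar $X_0$ (using $\overline{S}_1^v=(1-\epsilon)e^{\overline{\kappa}_1^v X_0}$ and monotonicity in $X_0$) is a sound and somewhat more careful write-up of the same idea, not a different approach.
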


Thus $U_1(\phi_{END})<U_1(\overline{\phi})$, for all $\overline{\phi}$ with $\overline{\phi}_1^v<1$.
Group 1's individual utility is lowest when its group size is 1, which is a lower bound of any Nash equilibrium.


When at $\phi_{END}$
the entire population of  every node is in group 1, there is no interaction with other groups. We determine a lower bound on $\overline{S}_1$.
Let $X_0 = \sum_{(u,j)}  \frac{\overline{\kappa}_j^u \beta_0}{\gamma} (S_j^u - \phi_j^u)$, $X_0<0$. 
For all groups $(v,i)$, 
$X_i^v = \sum_{(u,j)} \frac{\beta_{i,j}^{v,u}}{\gamma}(S_j^u - \phi_j^u) 
= \overline{\kappa}_i^v X_0$.
   Note that $\overline{\kappa}_i^v = \alpha_v \kappa_i$, for all group $(v,i)$
   in the uniform model.
We first find a lower bound for $X_0$ at $\phi_{END}$ where
$\phi_1^u =1$, $\forall u$.
$
X_0 = \sum_{u}  \frac{\overline{\kappa}_1^u \beta_0}{\gamma} (\overline{S}_1^u - 1) 
= \sum_{u}  \frac{\alpha_u \beta_0}{\gamma} (\hS_1^u - 1)
\geq -R_0 \sum_u \alpha_u
$, since $\kappa_1 =1$.
Thus
$
\overline{S}_1^v = (1-\epsilon)e^{\overline{\kappa}_1^v X_0}
\geq  (1-\epsilon)e^{\alpha_v (-R_0 \sum_u \alpha_u)}
$.
 Representing the social welfare at Nash equilibrium to be $N>0$, we get
 \[ N \geq  \sum_u \hp^u_1\phi_1^u \overline{S}^u_1  
 \geq \sum_u \hp^u_1   (1-\epsilon)e^{-\alpha_{max} \overline{R}_0 },
 \]
 where $\overline{R}_0 = R_0 \sum_u \alpha_u $
 and $\alpha_{max} = \max_u \alpha_u$.

The social optimum is
\[ OPT = \sum_u  \max_{\phi^u} \sum_i UG^u_i 
=\sum_u \max_{\phi^u} \sum_i  \hp^u_i S^u_i\]
Since for all $i$, $S^u_i\leq S^u_i(0) = (1 - \epsilon) \phi^u_i$,
\begin{align*}
OPT &\leq \sum_u \max_{\phi^u} \sum_i \hp^u_i(1-\epsilon)\phi^u_i 
\leq \sum_u \max_{\phi^u} \sum_i \hp^u_1(1-\epsilon)\phi^u_i
= \sum_u (1-\epsilon)\hp_1^u
\end{align*}
Assume $N$ to be fixed, we set up the following maximization program with variables $\{\hp_1^u\}_u$:
\begin{equation}
\label{prog:network_OPT}
\max_{\{\hp^u_1\}_u}\quad \sum_u (1-\epsilon)\hp^u_1 ; \  \  \mathrm{s.t.}
\quad N \geq \sum_u \frac{\hp^u_1}{B_1}
\mathrm{, where }B_1 = e^{\alpha_{max} \overline{R}_0 }/(1-\epsilon)
\end{equation}
The optimum value of {\bf Program \ref{prog:network_OPT}} is $(1-\epsilon)NB_1$.
The price of anarchy(POA) is bounded by the following
\begin{align*}
    POA = \frac{OPT}{N}
    \leq \frac{(1 - \epsilon) N B_1}{N} =(1 - \epsilon) B_1
   =  e^{\alpha_{max} \overline{R}_0} ,
\end{align*}
where $\overline{R}_0 = R_0 \sum_u \alpha_u = R_0 \omega$ and $\alpha_{max}=\max_u \alpha_u \leq 1$. The impact factor $\omega$ can be $m$ but the interaction of a population at a node would typically be limited to a constant factor of the population at that node. 
We summarize the results as:
\begin{theorem}\label{thm:network_POA}
The price of anarchy for the uniform  network contagion game  is bounded above by $e^{ \overline{R}_0}= e^{\omega R_0}$, where $\omega$ is the interaction factor.
In the worst case this is bounded by $e^{mR_0}$ where $R_0$ is the maximum reproduction number of the contagion over all policy sets, and $m$ is the number of nodes in the network.
\end{theorem}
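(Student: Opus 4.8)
The plan is to bound $POA = OPT/N$ by producing, separately, a lower bound on the social welfare $N$ of every Nash equilibrium and an upper bound on the social optimum $OPT$, then comparing the two through a single scalar program in the payoff magnitudes $\{\hp_1^u\}_u$, exactly as in the single-node case of Theorem \ref{thm:decomp_POA}. First I would reduce to affine utilities $\hat g_i^v(\hS_i^v)=\hp_i^v\hS_i^v$ by applying Lemmas \ref{lm:POA_fg} and \ref{lm:POA_affine} node-by-node: replacing each concave $h_i^v$ by its tangent at the Nash point only increases the POA, and the linear term $b_i^v=0$ is extremal. Each group utility then becomes $UG_i^v=\hp_i^v S_i^v$, and, assuming w.l.o.g. $\hp_1^v>\cdots>\hp_n^v$ at every node, group $1$ carries the largest payoff at each node.

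The core of the argument is the monotonicity statement of Lemmas \ref{lm:S1_bar_lower_boundN1} and \ref{lm:S1_bar_lower_boundN2}: the normalized final size $\hS_1^v$ of the top-payoff group at each node is minimized when the whole population of every node concentrates in group $1$, i.e. at $\phi_{END}$. I would prove this by writing $\hS_1^v=(1-\epsilon)e^{\overline{\kappa}_1^v X_0}$ and differentiating the fixed-point identity $X_0=\sum_{(u,j)}\frac{\overline{\kappa}_j^u\beta_0}{\gamma}\phi_j^u\bigl((1-\epsilon)e^{\overline{\kappa}_j^u X_0}-1\bigr)$ implicitly in $\phi_1^v$; collecting terms gives $\frac{\mathrm{d}X_0}{\mathrm{d}\phi_1^v}\,(1-\sum_{u,j}\frac{(\overline{\kappa}_j^u)^2\beta_0}{\gamma}S_j^u)\le 0$, and since the parenthesized factor is strictly positive (the network analogue of Lemma \ref{lm:n_group_final_upper}) we conclude $\frac{\mathrm{d}\hS_1^v}{\mathrm{d}\phi_1^v}<0$. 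Iterating over nodes pins the minimizer at $\phi_{END}$. I would then transfer this to the welfare: at a Nash point the welfare equals $\sum_v N_v$, since within each node every participating group has individual utility $N_v$ and the shares sum to $1$; and for each node $N_v\ge U_1^v(\phi^{\mathcal N})\ge U_1^v(\phi_{END})$, whether or not group $1$ participates.

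At $\phi_{END}$ there is no cross-group interaction, so I can bound $X_0$ directly: using $\kappa_1=1$ and $\hS_1^u-1\ge-1$, $X_0=\sum_u\frac{\alpha_u\beta_0}{\gamma}(\hS_1^u-1)\ge-R_0\sum_u\alpha_u=-\overline{R}_0$, whence $\hS_1^v\ge(1-\epsilon)e^{-\alpha_v\overline{R}_0}\ge(1-\epsilon)e^{-\alpha_{max}\overline{R}_0}$ and therefore $N\ge\sum_u\hp_1^u(1-\epsilon)e^{-\alpha_{max}\overline{R}_0}$. For the numerator, $S_i^u\le(1-\epsilon)\phi_i^u$ together with $\hp_1^u\ge\hp_i^u$ and $\sum_i\phi_i^u=1$ gives $OPT\le\sum_u(1-\epsilon)\hp_1^u$. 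Holding $N$ fixed and maximizing $\sum_u(1-\epsilon)\hp_1^u$ subject to the lower-bound constraint (Program \ref{prog:network_OPT}) yields $OPT\le(1-\epsilon)NB_1$ with $B_1=e^{\alpha_{max}\overline{R}_0}/(1-\epsilon)$, so $POA\le(1-\epsilon)B_1=e^{\alpha_{max}\overline{R}_0}$. Finally $\alpha_{max}\le1$ gives $e^{\overline{R}_0}=e^{\omega R_0}$, and $\omega=\sum_u\alpha_u\le m$ gives the worst-case $e^{mR_0}$.

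I expect the monotonicity lemma to be the main obstacle; everything downstream is essentially a substitution. The delicate point is the strict positivity of $1-\sum_{u,j}\frac{(\overline{\kappa}_j^u)^2\beta_0}{\gamma}S_j^u$ over the coupled network, which is what guarantees a definite sign for the implicit derivative and rules out the possibility that the indirect feedback of $X_0$ on every other node's final size overturns the direct effect of concentrating population into group $1$.
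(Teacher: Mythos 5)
Your proposal is correct and follows essentially the same route as the paper: the same monotonicity lemmas (\ref{lm:S1_bar_lower_boundN1}, \ref{lm:S1_bar_lower_boundN2}) via implicit differentiation of the fixed-point identity for $X_0$ with the positivity of $1-\sum_{u,j}\frac{(\overline{\kappa}_j^u)^2\beta_0}{\gamma}S_j^u$, the same bound $X_0\geq-\overline{R}_0$ at $\phi_{END}$, and the same scalar program (\ref{prog:network_OPT}) yielding $POA\leq e^{\alpha_{max}\overline{R}_0}\leq e^{\omega R_0}\leq e^{mR_0}$. Your only addition is making the node-by-node tangent-line reduction to affine utilities explicit, which the paper carries over implicitly from the single-node case.
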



\bibliographystyle{plain}
\bibliography{SIRgame}
\pagebreak

\appendix
\centerline{\LARGE \bf Appendix}
\section{ Separable Policy Sets}\label{sec:sep}

In this section we consider another special case of the game, with no interaction between different policy groups. 
All the off-diagonal entries of the $\beta$ matrix are 0. 
For convenience, for all $i$ we denote $\beta_i = \beta_{i,i}$ with $\beta_0 = \beta_1 > \beta_2>\cdots >\beta_n$, as each group only has one $\beta$ parameter. 

An important observation is that in the separable policy sets model, $S_i(\infty)<\frac{\gamma}{\beta_i}$ for every group.
\begin{lemma}\label{lm:1_group_final_upper}
    $S_i(\infty)<\frac{\gamma}{\beta_i}$.
\end{lemma}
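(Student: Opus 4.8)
The plan is to exploit the fact that in the separable model the $n$ policy groups are completely decoupled: since all off-diagonal entries of $\beta$ vanish, group $i$ obeys the ordinary single-population SIR system $\dot S_i = -\beta_i S_i I_i$, $\dot I_i = (\beta_i S_i - \gamma)I_i$, $\dot R_i = \gamma I_i$ in isolation. The claimed bound is then exactly the classical statement that the final susceptible level of an SIR epidemic lies strictly below the herd-immunity threshold $\gamma/\beta_i$, and I would prove it by a threshold/monotonicity argument on this one-dimensional system.

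First I would record two elementary facts. Since $\dot S_i = -\beta_i S_i I_i \le 0$, the susceptible curve $S_i(t)$ is non-increasing and bounded below by $0$, so the limit $S_i(\infty)$ exists; and at endemic equilibrium $I_i(\infty)=0$ while $I_i(0)=\epsilon\phi_i>0$. The vanishing of $I_i(\infty)$ is the defining property of the endemic state used throughout the paper; if one wishes to derive it, note that $R_i$ is non-decreasing and bounded by $\phi_i$, so $\int_0^\infty \gamma I_i\,dt<\infty$, forcing $I_i\to 0$.

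The key step is the sign analysis of $\dot I_i=(\beta_i S_i(t)-\gamma)I_i(t)$, whose sign is governed by whether $S_i(t)$ exceeds the threshold $\gamma/\beta_i$. Suppose for contradiction that $S_i(\infty)\ge \gamma/\beta_i$. By monotonicity $S_i(t)\ge S_i(\infty)\ge \gamma/\beta_i$ for every $t\ge 0$, hence $\beta_i S_i(t)-\gamma\ge 0$ and $\dot I_i(t)\ge 0$ throughout. Then $I_i$ is non-decreasing, so $I_i(\infty)\ge I_i(0)=\epsilon\phi_i>0$, contradicting $I_i(\infty)=0$. Therefore $S_i(\infty)<\gamma/\beta_i$, as claimed. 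Using $\ge$ in the contradiction hypothesis is what delivers the strict inequality in one shot, with no separate tangency case to dispose of.

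I expect no serious obstacle here, since the statement is a standard SIR fact; the only place requiring a little care is the behaviour at $t=\infty$, namely justifying that the limits $S_i(\infty)$ and $I_i(\infty)$ exist and that $I_i(\infty)=0$, which the monotone–bounded observations above handle. As an alternative one could argue directly from the final-size relation $S_i=S_i(0)\,e^{(\beta_i/\gamma)(S_i-\phi_i)}$ of \cite{magal2016final}: writing $F(s)=S_i(0)\,e^{(\beta_i/\gamma)(s-\phi_i)}$, the physical final size is the smallest root of $s=F(s)$ in $(0,S_i(0))$, and since $F$ is strictly convex with $F(0)>0$ and $F(S_i(0))<S_i(0)$, the line $y=s$ meets $F$ transversally from above there, giving $F'(S_i)=(\beta_i/\gamma)S_i<1$; but the monotonicity argument above is cleaner and avoids having to identify the correct root.
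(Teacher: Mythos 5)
Your proof is correct and follows essentially the same route as the paper's: both rest on the observation that the sign of $\frac{\mathrm{d}I_i}{\mathrm{d}t}=(\beta_i S_i(t)-\gamma)I_i(t)$ is governed by the threshold $\gamma/\beta_i$, combined with the monotonicity of $S_i(t)$ and the fact that $I_i(\infty)=0$; the paper argues directly (since $I_i\to 0$ there is a time $t$ with $\frac{\mathrm{d}I_i}{\mathrm{d}t}<0$, hence $S_i(t)<\gamma/\beta_i$, hence $S_i(\infty)<\gamma/\beta_i$), while you run the contrapositive, which is an equivalent rearrangement. The only cosmetic caveat is your parenthetical claim that $\int_0^\infty \gamma I_i\,\mathrm{d}t<\infty$ alone forces $I_i\to 0$ (one also needs that the limit of $I_i$ exists, e.g.\ via $I_i=\phi_i-S_i-R_i$ with $S_i$, $R_i$ monotone and bounded), but since the paper itself takes $I_i(\infty)=0$ as the defining property of the endemic state, this does not affect your argument.
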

\begin{proof}
    $I_i(\infty)=0$, which means at a time $t\to\infty$ we have \begin{align*}
        \frac{\mathrm{d}I_i(t)}{\mathrm{d}t}=\beta_i S_i(t) I_i(t) -\gamma I_i(t)<0
        \implies S_i(t)<\frac{\gamma}{\beta_i}
    \end{align*}
    Since $\frac{\mathrm{d}S_i(t)}{\mathrm{d}t}$ is strictly non-positive, $S_i(\infty)<S_i(t)<\frac{\gamma}{\beta_i}$.
\end{proof}

\subsection{Computing the Final Size}\label{sec:sep_LB}
We first discuss how to find bounds on the final size. We omit the group index $i$ in this subsection for simplicity.\\
\noindent
{\bf{Bounds on Final Size:}}
We first derive the lower and upper bound of a group's final size, or its susceptible size at $T=\infty$, denoted by $S(\infty)$. 
It is known\cite{magal2016final} that
$S(\infty) = S(0) e^{\frac{\beta}{\gamma}[S(\infty)-\phi]}$.
Since there is no interaction with other groups, a group's final size $S(\infty)$ is only a function of its own $\beta$ and group size $\phi$.
Let $\overline{S}(\infty) = \frac{S(\infty)}{\phi}$, we get
$\overline{S}(\infty) = (1 - \epsilon)
e^{\frac{\beta\phi}{\gamma}[\overline{S}(\infty) - 1]}$.
Define function $g(\overline{S})=\overline{S} - (1 - \epsilon)
e^{\frac{\beta\phi}{\gamma}(\overline{S} - 1)},\ 0\leq \overline{S}\leq 1-\epsilon$. 
$\overline{S}(\infty)$ satisfies that $g(\overline{S}(\infty))=0$.
\begin{figure}
    \centering
    \includegraphics[width = 0.8\textwidth]{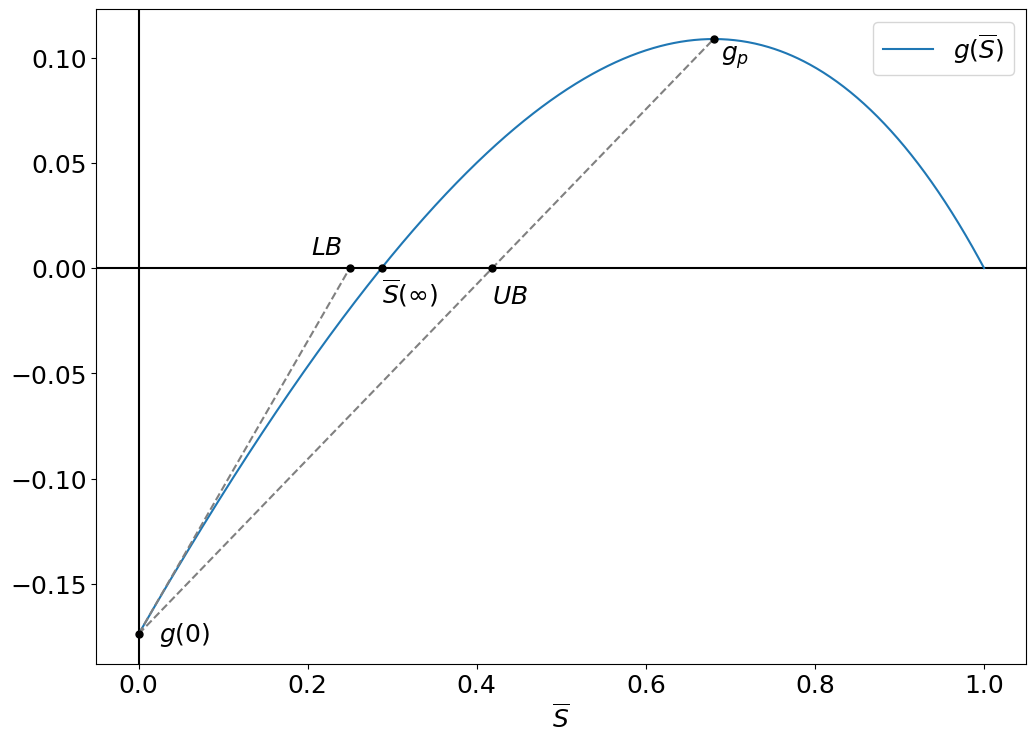}
    \caption{$g(\overline{S})$ and the lower bound $LB$ and upper bound $UB$ of $\overline{S}(\infty)$.}
    \label{fig:gS}
\end{figure}
\[
\frac{\mathrm{\mathrm{d}g}}{\mathrm{d}\overline{S}}=1-(1-\epsilon)
\frac{\beta\phi}{\gamma}e^{\frac{\beta\phi}{\gamma}(\overline{S}-1)}=0\implies
\overline{S}=\frac{\gamma}{\beta\phi}\ln(\frac{\gamma}{(1-\epsilon)\beta\phi}) + 1
\]
Note that $\frac{\mathrm{\mathrm{d}g}}{\mathrm{d}\overline{S}}>0$ when $\overline{S} < \frac{\gamma}{\beta\phi}\ln(\frac{\gamma}{(1-\epsilon)\beta\phi}) + 1$ and $\frac{\mathrm{\mathrm{d}g}}{\mathrm{d}\overline{S}}<0$ when $\overline{S} > \frac{\gamma}{\beta\phi}\ln(\frac{\gamma}{(1-\epsilon)\beta\phi}) + 1$. 
And $g(\overline{S})$ is concave since
$
\frac{\mathrm{\mathrm{d}^2g}}{\mathrm{d}\overline{S}^2}
=-(1-\epsilon)
\Big(\frac{\beta\phi}{\gamma}\Big)^2 e^{\frac{\beta\phi}{\gamma}(\overline{S}-1)} <0$.
Thus we get the peak point $g_p$.
\[g_p= \Bigg(\frac{\gamma}{\beta\phi}\ln(\frac{\gamma}{(1-\epsilon)\beta\phi}) + 1,
\frac{\gamma}{\beta\phi}[\ln(\frac{\gamma}{(1-\epsilon)\beta\phi})-1] +1 \Bigg)\]
Let $z=\frac{\beta\phi}{\gamma}$, $0<z\leq \frac{\beta}{\gamma}$, the peak $\frac{\gamma}{\beta\phi}[\ln(\frac{\gamma}{(1-\epsilon)\beta\phi})-1] +1$ is a function 
$p(z)=z[\ln(\frac{z}{1-\epsilon})-1]+1$
of $z$.
$p(z)$ has a minimum value of $\epsilon>0$ when $z=1-\epsilon$, the peak is above 0.
Connecting $(0,g(0))$ and $g_p$, we get the intersection $(UB,0)$ on x-axis as the upper bound of $\overline{S}(\infty)$
\[
UB=\frac{(1-\epsilon)\Big(
\frac{\beta\phi}{\gamma} - \ln(\frac{(1-\epsilon)\beta\phi}{\gamma})
\Big)}
{\frac{(1-\epsilon)\beta\phi}{\gamma} +
e^{\frac{\beta\phi}{\gamma}}
\Big(\frac{\beta\phi}{\gamma}-1-\ln(\frac{(1-\epsilon)\beta\phi}{\gamma})
\Big)}
\]
We extend the tangent at $(0, g(0))$ to intersect the x-axis, with the slope
$\left.\frac{\mathrm{d}g}{\mathrm{d}\overline{S}}\right|_{\overline{S}=0}=1-(1-\epsilon)
\frac{\beta\phi }{\gamma}e^{-\frac{\beta\phi}{\gamma}}$.
The intersection $(LB,0)$ is the lower bound of $\overline{S}(\infty)$.
\begin{equation}\label{eq:LB}
    LB=\frac{1}{\frac{e^{\frac{\beta\phi}{\gamma}}}{1-\epsilon} -
\frac{\beta\phi}{\gamma}}
\end{equation}
Using the above analysis we get the following result. Note that $UB,LB$ are functions of $\phi, \beta, \gamma, \epsilon$. 
\begin{lemma}\label{lm:1_group_UB_LB}
$\overline{S}(\infty)$, the size at endemicity, is bounded above and below as:
\[
\frac{(1-\epsilon)\Big(
\frac{\beta\phi}{\gamma} - \ln(\frac{(1-\epsilon)\beta\phi}{\gamma})
\Big)}
{\frac{(1-\epsilon)\beta\phi}{\gamma} +
e^{\frac{\beta\phi}{\gamma}}
\Big(\frac{\beta\phi}{\gamma}-1-\ln(\frac{(1-\epsilon)\beta\phi}{\gamma})
\Big)}
\geq \overline{S}(\infty) \geq 
\frac{1}{\frac{e^{\frac{\beta\phi}{\gamma}}}{1-\epsilon} -
\frac{\beta\phi}{\gamma}}\]
\end{lemma}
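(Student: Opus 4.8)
The plan is to locate $\overline{S}(\infty)$ as a root of the auxiliary function $g$ and to trap it between the $x$-intercepts of two auxiliary lines, exploiting the concavity of $g$ established in the excerpt. Recall that the known final-size relation gives $g(\overline{S}(\infty))=0$, that $g$ is strictly concave on $[0,1-\epsilon]$ (its second derivative is everywhere negative), that its unique stationary point is the peak $g_p=(\overline{S}^*,\cdot)$, and that the peak height $p(z)\geq \epsilon>0$ is strictly positive. First I would pin down which root of $g$ is being bounded: since $g(0)=-(1-\epsilon)e^{-\beta\phi/\gamma}<0$ while $g(1-\epsilon)=(1-\epsilon)(1-e^{-\epsilon\beta\phi/\gamma})>0$, concavity forces a single crossing inside $(0,1-\epsilon)$, lying on the increasing branch to the left of the peak $\overline{S}^*$; this crossing is exactly $\overline{S}(\infty)$.

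For the lower bound I would use the tangent line $T$ to $g$ at $(0,g(0))$. Its slope $g'(0)=1-(1-\epsilon)\tfrac{\beta\phi}{\gamma}e^{-\beta\phi/\gamma}$ is positive because $ze^{-z}\leq 1/e$ for all $z\geq 0$, so $T$ is increasing and meets the $x$-axis at the point $LB$ given in Equation~(\ref{eq:LB}). Concavity guarantees $T(\overline{S})\geq g(\overline{S})$ everywhere, hence $T(\overline{S}(\infty))\geq g(\overline{S}(\infty))=0=T(LB)$; since $T$ is increasing, this yields $\overline{S}(\infty)\geq LB$.

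For the upper bound I would use the secant (chord) joining $(0,g(0))$ to the peak $g_p$. Because $g(0)<0$ and the peak height is positive, this chord crosses the $x$-axis at a point $UB\in(0,\overline{S}^*)$, and solving the resulting linear equation for the intercept gives the displayed formula for $UB$. Concavity now gives the reverse orientation on $[0,\overline{S}^*]$: the chord lies below the curve, so $g(UB)\geq 0$. Since $g$ is increasing on that interval and vanishes only at $\overline{S}(\infty)$, the inequality $g(UB)\geq 0$ forces $UB\geq\overline{S}(\infty)$. Combining the two estimates gives $LB\leq\overline{S}(\infty)\leq UB$, which is the claim.

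The main obstacle is conceptual rather than computational: one must verify that the root being bracketed is precisely the epidemiologically relevant final size, and then apply concavity with the correct orientation, namely that the tangent at the origin \emph{dominates} $g$ (so its $x$-intercept under-approximates the root) while the chord to the peak is \emph{dominated} by $g$ (so its $x$-intercept over-approximates the root). The explicit formulas for $UB$ and $LB$ are then merely the $x$-intercepts of these two lines, the routine algebra already carried out in the excerpt.
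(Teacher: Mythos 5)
Your proposal is correct and follows essentially the same route as the paper: it brackets the root of the same auxiliary function $g$ using the tangent at $(0,g(0))$ for the lower bound and the chord from $(0,g(0))$ to the peak $g_p$ for the upper bound, relying on the concavity of $g$ and the positivity of the peak height $p(z)\geq\epsilon$. You in fact make explicit the orientation arguments (tangent dominates $g$, chord is dominated by $g$ on $[0,\overline{S}^*]$) that the paper leaves as geometric intuition, which is a welcome sharpening rather than a deviation.
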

\noindent
{\bf{Computing $S_i(\infty)$:}} $g(\overline{S})$ is an increasing function when $LB\leq \overline{S}\leq UB$, we may use binary search to determine the numerical value of $\overline{S}(\infty)$, and that of $S(\infty)=\phi\cdot\overline{S}$. 
For each group $i$, we define a function $FinalSize_i(\phi_i)$ that takes in the group size $\phi_i$ and returns the corresponding final size $S_i$ by the binary search between $LB$ and $UB$ described above.
For the error in $\overline{S}(\infty)$ to be bounded by $\delta$, it takes $O(\log \frac{1}{\delta})$ iterations.

\begin{theorem}\label{thm:sep_binary}
    The final size  of each group in contagion game with separable policy sets can be approximated to within an additive error of $\delta$ in $O(\log \frac{1}{\delta})$.
\end{theorem}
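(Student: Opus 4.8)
The plan is to exploit the structure of $g$ that has already been established: on $[0,1-\epsilon]$ the function $g$ is concave, it satisfies $g(0) = -(1-\epsilon)e^{-\beta\phi/\gamma} < 0$, and its unique peak $g_p$ is strictly positive. The first thing I would do is certify that the entire search interval $[LB, UB]$ from \textbf{Lemma \ref{lm:1_group_UB_LB}} lies in the region where $g$ is \emph{strictly increasing}, i.e.\ strictly to the left of the peak. This holds because $LB$ is the zero of the tangent to $g$ at $\overline{S}=0$ while $UB$ is the zero of the secant joining $(0,g(0))$ to the peak point $g_p$; both lines emanate from the same negative value $g(0)$, and each crosses zero before reaching the peak location, so that $LB < UB < \overline{S}_{\mathrm{peak}}$. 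Along the way I would record that $g'(0) = 1 - (1-\epsilon)\tfrac{\beta\phi}{\gamma}e^{-\beta\phi/\gamma} > 0$, which follows from $xe^{-x}\le 1/e$ for all $x\ge 0$; this guarantees that the tangent at the origin has positive slope, so that $LB$ is a well-defined positive quantity and the increasing region is nonempty.

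With monotonicity of $g$ on $[LB,UB]$ in hand, the binary search is routine. Since $LB \le \overline{S}(\infty) \le UB$ by \textbf{Lemma \ref{lm:1_group_UB_LB}} and $g$ is strictly increasing on this interval with its single zero at $\overline{S}(\infty)$, evaluating $g$ at the midpoint $\overline{S}_m$ of the current interval identifies the half containing the root: the root lies to the left of $\overline{S}_m$ when $g(\overline{S}_m) > 0$ and to the right when $g(\overline{S}_m) < 0$. Each iteration halves the interval, and each evaluation of $g$ costs a constant number of arithmetic and exponential operations.

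For the iteration count, I would note that the initial interval satisfies $UB - LB \le 1-\epsilon < 1$, so after $k$ bisections its width is at most $2^{-k}$. Taking $k = \lceil \log_2(1/\delta)\rceil = O(\log \tfrac{1}{\delta})$ forces the width, and hence the error in the returned estimate of $\overline{S}(\infty)$, below $\delta$. Since $S(\infty) = \phi\,\overline{S}(\infty)$ with $\phi \le 1$, the corresponding additive error in the final size $S(\infty)$ is at most $\phi\delta \le \delta$, which yields the stated bound.

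The only genuine obstacle I anticipate is the first step: justifying that $g$ is monotone increasing across the whole interval $[LB,UB]$, not merely in a neighbourhood of the root. This reduces to locating $LB$ and $UB$ relative to the peak of $g$, and it becomes clean once one invokes concavity to compare the slope $g'(0)$ of the tangent with the slope of the secant to $g_p$ (both smaller than $g'(0)$ by the mean value theorem and the monotonicity of $g'$). Everything after this certification is the standard error analysis of bisection.
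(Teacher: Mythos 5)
Your proposal is correct and follows essentially the same route as the paper: binary search for the zero of $g$ on $[LB,UB]$ from \textbf{Lemma \ref{lm:1_group_UB_LB}}, using monotonicity of $g$ there, with $O(\log\frac{1}{\delta})$ halvings of an interval of initial width at most $1-\epsilon$. Your explicit certification that $LB<UB<\overline{S}_{\mathrm{peak}}$ (tangent versus secant slopes via concavity, and $g'(0)>0$ from $xe^{-x}\le 1/e$) merely fills in a monotonicity claim the paper asserts without detailed proof, as does your final observation that $\phi\le 1$ transfers the $\delta$ error from $\overline{S}(\infty)$ to $S(\infty)$.
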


\section{Proofs}\label{appendix:proofs}

\begin{proof}[Proof of {\bf Lemma \ref{lm:inverse_function}}]
\[
f(N) = \frac{\ln\frac{N}{1-\epsilon}}{N-1},\quad 0\leq N\leq 1-\epsilon
\]
$f(0) = +\infty$ and $f(1-\epsilon)=0$. 
The derivative
\[
\frac{\mathrm{d}f}{\mathrm{d}N} = \frac{1 - \frac{1}{N} - \ln{\frac{N}{1-\epsilon}}}{(N-1)^2}
\]
The denominator $>0$. Denote the numerator by
\[
g(N)=1 - \frac{1}{N} - \ln{\frac{N}{1-\epsilon}},\quad 0\leq N\leq 1-\epsilon
\]
$g(1-\epsilon) =1-\frac{1}{1-\epsilon}<0$, and the derivative
\[
\frac{\mathrm{d}g}{\mathrm{d}N} = \frac{1}{N^2} - \frac{1}{N}
=\frac{1}{N}\Big(\frac{1}{N} - 1\Big)>0
\]
Thus $g(N)\leq g(1-\epsilon)<0$, the numerator $<0$, $f(N)$ decreases monotonically from $0$ to $1-\epsilon$. 
\end{proof}

\begin{proof}
[Proof of {\bf Lemma \ref{lm:n_group_final_upper}}]
    Since $S_i$ are the final sizes and $\frac{\mathrm{d}S_i(t)}{\mathrm{d}t}<0$ strictly for any time $t<\infty$, there must exist a time $t$ with $\frac{\mathrm{d}I_i(t)}{\mathrm{d}t}<0$ and $S_i<S_i(t)$ for all $i$.
    \begin{align}
        \frac{\mathrm{d}I_i(t)}{\mathrm{d}t}=
        &S_i(t) \sum_{j=1}^n \beta_{i,j}I_j(t)
        - \gamma I_i(t)<0
        \nonumber
        \implies S_i(t) \sum_{j=1}^n \kappa_i \kappa_j \beta_0 I_j(t)
        - \gamma I_i(t)<0
        \nonumber\\
        \implies &\frac{\kappa_i\beta_0}{\gamma} S_i(t) \sum_{j=1}^n \kappa_j I_j(t)
        <I_i(t) \label{ineq03}
    \end{align}
    Multiplying both sides of inequality (\ref{ineq03}) by $\kappa_i >0$ and summing
    over $i$ we get:
    \begin{align*}
    \sum_{i =1}^n
    \frac{\kappa_i^2\beta_0}{\gamma} S_i(t) \sum_{j=1}^n \kappa_j I_j(t)
    <\sum_{i =1}^n \kappa_i I_i(t)
    \implies \sum_{i =1}^n \frac{\kappa_i^2\beta_0}{\gamma} S_i(t) <1
    \end{align*}
\end{proof}
\begin{proof}
[Proof of {\bf Lemma \ref{lm:f_i_concave}}]
    The Hessian of $f_i$
    \begin{align*}
        H_{f_i}&=
        \begin{bmatrix}
        -S_i(0)\frac{\beta_{i,1}\beta_{i,1}}{\gamma^2}e^{X_i} 
        & -S_i(0)\frac{\beta_{i,1}\beta_{i,2}}{\gamma^2}e^{X_i}
        & \cdots
        & -S_i(0)\frac{\beta_{i,1}\beta_{i,n}}{\gamma^2}e^{X_i}\\
        -S_i(0)\frac{\beta_{i,2}\beta_{i,1}}{\gamma^2}e^{X_i} 
        & -S_i(0)\frac{\beta_{i,2}\beta_{i,2}}{\gamma^2}e^{X_i}
        & \cdots
        & -S_i(0)\frac{\beta_{i,2}\beta_{i,n}}{\gamma^2}e^{X_i}\\
        \vdots & &\ddots &\vdots\\
        -S_i(0)\frac{\beta_{i,n}\beta_{i,1}}{\gamma^2}e^{X_i} 
        & -S_i(0)\frac{\beta_{i,n}\beta_{i,2}}{\gamma^2}e^{X_i}
        & \cdots
        & -S_i(0)\frac{\beta_{i,n}\beta_{i,n}}{\gamma^2}e^{X_i}\\
        \end{bmatrix}\\
        &=-(1-\epsilon)\phi_i\frac{\kappa_i^2 \beta_0^2}{\gamma^2}e^{X_i} \cdot
        \vec{\kappa}\cdot\vec{\kappa}^T\\
        \text{Thus }&\forall x \in \mathbb{R}^n,\quad
        x^T H_{f_i} x = -(1-\epsilon)\phi_i\frac{\kappa_i^2 \beta_0^2}{\gamma^2}e^{X_i}\cdot 
        (x^T \vec{\kappa})\cdot(\vec{\kappa}^T x)\leq 0
    \end{align*}
    $H_{f_i}$ is negative semi-definite, $f_i$ is concave.
\end{proof}

\begin{proof}[Proof of {\bf Lemma \ref{lm:f_i_Delta_infeasible}}]
    Assume $\Delta$ is a very small deviation,
    \[
        f_i(F^*+\Delta) 
        = f_i(F^*) + J_{f_i}^T \cdot \Delta
        =J_{f_i}^T \cdot \Delta
    \]
    \begin{align*}
        J_{f_i}^T\cdot \Delta
        &=\Delta_i + \sum_{j=1}^n (-\frac{\beta_{i,j}}{\gamma}S_i\Delta_j)\\
        \kappa_i \cdot J_{f_i}^T\cdot \Delta
        &=\kappa_i \Delta_i + 
        \frac{\kappa_i^2\beta_0}{\gamma}S_i
        \sum_{j=1}^n (- \kappa_j \Delta_j)\\
    \end{align*}
    \begin{align}
        \sum_{i \in \Delta_-}
        \kappa_i \cdot J_{f_i}^T\cdot \Delta
        &=\sum_{i \in \Delta_-}\kappa_i \Delta_i + 
        \sum_{i \in \Delta_-}
        \frac{\kappa_i^2\beta_0}{\gamma}S_i
        \sum_{j=1}^n (- \kappa_j \Delta_j)\label{ineq04}
    \end{align}
    Let $\kappa_{min}=\min_{i \in \Delta_+}\kappa_i$.
    \begin{align*}
        \sum_{j\in \Delta_+}\kappa_j\Delta_j &\geq
        \kappa_{min}\sum_{j\in \Delta_+}\Delta_j\\
        \sum_{j=1}^n \kappa_j\Delta_j &=
        \sum_{j\in \Delta_-} \kappa_j\Delta_j +
        \sum_{j\in \Delta_+} \kappa_j\Delta_j\\
        &\geq \sum_{j\in \Delta_-} \kappa_j\Delta_j + 
        \kappa_{min}\sum_{j\in \Delta_+}\Delta_j\\
        &= \sum_{j\in \Delta_-} \kappa_j\Delta_j - 
        \kappa_{min}\sum_{j\in \Delta_-}\Delta_j
    \end{align*}
    \begin{align}
        \therefore \sum_{j=1}^n (-\kappa_j\Delta_j) &\leq
        - \sum_{j\in \Delta_-} \kappa_j\Delta_j + 
        \kappa_{min}\sum_{j\in \Delta_-}\Delta_j\label{ineq05}
    \end{align}
    From $(\ref{ineq04}) \& (\ref{ineq05})$,
    \begin{align*}
        \sum_{i \in \Delta_-}
        \kappa_i \cdot J_{f_i}^T\cdot \Delta 
        &\leq
        \sum_{i \in \Delta_-}\kappa_i \Delta_i + 
        \sum_{i \in \Delta_-}
        \frac{\kappa_i^2\beta_0}{\gamma}S_i
        \Big(
        - \sum_{j\in \Delta_-} \kappa_j\Delta_j + 
        \kappa_{min}\sum_{j\in \Delta_-}\Delta_j
        \Big)\\
        &= \sum_{i \in \Delta_-}
        \Big[\kappa_i \Delta_i
        \Big(
        1 - \sum_{j \in \Delta_-}\frac{\kappa_j^2\beta_0}{\gamma}S_j
        \Big)\Big]
        + \sum_{i\in\Delta_-} \frac{\kappa_i^2\beta_0}{\gamma}S_i \cdot
        \kappa_{min}\sum_{j\in \Delta_-}\Delta_j\\
        &\leq \sum_{i\in\Delta_-} \frac{\kappa_i^2\beta_0}{\gamma}S_i \cdot
        \kappa_{min}\sum_{j\in \Delta_-}\Delta_j
        \text{, by \textbf{Lemma \ref{lm:n_group_final_upper}} and}
        \sum_{i \in \Delta_-}\kappa_i \Delta_i<0\\
        & <0
    \end{align*}
    Since $\sum_{i \in \Delta_-}\kappa_i \cdot J_{f_i}^T\cdot \Delta<0$, there exists $i\in \Delta_-$ such that
        \[\kappa_i \cdot J_{f_i}^T\cdot \Delta<0
        \implies f_i(F^*+\Delta) = J_{f_i}^T\cdot \Delta<0\]
\end{proof}

\begin{proof}[Proof of {\bf Lemma \ref{lm:decomp_delta_bound}}]
    Assume the algorithm is testing whether $\phi_i=1, \phi_j=0,\forall j\neq i$ is a Nash equilibrium. The condition is
    \[
    \frac{\overline{p}_i\overline{S}_i}
    {\overline{p}_j\overline{S}_j}\geq 1
    ,\forall j
    \]
    The numerical calculation for $\overline{S}_i$ may introduce an error $\delta$. The estimate of $\overline{S}_i$ is $\widetilde{S}_i=\overline{S}_i \pm \delta$. For convenience, in the condition, for all $j$, $\overline{S}_j$ is estimated by 
    \[
    \widetilde{S}_j = (1-\epsilon)e^{\frac{\beta{j,i}}{\gamma}(\widetilde{S}_i - 1)},\]
    including when $j=i$. Since all parameters are provided in the form of $n_1/n_2$ with $n_1,n_2\leq n_0$, the smallest step size is $\frac{1}{n_0^2}$. To avoid  computational error after rounding, we require the error introduced by $\delta$ to be bounded by $1/(4 n_0^2)$.
    We need to make sure the following 2 cases.
    \begin{enumerate}[label=(\roman*)]
    \item 
    \[\frac{\overline{p}_i\overline{S}_i}
    {\overline{p}_j\overline{S}_j}\geq 1\implies
    \frac{\overline{p}_i\widetilde{S}_i}
    {\overline{p}_j\widetilde{S}_j}\geq 1 - \frac{1}{4 n_0^2}\]
    From $\frac{\overline{p}_i\overline{S}_i}
    {\overline{p}_j\overline{S}_j}\geq 1$ we get
    $\frac{\overline{p}_i}{\overline{p}_j}
    \geq
    \frac{\overline{S}_j}{\overline{S}_i}$. Thus we have 
    \begin{align*}
        \frac{\overline{p}_i\widetilde{S}_i}
        {\overline{p}_j\widetilde{S}_j}\geq
        \frac{\overline{S}_j\widetilde{S}_i}{\overline{S}_i\widetilde{S}_j}
    \end{align*}
    It suffices to show
    \begin{align*}
        \frac{\overline{S}_j\widetilde{S}_i}{\overline{S}_i\widetilde{S}_j}\geq
        1 - \frac{1}{4 n_0^2} &\impliedby\\
        \frac{exp(\kappa_j\kappa_i R_0 (\overline{S}_i -1))
        exp(\kappa_i\kappa_i R_0 (\overline{S}_i \pm \delta-1))}
        {exp(\kappa_i\kappa_i R_0 (\overline{S}_i -1))
        exp(\kappa_j\kappa_i R_0 (\overline{S}_i \pm \delta-1))}
        \geq 1 - \frac{1}{4 n_0^2} &\impliedby\\
        \kappa_j\kappa_i R_0 (\overline{S}_i -1) +
        \kappa_i\kappa_i R_0 (\overline{S}_i \pm \delta-1)&\\
        - \kappa_i\kappa_i R_0 (\overline{S}_i -1) - 
        \kappa_j\kappa_i R_0 (\overline{S}_i \pm \delta-1) 
        \geq \ln(1 - \frac{1}{4 n_0^2}) &\impliedby\\
        \pm(\kappa_i - \kappa_j) \kappa_i R_0 \delta 
        \geq \ln(1 - \frac{1}{4 n_0^2}) &\impliedby 
    \end{align*}
    $\ln(1 - \frac{1}{4 n_0^2})\approx - \frac{1}{4 n_0^2}<0$. We need to only look at when the left-hand side is negative, where we need
    \begin{align*}
        \delta \leq 
        \frac{\frac{1}{4n_0^2}}{|\kappa_i-\kappa_j|\kappa_i R_0}
    \end{align*}
    Since $0<\kappa_j\leq 1,\forall j$ and $R_0=\beta_0/\gamma$, both $\beta_0,\gamma$ are specified by the form $n_1/n_2$, we get
    \[
    \frac{\frac{1}{4 n_0^2}}{|\kappa_i-\kappa_j|\kappa_i R_0}
    \geq \frac{1}{4 n_0^4}
    \]
    As long as we choose $\delta$ to be $\frac{1}{4 n_0^4}$, the correctness of this cases is guaranteed.
    \item 
    \[\frac{\overline{p}_i\overline{S}_i}
    {\overline{p}_j\overline{S}_j}\leq 1\implies
    \frac{\overline{p}_i\widetilde{S}_i}
    {\overline{p}_j\widetilde{S}_j}\leq 1 + \frac{1}{4 n_0^4}\]
    This is a symmetric case and we get the exactly same bound for $\delta$.
    \end{enumerate}
    This choice of $\delta$ guarantees that after numerical rounding, the algorithm is correct.
\end{proof}

\begin{proof}[Proof of {\bf Lemma \ref{lm:S1_bar_lower_bound}}]
Without loss of generality, we assume $\overline{\phi}_i>0, \forall i=2,3,\cdots,n$, for otherwise we can remove the group from the system in this analysis.
For every point $\overline{\phi}$ with strictly positive components, we define a straight line segment $\phi$ in the domain, such that both $\overline{\phi}$ and $\phi_{END}$ are on it. 
\[
\phi(\theta) = \theta\cdot \phi_{END} + (1-\theta)\cdot \phi_{START}, 0<\theta\leq 1
\]
where $\phi_{START}=[0, r_2, r_3,\cdots ,r_n]^T$, with
\[r_i = \frac{\overline{\phi}_i}{\sum_{j=2}^n \overline{\phi}_j}, \forall i=2,3,\cdots,n
\]
Thus $\phi_{END} = \phi(1)$ and $\overline{\phi} = \phi(\theta)$ for some $\theta < 1$. On this line segment $\phi$,we show that
\[
\frac{\mathrm{d}\overline{S}_1}{\mathrm{d}\theta}<0
\]
By the definition of $\phi(\theta)$,
$\frac{\mathrm{d}\phi_1}{\mathrm{d}\theta} = 1$, denote
\[
D_i=\frac{\mathrm{d}\overline{S}_i}{\mathrm{d}\phi_i}
\]
we need to show $D_1<0$.
Let $r_1=-1$, we get
\[
\frac{\mathrm{d}\overline{\phi}_i}{\mathrm{d}\overline{\phi}_j} = 
\frac{r_i}{r_j}, \forall i,j
\]
Denote $X_i = \sum_j \frac{\beta_{i,j}\phi_j}{\gamma}(\overline{S}_j-1)$, we have $\overline{S}_i = (1-\epsilon)e^{X_i}$. For all $i$,
\begin{align*}
    D_i &= \frac{\mathrm{d}\overline{S}_i}{\mathrm{d}\phi_i}
    =(1-\epsilon)e^{X_i}\frac{\mathrm{d}X_i}{\mathrm{d}\phi_i}\\
    &=\overline{S}_i\sum_j\Big(
    \frac{\beta_{i,j}}{\gamma}(\overline{S}_j-1)
    \frac{\mathrm{d}\phi_j}{\mathrm{d}\phi_i} +
    \frac{\beta_{i,j}\phi_j}{\gamma} 
    \frac{\mathrm{d}\overline{S}_j}{\mathrm{d}\phi_i}
    \Big)\\
    &=\overline{S}_i\sum_j\Big(
    \frac{\beta_{i,j}}{\gamma}(\overline{S}_j-1)
    \frac{\mathrm{d}\phi_j}{\mathrm{d}\phi_i} +
    \frac{\beta_{i,j}\phi_j}{\gamma} 
    \frac{\mathrm{d}\overline{S}_j}{\mathrm{d}\phi_j}
    \frac{\mathrm{d}\phi_j}{\mathrm{d}\phi_i}
    \Big)\\
    D_i &= \overline{S}_i\sum_j\Big(\frac{\kappa_i\kappa_j\beta_0}{\gamma}(\overline{S}_j-1)\frac{r_j}{r_i} + 
    \frac{\kappa_i\kappa_j\beta_0}{\gamma}\phi_j
    \frac{r_j}{r_i}D_j
    \Big)\\
    \frac{\gamma}{\kappa_i\beta_0\overline{S}_i}r_i D_i 
    &= \sum_j \kappa_j \phi_j r_j D_j + 
    \sum_j \kappa_j r_j(\overline{S}_j-1)
\end{align*}
Let $\hat{D}_i = r_i D_i$,
\[
\frac{\gamma}{\kappa_i\beta_0\overline{S}_i}\hat{D}_i -
\sum_j \kappa_j \phi_j \hat{D}_j = \sum_j \kappa_j r_j(\overline{S}_j-1),\forall i
\]
This is equivalent to the system of linear equations $\hat{A}\cdot \hat{D} = b$, where
\begin{align*}
    b&=\Big(
    \sum_j \kappa_j r_j(\overline{S}_j-1)\Big)\cdot \textbf{1},\\
    \hat{D} &= [\hat{D}_1,\hat{D}_2,\cdots,\hat{D}_n]^T,\\
    \hat{A} &= A+u\cdot v^T,\text{ where}\\
    A&=\frac{\gamma}{\beta_0}\cdot 
    diag(\frac{1}{\kappa_1 \overline{S}_1},
    \frac{1}{\kappa_2 \overline{S}_2},\cdots,
    \frac{1}{\kappa_n \overline{S}_n}),\\
    u&=\mathbf{-1},\\
    v^T&=[\kappa_1\phi_2,\kappa_1\phi_2,\cdots,\kappa_n\phi_n]
\end{align*}
By \textbf{Sherman-Morrison formula},
\[
    \hat{A}^{-1} = A^{-1} - \frac{A^{-1}uv^T A^{-1}}
    {1 + v^T A^{-1} u}
\]
\begin{align*}
    \hat{D}_1 &= (\hat{A}^{-1}b)[1]\\
    &= \frac{\beta_0}{\gamma}\kappa_1\overline{S}_1
    \sum_j \kappa_j r_j(\overline{S}_j-1) + 
    \frac{
    (\frac{\beta_0}{\gamma})^2
    \Big(
    \sum_j \kappa_1 \overline{S}_1 \kappa_j^2 \overline{S}_j \phi_j
    \Big)
    \Big(
    \sum_j \kappa_j r_j (\overline{S}_j-1)
    \Big)}
    {1 - \frac{\beta_0}{\gamma}
    \sum_j \kappa_j^2 \overline{S}_j\phi_j}
\end{align*}
The full inversion of $\hat{A}$ can be found in \textbf{Appendix \ref{inversion}}. Since $r_1=-1$, to show $D_1<0$ is to show $\hat{D}_1>0$.

    \begin{enumerate}[label=(\roman*)]
        \item $\sum_j \kappa_j r_j(\overline{S}_j-1)>0$\\
        Denote
        \[
        X_0 = \sum_j\frac{\kappa_j\beta_0\phi_j}{\gamma}(\overline{S}_j-1),\]
        we get for all $j$,
        \begin{align*}
            \overline{S}_j &= (1-\epsilon)exp(\kappa_j X_0)\\
            \frac{\overline{S}_1}{\overline{S}_j} &= (e^{X_0})^{(\kappa_1-\kappa_j)}, \forall j\neq 1
        \end{align*}
        Since $X_0<0$, $0<\kappa_1-\kappa_j<1$,
        \begin{align*}
            &\frac{\overline{S}_1}{\overline{S}_j}<1\implies \overline{S}_1 < \overline{S}_j
            \implies  \overline{S}_1 - 1 < \overline{S}_j - 1 <0\implies\\
            &\kappa_j(\overline{S}_j-1) > \kappa_1(\overline{S}_1-1)\implies\\
            &\sum_{j\neq 1} r_j\kappa_j(\overline{S}_j-1) >
            (\sum_{j\neq 1}r_j) \kappa_1(\overline{S}_1-1) = 
            \kappa_1(\overline{S}_1-1)\implies \\
            & \sum_{j\neq 1} r_j\kappa_j(\overline{S}_j-1) -
            \kappa_1(\overline{S}_1-1) >0
        \end{align*}
        Since $r_1=-1$, 
        \begin{align*}
            \sum_j r_j\kappa_j(\overline{S}_j-1) =
            \sum_{j\neq 1} r_j\kappa_j(\overline{S}_j-1) -
            \kappa_1(\overline{S}_1-1)>0
        \end{align*}
        \item $\frac{\hat{D}_1}{\sum_j r_j\kappa_j(\overline{S}_j-1)}>0$
        \begin{align*}
            \frac{\hat{D}_1}{\sum_j r_j\kappa_j(\overline{S}_j-1)}=
            \frac{\beta_0}{\gamma}\kappa_1\overline{S}_1
            \Big[
            1 +
            \frac{\frac{\beta_0}{\gamma}
            \sum_j \kappa_j^2 \overline{S}_j \phi_j}
            {1 - 
            \frac{\beta_0}{\gamma} \sum_j \kappa_j^2 \overline{S}_j \phi_j}
            \Big]
        \end{align*}
        \begin{align*}
            &\frac{\beta_0}{\gamma} \sum_j \kappa_j^2 \overline{S}_j \phi_j =
            \sum_j \frac{\kappa_j^2\beta_0}{\gamma}S_j <1
            \textbf{ by Lemma \ref{lm:n_group_final_upper}}\\
            &\implies \frac{\hat{D}_1}{\sum_j r_j\kappa_j(\overline{S}_j-1)}>0
        \end{align*}
    \end{enumerate}

Therefore $D_1 = \frac{\mathrm{d}S_1/\phi_1}{\mathrm{d}\phi_1}<0$,
 $\overline{S}_1(\phi_{END}) \leq \overline{S}_1(\overline{\phi}),
\quad \forall \overline{\phi}$.
\end{proof}

\section{Algorithms}
\label{appendix:algo}
\begin{algorithm}
\caption{Equilibrium Computation for Case \ref{NetworkCase1}}
\label{alg:net1}
\begin{algorithmic}[1]
\FOR {every $(v,i)(v,j)$ pair}
    \STATE $X_0 \gets X_{i,j}^v$
    \FOR {every group $l$ in every node $u$} 
        \STATE $U_l^u \gets p_l^u (1-\epsilon) e^{\overline{\kappa}_l^u X_0}$
    \ENDFOR

    \FOR{every group $(v,l)$ in node $v$}
        \IF{$U_l^v > U_i^v \textbf{ or } U_l^v > U_j^v$}
            \STATE Skip to the next  $(v,i)(v,j)$ pair in line 1
        \ENDIF
    \ENDFOR
    \FOR{every node $u\neq v$}
        \STATE $i^* \gets \argmax_i U_i^u$
        \STATE $\phi_{i^*}^u \gets 1;\quad\phi_{j}^u \gets 0,\ \forall j\neq i^*$
    \ENDFOR
    \STATE $\phi_l^v \gets 0, \forall l\neq i,j$
    \STATE Solve linear system {\bf (\ref{eq:phi_ij})} to calculate $\phi_i^v,\phi_j^v$
    \IF{$\phi_i^v,\phi_j^v \geq 0$}
        \STATE Equilibrium found, return vector $\phi$
    \ENDIF

\ENDFOR
\end{algorithmic}
\end{algorithm}

\begin{algorithm}
\caption{Equilibrium Computation for Case \ref{NetworkCase2}}
\label{alg:net2}
\begin{algorithmic}[1]
\STATE Calculate and sort $X_{i,j}^v$ for every $(v,i)(v,j)$ pair in every node $v$, to get every range of $X_0$
\FOR {every range $R$ of $X_0$}
    \FOR{every node $v$}
        \STATE Construct the relationship graph $G_v$ for every node $v$
        \STATE Perform topological sort on $G_v$ to determine the source group node representing group $(v,i^*)$
        \STATE $\phi_{i^*}^v \gets 1;\quad \phi_j^v \gets 0 \ \forall j\neq i^*$
    \ENDFOR
    \STATE With all $\phi$, compute the final sizes using the convex program in {\bf Section \ref{sec:net_convex}}
    \STATE With all $S$ and $\phi$, compute $X_0$
    \IF{$X_0$ in the current range $R$}
        \STATE Equilibrium found, return vector $\phi$
    \ENDIF
\ENDFOR
\end{algorithmic}
\end{algorithm}
\pagebreak
\section{Inversion of $\hat{A}$}\label{inversion}
\begin{align*}
    \hat{A} &= A+u\cdot v^T,\text{ where}\\
    A&=\frac{\gamma}{\beta_0}\cdot 
    diag(\frac{1}{\kappa_1 \overline{S}_1},
    \frac{1}{\kappa_2 \overline{S}_2},\cdots,
    \frac{1}{\kappa_n \overline{S}_n}),\\
    u&=\mathbf{-1},\\
    v^T&=[\kappa_1\phi_2,\kappa_1\phi_2,\cdots,\kappa_n\phi_n]
\end{align*}
By \textbf{Sherman-Morrison formula},
\[
    \hat{A}^{-1} = A^{-1} - \frac{A^{-1}uv^T A^{-1}}
    {1 + v^T A^{-1} u}
\]
\begin{enumerate}
    \item $A^{-1}$
    \[
    A^{-1} = \frac{\beta_0}{\gamma}\cdot 
    \begin{bmatrix}
        &\kappa_1\overline{S}_1 &0 &\cdots &0\\
        &0 &\kappa_2\overline{S}_2 &\cdots &0\\
        & & &\ddots &\\
        &0 &\cdots &0 &\kappa_n\overline{S}_n
    \end{bmatrix}
    \]

    \item $v^TA^{-1}$
    \begin{align*}
        v^TA^{-1} &= [\kappa_1\phi_1,\kappa_2\phi_2,\cdots,\kappa_n\phi_n]
        \cdot \frac{\beta_0}{\gamma}\cdot
        \begin{bmatrix}
            &\kappa_1\overline{S}_1 &0 &\cdots &0\\
            &0 &\kappa_2\overline{S}_2 &\cdots &0\\
            & & &\ddots &\\
            &0 &\cdots &0 &\kappa_n\overline{S}_n
        \end{bmatrix}\\
        &=\frac{\beta_0}{\gamma}
        [\kappa_1^2\overline{S}_1\phi_1,
        \kappa_2^2\overline{S}_2\phi_2,\cdots,
        \kappa_n^2\overline{S}_n\phi_n]
    \end{align*}

    \item $v^T A^{-1} u$
    \begin{align*}
        v^T A^{-1} u =\frac{\beta_0}{\gamma}
        [\kappa_1^2\overline{S}_1\phi_1,
        \kappa_2^2\overline{S}_2\phi_2,\cdots,
        \kappa_n^2\overline{S}_n\phi_n]\cdot
        \begin{bmatrix}
            -1\\
            -1\\
            \vdots\\
            -1
        \end{bmatrix}
        = -\frac{\beta_0}{\gamma}
        \sum_{j=1}^n \kappa_j^2 \overline{S}_j \phi_j
    \end{align*}

    \item $A^{-1} u$
    \begin{align*}
        A^{-1} u = \frac{\beta_0}{\gamma}\cdot 
        \begin{bmatrix}
            &\kappa_1\overline{S}_1 &0 &\cdots &0\\
            &0 &\kappa_2\overline{S}_2 &\cdots &0\\
            & & &\ddots &\\
            &0 &\cdots &0 &\kappa_n\overline{S}_n
        \end{bmatrix}\cdot
        \begin{bmatrix}
            -1\\
            -1\\
            \vdots\\
            -1
        \end{bmatrix}
        = -\frac{\beta_0}{\gamma}\cdot
        \begin{bmatrix}
            \kappa_1 \overline{S}_1\\
            \kappa_2 \overline{S}_2\\
            \vdots\\
            \kappa_n \overline{S}_n
        \end{bmatrix}
    \end{align*}

    \item $(A^{-1}u)(v^T A^{-1})$
    \begin{align*}
        (A^{-1}u)(v^T A^{-1}) &=
        -\frac{\beta_0}{\gamma}\cdot
        \begin{bmatrix}
            \kappa_1 \overline{S}_1\\
            \kappa_2 \overline{S}_2\\
            \vdots\\
            \kappa_n \overline{S}_n
        \end{bmatrix}
        \cdot
        \frac{\beta_0}{\gamma}
        [\kappa_1^2\overline{S}_1\phi_1,
        \kappa_2^2\overline{S}_2\phi_2,\cdots,
        \kappa_n^2\overline{S}_n\phi_n]\\
        &= - \Big(\frac{\beta_0}{\gamma}\Big)^2\cdot M_{n\times n}
    \end{align*}
    where
    \[
    M_{i,j} = \kappa_i\overline{S}_i \kappa_j^2 \overline{S}_j \phi_j,
    \quad \forall i,j \in 1,\cdots,n
    \]

    \item $\hat{A}^{-1}$
    \begin{align*}
        \hat{A}^{-1} =A^{-1} - \frac{A^{-1}uv^T A^{-1}}{1 + v^T A^{-1} u}
        =\frac{A^{-1} +(\frac{\beta_0}{\gamma})^2\cdot M}
        {1 - 
        \frac{\beta_0}{\gamma}\sum_{j=1}^n \kappa_j^2 \overline{S}_j \phi_j}
    \end{align*}
\end{enumerate}





\end{document}